\newtheorem{theorem}{Theorem}
\newtheorem{assumption}{Assumption}
\newtheorem{corollary}{Corollary}
\newtheorem{definition}{Definition}
\newtheorem{lemma}{Lemma}
\newtheorem{proposition}{Proposition}
\newtheorem{finding}[theorem]{Finding}
\begin{document}

\title{Cognitive Hierarchies in Multi-Stage Games of Incomplete Information}
\author{Po-Hsuan Lin\thanks{I am indebted to my advisor, Tom Palfrey, for 
his constant support support and encouragement. I also thank Colin Camerer, Joseph 
Tao-yi Wang, Charles Sprenger, Jean-Laurent Rosenthal
and audiences at the University of Houston, California Institute of Technology, 
Stanford Univeristy and the 2022 ESA North American Conference
for their valuable comments. All errors are my own.
Division of the Humanities and Social Sciences, California Institute of
Technology, Pasadena, CA 91105 USA. plin@caltech.edu}
}
\date{First Version: August 18, 2022 \\
\;This Version: \today \\
\;\; \\
\href{https://drive.google.com/file/d/1fAvQZGpQbn5-hN05BJjrhqGyTdExp_u3/view?usp=sharing}{Click here for the latest version.}
}


\maketitle


\begin{abstract}

We explore the dynamic cognitive hierarchy (CH) theory proposed by 
\cite{lin2022cognitive} in the setting of multi-stage games of 
incomplete information. 
In such an environment, players will learn other players'
\emph{payoff-relevant types} and \emph{levels of sophistication}
at the same time as the history unfolds.
As we apply the dynamic CH solution to a class of two-person dirty faces games,
we find that lower-level players will figure out their face types in 
later periods than higher-level players, which is in sharp contrast with the 
equilibrium.
Finally, we re-analyze the dirty faces game experimental data
from \cite{bayer2007dirty} and demonstrate the dynamic CH solution can better
explain the data than the static CH solution.

\end{abstract}

JEL Classification Numbers: C72, D83

Keywords: Cognitive Hierarchy, Extensive Form Games, Learning, Dirty Faces Game

\renewcommand{\baselinestretch}{1.2}

\thispagestyle{empty}

\newpage \setcounter{page}{1} \renewcommand{\baselinestretch}{1.2}

\begin{center}
\textit{``The natural way of looking at game situations...is 
not based on circular concepts, but rather on a step-by-step
reasoning procedure.''}
\end{center}

\rightline{---Reinhard Selten (1998)}

\section{Introduction}\label{sec:intro}

In many situations, people interact with others over time, in a multi-stage environment
with incomplete information, such as social learning, sequential bargaining, signaling,
reputation building, or cheap talk, etc. 
The standard method to analyzing these environments is to model them 
as extensive form games and solve for the sequential equilibrium 
(or the perfect Bayesian equilibrium). The equilibration is demanding which
requires players to form consistent beliefs at every information set, and
best respond to such belief everywhere.

When there is incomplete information, no matter how simple the rules are, 
the number of possible information sets could be tremendously large.
For instance, \cite{johanson2013measuring} finds that in a two-person Texas Hold'em
game, the number of information sets is around $10^{162}$, which is $10^{82}$ times
larger than the number of atoms in the observable universe (around $10^{80}$,
estimated by \cite{villanueva2009many}).
Therefore, it seems impractical to expect people to behave as predicted 
by sequential equilibrium in such games. Even in much simpler games that 
can be played in laboratory experiments, it is well-documented that people 
systematically violate the predictions of sequential equilibrium 
(see, for example, \cite{camerer2011behavioral}).

To accommodate these anomalous findings,  \cite{lin2022cognitive} 
develop the concept of dynamic cognitive hierarchy (CH) that
extends the CH framework proposed by \cite{camerer2004cognitive}---which
is considerably successful in explaining the behavior in normal form games---to 
general extensive form games. In this framework,
players are heterogeneous with respect to levels of sophistication.
The iterative process starts from level 0 players who will uniformly randomize at 
every information set. For any $k>0$,
a level $k$ player believes all other players have lower levels
distributed from 0 to $k-1$.  The idea of dynamic CH is that 
players will update their beliefs about other players' levels 
as history unfolds.

In this paper, we apply the dynamic CH framework to analyze people's 
behavior in extensive form games of incomplete information. In such an 
environment, players will learn about other players'
levels and the basic elements of the game structure, 
such as other players' private information at the same time. 
To study this learning process in a tractable way, 
we consider the framework of ``multi-stage games with observed actions'' proposed
by \cite{FUDENBERG1983251} and \cite{fudenberg1991perfect}.
In this framework, players will have two pieces of private information: 
a \emph{payoff-relevant type} and a \emph{level of sophistication},
and they will update their joint beliefs of types and levels
at every information set.

In a multi-stage game with observed actions, the dynamic CH solution is 
a level-dependent assessment consisting of the level-dependent behavioral
strategy profile and posterior beliefs.
In games of perfect information, the posterior belief is only about other players'
levels of sophistication. However, when the information 
sets are non-singleton, as pointed out by \cite{lin2022cognitive}, 
the posterior beliefs are typically correlated 
across histories at some information sets in such an environment.
In other words, the posterior beliefs of types and levels are generally correlated
even if they are independently drawn at the beginning.

Despite the fact that the posterior beliefs of types and levels are correlated, 
we can still characterize how the beliefs will evolve along the history.
In Proposition \ref{prop:support_evo}, we first show that
players will never eliminate the possibility of any type profile at
every information set, while they will gradually rule out the 
possibility of levels at later stages. 
This result holds for general prior distributions of types. Moreover,
when assuming the types are independently drawn across players, 
Proposition \ref{prop:independence} establishes that every level of player's 
posterior belief is independent across players at every information set. 
Finally, if the prior distribution of types is not independent across
players, Proposition \ref{prop:corr_type} points out the original game 
can be transformed to another game with independent types and the dynamic CH solutions
are invariant in both games.

At a more conceptual level, the dynamic CH solution is in the same spirit of 
the standard equilibrium model---the only difference is that dynamic CH solution
replaces the requirement of mutual consistency of the belief system
with level-dependent beliefs.
There are several distinct advantages of dynamic CH approach. 
First, since every level $k>0$ of players always believes all others are possibly level 0
players who will uniformly randomize everywhere, 
they believe every information set is possible. Namely, 
there are no off-path information sets in the dynamic CH solution, 
and thus, the level-dependent belief is well-defined everywhere.

Second, the dynamic CH solution is attainable without the assumption of 
common knowledge of rationality, which is critical to equilibration (see, 
for instance, \cite{bernheim1984rationalizable},
\cite{pearce1984rationalizable} and \cite{aumann1995backward}). 
Instead, the dynamic CH framework imposes
a partial consistency requirement connecting the level-dependent prior 
beliefs to the true distribution of levels. Specifically, 
level $k$ players' prior beliefs of levels are specified as the normalized
true distribution of levels, from level 0 to $k-1$.
That is, players have ``truncated rational
expectation.'' One important feature of this property is that more 
sophisticated players have beliefs that are closer to the true distribution 
of levels. For level infinity players, their prior beliefs of levels 
coincide with the true underlying distribution. 
In short, compared to the equilibrium model, the dynamic CH 
framework replaces the assumption of common knowledge of rationality 
with the assumption of truncated rational expectation.
Under this assumption, we find some general properties of the 
dynamic CH model closely mirror the standard equilibrium model. 
Proposition \ref{prop:independence} and \ref{prop:corr_type} are indeed two of 
the general properties of perfect Bayesian equilibrium characterized in 
\cite{fudenberg1991perfect}.

To further investigate the relation between the truncated rational expectation 
and the common knowledge of rationality, 
in the second half of the paper, we apply the dynamic CH 
solution to a class of dirty faces games 
first introduced by (\cite{littlewood1953littlewood}, pp.3): ``Three
ladies, A,B,C, in a railway carriage all have dirty faces and are all 
laughing. It suddenly flashes on A: why doesn't B realize C is laughing
at her? Heavens! I must be laughable.'' 
In fact, A's epiphany consists of a subtle chain of reasoning 
starting from a missing piece of information: a \emph{public} laughter will laugh whenever
there is at least one lady having a dirty face.\footnote{This public 
announcement makes the event of that there is at least one dirty face a common
knowledge to all players. Yet, the public announcement will not reveal the identity of 
the lady whose face is dirty.}
Given this common knowledge, 
if B sees that A has a clean face, then B should realize that C is laughing 
at herself; otherwise, C would have blushed.
Next, if A doesn't see B blushing, she will realize that her face is dirty.

The reason why the dirty faces game is an important application is threefold. 
First, it is theoretically interesting because the equilibrium does not depend on 
the structure of the game, such as the payoffs, or the probability of having a dirty face, etc.
Therefore, the predictions of different behavioral models are in sharp contrast with 
the equilibrium. Second, the dirty faces game is also important from the behavioral perspective 
that players will behave as sophisticatedly as they can. Notice that the equilibrium outcome
is Pareto efficient because all players are better off if they can figure out their 
face types earlier. In this case, when deviating from the equilibrium, players will 
make others unable to make inferences from their actions, which makes them unable to 
make inferences from others' actions. Therefore, players have no incentives to mimic
lower-level players' behavior---which yields a reliable estimate of the average level of 
sophistication among the population. The third reason is a practical concern. In this 
pandemic era, it is extremely challenging to conduct new experiments. Thus, it is ideal 
to study dirty faces games since there is an existing experimental data set that 
hasn't been fully analyzed. In summary, the dirty faces game is an ideal application for 
dynamic CH analysis because of the theoretical, behavioral and practical concerns.

As we examine the equilibrium argument of the dirty faces games carefully, we can 
find that the deductive process relies on 
the requirements of sequential rationality and common knowledge of rationality.
For A to realize that her face is dirty, A must believe B and C are rational enough 
to draw inferences from what they have observed.
However, previous dirty faces game experiments by \cite{weber2001behavior} and 
\cite{bayer2007dirty} have demonstrated that people generally fail 
to perform such an iterative reasoning argument.
In fact, both experiments find that around half of the subjects cannot even
make two steps of reasoning, which again suggests the empirical fragility 
of common knowledge of rationality.

To this end, since the dynamic CH solution does not require 
common knowledge of rationality, it is possible for dynamic CH to 
generate empirically plausible predictions.
For a class of two-person dirty-faces games, Proposition \ref{prop_extensive_dirty}
fully characterizes the dynamic CH solution, finding that 
when both players' faces are dirty, different levels of players are 
heterogeneous with respect to how soon they can realize their faces are dirty.
Higher-level players tend to figure out their face types sooner than lower-level players.
This characterization contrasts with the equilibrium 
which predicts a degenerated distribution of terminal periods.\footnote{In 
any two-person dirty faces game, the equilibrium predicts when observing a clean face
and hearing a public announcement, players will claim to have dirty faces in period 1. 
When observing a dirty face, players will think if the other player doesn't claim 
to have a dirty face in period 1, their own faces must be dirty. 
Hence, they will claim to have a dirty face in period 2. See Section \ref{sec:dirty_face}
for details of the game specification.}

Since dirty faces games are extensive form games, if we incorrectly apply the 
static CH model\footnote{Because the static CH solution is defined on one-shot games, 
to apply the static CH model to dirty faces games, we need to transform the game to its 
corresponding reduced normal form.} to this class of games, we in fact ignore the information 
contained in the observed history. Moreover, in the static CH solution, players have 
to make decisions based on hypothetical events rather than statistical inferences.
As experimentally studied by \cite{esponda2014hypothetical}, they are two behaviorally 
different learning processes. In Proposition \ref{prop_strategic_dirty},
we solve for the static CH solution, and 
compare the two CH solutions in Proposition \ref{prop:dirty_representation}.
We find that players do not necessarily behave closer to the 
equilibrium in either of which CH solution.  When players are impatient, 
dynamic CH predicts players tend to behave 
closer to the equilibrium prediction, and vice versa.

Finally, to see how the dynamic CH solution can bridge the gap between the theory and 
experiments, we revisit the experimental data of \cite{bayer2007dirty} with 
the dynamic CH solution. We fit the dynamic CH solution to the data of the two-person and 
three-person dirty faces games experiments, finding that dynamic CH can explain the data
significantly better than the static CH solution.
In addition, we compare the fitness of the dynamic CH with the agent 
quantal response equilibrium (AQRE) developed by \cite{mckelvey1998quantal}, 
and we find that the fitness of two models is not significantly different.
Conceptually speaking, AQRE is a solution concept that relaxes the requirements of 
sequential equilibrium, which drops the requirement of 
sequential rationality while maintaining the consistency of the belief system.
Similar to dynamic CH, AQRE also attempts to generalize the 
sequential equilibrium, but from a different angle.
Therefore, the insignificance suggests that these two approaches
are empirically comparable ways to relaxing the standard equilibrium model.

The paper is organized as follows. We discuss the related literature in 
section \ref{sec:lit_review}. Section \ref{sec:model} sets up the 
model. Section \ref{sec:theo_prop} establishes general properties 
of the belief updating process. 
In section \ref{sec:dirty_face}, we analyze the two-person 
dirty faces games and discuss the theoretical implications of the dynamic CH solution.
We revisit the experimental data of \cite{bayer2007dirty} 
in section \ref{sec:dirty_face_exp}. Finally, section
\ref{sec:conclusion} concludes the paper.

\section{Related Literature}
\label{sec:lit_review}

As discussed in Section \ref{sec:intro}, this paper is closely related 
to the extensive literature of limited depth of reasoning in strategic environments. 
Over the past thirty years, this idea has been studied by a variety of theoretical 
researches (see, for instance, 
\cite{binmore1987modeling, binmore1988modeling}, 
\cite{selten1991anticipatory, selten1998features},
\cite{aumann1992irrationality}, \cite{stahl1993evolution}, and 
\cite{alaoui2016endogenous,alaoui2018cost}).
Beyond theoretical work, \cite{nagel1995unraveling} conducts the first experiment to 
study to people's iterative reasoning process, using the ``beauty contest'' game.
In this game, each player simultaneously chooses an integer between 0 and 100. The 
winner is the player whose choice is closet to the average of all numbers
multiplied by a parameter $p\in(0,1)$. The unique equilibrium is that all players should
choose 0, while empirically, there is almost no player choosing 
the equilibrium action. Instead, players seem to behave as if performing 
iterative best response.\footnote{This empirical pattern can be robustly replicated in 
different environments. For instance, \cite{ho1998iterated} and \cite{bosch2002one} 
find similar results in both the laboratory and field experiments.} 

To explain the data, \cite{nagel1995unraveling} proposes the ``level-$k$ model,'' 
which assumes each player has a level of reasoning. Level 0 players will
uniformly randomize in their action sets. For every $k\geq 1$, level $k$ players 
have the (degenerated) beliefs that they are one level of reasoning
deeper than the rest and best respond to such beliefs. 
Level-$k$ model has been applied to a range of different environments, such as
matrix games \citep{costa2001cognition, crawford2007fatal}, 
two-person guessing games \citep{costa2006cognition},
auctions \citep{crawford2007level}, and sender-receiver games 
\citep{cai2006overcommunication,wang2010pinocchio}.
Although the level-$k$ model has been considerably successful in explaining the data,
the specification is disentangled with the equilibrium model.

To this end, \cite{stahl1995players} are the first to propose a specific mixture model 
where each level of player best responds to a mixture between lower levels and 
equilibrium players. Later, \cite{camerer2004cognitive} develop the CH framework where 
level $k$ players best respond to a mixture of lower levels, from level 0 to $k-1$. 
Moreover, players have correct beliefs about the relative proportions of the lower levels.
The specification of truncated rational expectation connects the perspective of a
behavioral model to the equilibrium theory. Yet, these theories are only built 
for normal form games. Finally,  \cite{lin2022cognitive} extend the CH approach 
from normal form games to general extensive form games. In the dynamic CH model,
each level of player also has a level of sophistication and 
has a correct prior belief about the relative proportions of the lower levels.
However, the difference is that in the dynamic CH model,
players will update their beliefs about the levels and 
best respond to the posterior beliefs at every subgame.

This paper is also related to other behavioral models in game theory. 
First, in most laboratory experiments in economics and game theory, subjects play the same
game with multiple repetitions, in order to gain experience and facilitate convergence to
equilibrium behavior. \cite{ho2013dynamic} and \cite{ho2021bayesian} propose a modification of CH
that allows for learning across repeated plays of the same sequential game, in a different way
than in \cite{stahl1996boundedly}, but in the same spirit. In their setting, players repeatedly 
play the same sequential game and update their beliefs about the distribution of levels
after observing past outcomes of earlier games, while holding the fixed beliefs during 
each play of the game. Moreover, players endogenously choose new levels of sophistication 
for the next iteration of the game. This is different from the framework of dynamic CH 
where players update their beliefs about other players' levels after each move within a single 
game. Furthermore, players are strategic learners in dynamic CH as they can 
correctly anticipate the evolution of posterior beliefs in later subgames---which 
leads to a much different learning
dynamic compared with naive adaptive learning models.

Furthermore, dynamic CH is also conceptually related to the agent 
quantal response equilibrium (AQRE) \citep{mckelvey1998quantal}, 
and the cursed equilibrium \citep{eyster2005cursed}.
Dynamic CH is connected to AQRE since both solutions attempt to relax the 
requirements of equilibrium. AQRE maintains the requirement of belief consistency 
while allowing players to make ``better'' responses instead of best responses. 
Yet, dynamic CH replaces the mutually consistent belief system with the level-dependent
beliefs but assuming all strategic players to make best responses. 
On the other hand, although the cursed equilibrium is not defined on extensive form games, 
it is related to dynamic CH in the sense that every level of player is somehow
``cursed''---level 1 players are fully cursed since they believe all other players 
are level and their actions are unrelated to their private information. 
Higher-level players are partially cursed as they believe others may be strategic 
and their actions will reveal their private information. The differences are that 
players in the cursed equilibrium do not update other players' cursedness as the history 
unfolds and the cursed equilibrium maintains the consistency of the belief system.

Last but not least, this paper also relates to the literature of dirty faces games. 
The dirty faces game is first introduced by \cite{littlewood1953littlewood} to 
demonstrate how common knowledge is transmitted.
\cite{binmore1988common} are the first to theoretically study the dirty faces games 
with the knowledge operator.
In addition, \cite{liu2008dirty} shows that in theory, if players are unaware 
of other players' faces, they might wrongly claim their face types, and hence 
influence the knowledge transmission among the players.

On the experimental studies, \cite{weber2001behavior} and \cite{bayer2007dirty}
conduct the first two dirty faces game experiments, finding that there is a 
significant portion of subjects who are not able perform such iterative reasoning.
More recent experiments have shown that the failure of iterative reasoning 
is still observed when playing against fully rational robot players 
\citep{grehl2015experimental}, and is correlated with cognitive abilities 
\citep{devetag2003games, bayer2016logical, bayer2016logical2}, while the deviations from the 
equilibrium will significantly decrease when the participants are selected 
through a market mechanism \citep{choo2022can}.
In summary, these experiment findings support the existence of level 0 players, 
who are not sequentially rational, and the heterogeneity with respect to 
strategic sophistication among the population.

\section{The Model}\label{sec:model}

In this paper, we will focus on a restricted class of extensive form games: 
the multi-stage games with observed actions introduced by 
\cite{FUDENBERG1983251} and \cite{fudenberg1991perfect}.
This framework provides a tractable framework to study 
players' strategic behavior when they are uncertain about 
other players' types and levels of sophistication at the same time.
This section defines multi-stage games with 
observed actions and the dynamic CH solution for this family of games.

\subsection{Multi-Stage Games with Observed Actions}
\label{sec:game_def}

Let $N=\{1,\ldots ,n\}$ be a finite set of players.
Each player $i\in N$ has a \textit{type} $\theta_i$
drawn from a finite set $\Theta_i$.
Let $\theta \in \Theta \equiv \times_{i=1}^{n}\Theta_i$ 
be the type profile and $\theta_{-i}$ be the type profile
without player $i$.
All players have the common (full support) prior distribution 
$\mathcal{F}(\cdot): \Theta \rightarrow [0,1]$.
Therefore, for every player $i$, the belief of other players'
types conditional on his own type is 
$$\mathcal{F}(\theta_{-i} | \theta_i) = \frac{\mathcal{F}(\theta_{-i},\theta_i)}{\sum_{\theta'_{-i}\in \Theta_{-i}} \mathcal{F}(\theta'_{-i}, \theta_i)}.$$
If the types are independent across players, 
then for each player $i$, his belief of other players' 
types is $\mathcal{F}_{-i}(\theta_{-i}) = 
\Pi_{j\neq i}\mathcal{F}_{j}(\theta_{j})$ where
$\mathcal{F}_{j}(\theta_j)$ is the marginal distribution of player 
$j$'s type.
At the beginning of the game, players are told their own types,
but they are not informed anything about other players' types.
That is, each player's type is his own (payoff-relevant) private information.

The game is played in periods $t=1,2,\ldots, T$.\footnote{In this paper, we 
focus on the class of finite horizon games although $T$ can be arbitrarily large. 
In other words, we consider the games that the horizon is long but well foreseen.}
In each period, players simultaneously choose their actions, which will be 
revealed at the end of the period. 
We allow the feasible set of actions varies with histories, 
so games with alternating moves are also included.
Let $\mathcal{H}^{t-1}$ be the set 
of all available histories at period $t$, 
where $\mathcal{H}^0 = \{h_{\emptyset} \}$ and 
$\mathcal{H}^T$ is the set of terminal histories.
Let $\mathcal{H} = \cup_{t=0}^{T} \mathcal{H}^t$ be the 
set of all available histories of the game, 
and let $\mathcal{H}\backslash\mathcal{H}^T$ be the set of 
non-terminal histories.

For every player $i$, the available information 
at period $t$ is in $\Theta_{i} \times \mathcal{H}^{t-1}$.
Therefore, player $i$'s information sets can be 
specified as $\mathcal{I}_i \in \Pi_{i} = \{(\theta_i, h): \theta_i \in \Theta_i, 
h\in \mathcal{H}\backslash\mathcal{H}^{T}\}$.
For the sake of simplicity, we assume that the feasible set
of actions for every player is independent of their types.\footnote{The consequence of this assumption is that players cannot signal their own type by choosing some action that is not available to some other types.}
We use $A_i(h^{t-1})$ to denote the feasible set of actions for player $i$ at history $h^{t-1}$. 
Let $A_i = \cup_{h\in \mathcal{H}\backslash\mathcal{H}^T}A_i(h)$ be the set of player $i$'s 
all feasible actions in the game.
We assume that all players in the game have perfect recall (see \cite{krepswilson1982} for the definition).
In addition, we assume $A_i$ is finite for all $i\in N$ and 
$|A_i(h)|\geq 1$ for all $i\in N$ and any $h\in \mathcal{H}\backslash\mathcal{H}^T$.

A behavioral strategy for player $i$ is a function 
$\sigma_i: \Pi_i \rightarrow \Delta(A_i)$ satisfying 
$\sigma_i(\theta_i, h^{t-1}) \in \Delta(A_i(h^{t-1}))$.
Furthermore, we use $\sigma_i(a_i^t | \theta_i, h^{t-1})$ to 
denote the probability for player $i$ to choose $a_i^t \in A_i(h^{t-1})$.
We use $a^t = (a_1^t, \ldots, a_n^t) \in \times_{i=1}^n A_i(h^{t-1})$ 
to denote the action profile at period $t$ and
$a_{-i}^t$ be the action profile at period $t$ without
player $i$.
If $a^t$ is the action profile realized in period $t$, 
then $h^t = (h^{t-1}, a^t)$.
Finally, each player $i$ has a payoff function 
(in von Neumann-Morgenstern utilities)
$u_i: \mathcal{H}^{T} \times \Theta \rightarrow \mathbb{R},$ and we let $u=(u_1, \ldots, u_n)$ be 
the profile of utility functions.\footnote{Notice that each player's payoff depends on the whole type
profile. For player $i$, if $\theta_i$ does not directly kick in the utility 
function, we say $\theta_i$ has ``informational value.''}

A multi-stage game with observed actions, $\Gamma$, 
is defined by the tuple $\Gamma = \langle N, \mathcal{H}, \Theta, \mathcal{F}, u\rangle$.

\subsection{Dynamic Cognitive Hierarchy Solution}\label{sec:dch_def}

Each player $i$ is endowed with a level of sophistication 
$\tau_i \in \mathbb{N}_0$. Let $\tau = (\tau_1,\ldots, \tau_n)$ be the level
profile and $\tau_{-i}$ be the level profile without player $i$.
The level profile is drawn from a distribution $P: \mathbb{N}_0^{|N|} \rightarrow
[0,1]$. Following \cite{lin2022cognitive}, 
we assume $P$ has full support and it is independent across players. 
That is, $P(\tau) = \Pi_{i=1}^{n} P_i(\tau_i)$ 
where $P_i$ is the marginal distribution of player $i$'s level of sophistication.
Each level of each player has different prior 
beliefs about other players' levels while the prior 
beliefs satisfy truncated rational expectations. 
That is, for each $i$, $j\neq i$, and $k$, 
let $\hat{P}_{ij}^k(\tau_j)$ be level $k$ player $i$'s
prior belief about player $j$'s level, and $\hat{P}_{ij}^k(\tau_j)$ satisfies:

\begin{equation}
\hat{P}_{ij}^{k}(\tau_{j})=%
\begin{cases}
\frac{P_j(\tau_j)}{\sum_{m=0}^{k-1}P_j(m)} & \mbox{if }\tau_j <k \\
0 & \mbox{if }\tau_j \geq k.%
\end{cases}
\label{truncated}
\end{equation}

The intuition behind (\ref{truncated}) is that all players have correct beliefs about the 
relative proportions of players who are less 
sophisticated than they are, while they incorrectly
believe there does not exist any player who is more 
sophisticated than they are. 

Therefore, in a multi-stage game with observed actions, 
every player has two pieces of private information---the payoff-relevant ``type'' and the ``level of sophistication,'' 
which are drawn from $\mathcal{F}$ and $P$, respectively. 
Moreover, we assume that every player's type and level of 
sophistication are drawn independently.

\begin{assumption}\label{assumption:level_type_independence}
$\mathcal{F}$ and $P$ are independent distributions. 
\end{assumption}

The dynamic cognitive hierarchy solution requires every level of players
to best respond to their beliefs at every continuation game starting 
in each period $t$ after every possible history $h^{t-1}$.
Since these continuation games themselves are not
proper subgames, we still need to specify the beliefs
at the beginning of each continuation game. 
Let $\mu_{i}^{k}(\theta_{-i}, \tau_{-i} | \theta_i, h^{t-1})$ 
be player $i$'s belief of other players' types and 
levels of sophistication at history $h^{t-1}$, conditional on being type $\theta_i$ and level $k$.
In addition, we use $\mu_i^{k}(\theta_{-i} | \theta_i, h^{t-1})$ and
$\mu_{i}^{k}(\tau_{-i}|\theta_i, h^{t-1})$ 
to denote player $i$'s marginal beliefs 
of other players' types and levels at history $h^{t-1}$ (conditional on $\theta_i$ and $k$), respectively.
Also, for any $j\neq i$, we use $\mu_{ij}^k(\theta_j, \tau_j| \theta_i, h^{t-1})$ to denote player $i$'s 
belief about player $j$'s type and level at history $h^{t-1}$ conditional on being type $\theta_i$ and 
level $k$.

In the dynamic cognitive hierarchy solution, a strategy profile is a level-dependent 
profile of behavioral strategy of each level of each player. 
Let $\sigma_{i}^k$ be level $k$ player $i$'s behavioral strategy, 
where level 0 players uniformly randomize at every information set.\footnote{This is a
placeholder assumption for level 0 players' behavior. Dynamic CH solution will be well-defined as 
long as level 0's behavioral strategy is full support at every information. An alternative 
model of level 0 is to assume level 0 players are more likely to choose ``salient'' actions 
that will not lead to the outcome with the minimum payoff (see \cite{chong2016generalized} for 
details).} 
That is, for every $i\in N$, $\theta_i \in \Theta_i$, $h\in \mathcal{H}\backslash\mathcal{H}^T$, and for all $a\in A_i(h)$,
$$\sigma_i^0(a| \theta_i, h) = \frac{1}{|A_i(h)|}.$$
In the following, we may interchangeably call level 0 players 
non-strategic players and level $k\geq 1$ players strategic players.

Each player $i$ with level $k\geq 1$ and type $\theta_i$ will update their beliefs 
about all other players' types and levels at every 
history.\footnote{Level 1 players will always believe other 
players are non-strategic players, so they don't update their beliefs 
about other players' levels. However, they may update their beliefs 
about other players' types.}
Their posterior beliefs at history $h^{t-1}$ depend on the 
level-dependent strategy profile and the prior beliefs. 
To formally characterize the belief updating process, 
we need to introduce some additional notations. Let 
$\sigma_j^{-k} = (\sigma_j^{0},\ldots, \sigma_j^{k-1})$ be 
the profile of strategies adopted by the levels below $k$ of player $j$.
Furthermore, let $\sigma_{-i}^{-k} = (\sigma_{1}^{-k},\ldots, \sigma_{i-1}^{-k}, \sigma_{i+1}^{-k}, \ldots, \sigma_{n}^{-k})$
be the profile of behavioral strategies of the levels below $k$ 
of all players other than player $i$.
Notice that all strategic players believe every history is possible 
since $\tau_{-i} = (0,\ldots,0)$ is always possible. 
Therefore, every level of players can use Bayes' rule to
derive the posterior belief about other players' types and levels. 
Specifically, for any $i\in N$, $k\geq 1$ and $\theta_i\in \Theta_i$, 
a level-dependent strategy profile will induce the posterior belief
$\mu_i^k(\theta_{-i}, \tau_{-i}|\theta_i, h)$ at every $h\in \mathcal{H}\backslash\mathcal{H}^T$.

In the dynamic CH solution, players correctly anticipate how they will 
update their posterior beliefs at all future histories of the game. 
Therefore, for any $k\geq 1$, $i\in N$ and $\theta_i\in \Theta_i$, given any 
level-dependent strategy-profile $\sigma_{-i}^{-k}$,
level $k$ type $\theta_i$ player $i$ 
believes the probability of $a^t_{-i}\in A_{-i}(h^{t-1})$ being chosen is 
$$\tilde{\sigma}_{-i}^{-k}(a_{-i}^t|\theta_i, h^{t-1}) = \sum_{\theta_{-i}\in \Theta_{-i}}
\sum_{\{\tau_{-i}: \tau_j <k \; \forall j\neq i\}}
\mu_{i}^k(\theta_{-i}, \tau_{-i}| \theta_i, h^{t-1}) 
\prod_{j\neq i}\sigma_j^{\tau_j}(a_j^t| \theta_j, h^{t-1}).$$
Furthermore, for every level of every player, 
given lower-level players' strategies, 
they can compute the probability of any outcome 
being realized at any non-terminal history. 
Specifically, for any $i\in N$, $\tau_i>0$, $\theta\in\Theta$, 
$\sigma$, and $\tau_{-i}$ such that
$\tau_j<\tau_i$ for any $j\neq i$, 
let $P_i^{\tau_i}(h^T|h^{t-1}, \theta,\tau_{-i},
\sigma_{-i}^{-\tau_i},\sigma_i^{\tau_i})$ be level $\tau_i$ player $i$'s belief about the conditional realization probability of $h^T\in\mathcal{H}^T$
at history $h^{t-1}\in \mathcal{H}\backslash
\mathcal{H}^T$ if the type profile is $\theta$,
the level profile is $\tau$, and player $i$ uses 
$\sigma_i^{\tau_i}$.

Finally, level $\tau_i$ player $i$ will use Bayes'
rule to derive the posterior belief 
in every information set. 
Thus, level $\tau_i$ player $i$'s expected
payoff at any $h^t\in \mathcal{H}\backslash
\mathcal{H}^T$ is given by:
\begin{align*}
\mathbb{E}u_i^{\tau_i}(\sigma&|\theta_i,h^t)=\\   
&\sum_{h^T\in\mathcal{H}^T} \sum_{\theta_{-i}\in \Theta_{-i}}
\sum_{\{\tau_{-i}: \tau_j <k \; \forall j\neq i\}}\mu_i^{\tau_i}(\theta_{-i},\tau_{-i}|
\theta_i, h^t) P_i^{\tau_i}(h^T|h^{t}, \theta,\tau_{-i},
\sigma_{-i}^{-\tau_i},\sigma_i^{\tau_i}) 
u_i(h^T,\theta_i,\theta_{-i}).
\end{align*}

The dynamic CH solution is defined as the 
level-dependent assessment $(\sigma^*, \mu^*)$\footnote{When players are indifferent, 
we follow \cite{lin2022cognitive}, assuming they will uniformly randomize over optimal actions. 
This is a typical assumption in level $k$ models, 
and it is convenient because it ensures a unique 
solution. In addition, although the dynamic CH 
solution is defined as a fixed point, it can be 
solved for recursively, starting with the lowest level and iteratively working up to higher levels.},
such that for any level $k$ player $i$, 
$\mu_i^{k*}$ is computed by Bayes' rule as other players are using $\sigma_{-i}^{-k*}$, and 
for every $i$, $k$, and $h^t$, $\sigma_i^{k*}$ 
maximizes player $i$'s expected payoff:
\begin{align*}
\mathbb{E}u_i^{k}(\sigma^*&|\theta_i,h^t)=\\   
&\sum_{h^T\in\mathcal{H}^T} \sum_{\theta_{-i}\in \Theta_{-i}}
\sum_{\{\tau_{-i}: \tau_j <k \; \forall j\neq i\}}\mu_i^{k*}(\theta_{-i},\tau_{-i}|
\theta_i, h^t) P_i^{k}(h^T|h^{t}, \theta,\tau_{-i},
\sigma_{-i}^{-k*},\sigma_i^{k*}) 
u_i(h^T,\theta_i, \theta_{-i}).
\end{align*}

\section{General Properties of the Belief Updating Process}
\label{sec:theo_prop}

In this section, we characterize some general properties of 
the belief updating process of the dynamic CH solution.
Assuming the prior distributions of types and levels of sophistication 
are independent, we can first characterize 
the (posterior) belief at every information set recursively
in Lemma \ref{lemma_belief_closed_form}.

\begin{lemma}\label{lemma_belief_closed_form}
Consider any multi-stage game with observed actions $\Gamma$,
any $i\in N $, $\theta_i \in \Theta_i$, $h\in \mathcal{H}\backslash\mathcal{H}^T$, 
and every level $k\in \mathbb{N}$.
For every information set $\mathcal{I}_i = (\theta_i, h)$, level $k$ player $i$'s 
belief at $\mathcal{I}_i$ can be characterized as follows.
\begin{itemize}
    \item[1.] Level $k$ player $i$'s prior belief about other players' 
        types and levels are independent. 
        That is, $\mu_i^k(\theta_{-i}, \tau_{-i} | \theta_i, h_{\emptyset})
        = \mathcal{F}(\theta_{-i}|\theta_{i}) \prod_{j\neq i} \hat{P}_{ij}^k
        (\tau_{j})$.
    \item[2.]  For any $1\leq t < T$, and $h^t\in \mathcal{H}^t$, level $k$ player $i$'s 
        belief at information set $(\theta_i, h^t)$ is
        $$\mu_i^k(\theta_{-i},\tau_{-i}| \theta_i, h^t) =\frac{\mathcal{F}(\theta_{-i}|\theta_i)
        \prod_{j\neq i}\left\{ \hat{P}_{ij}^k(\tau_j)\prod_{l=1}^t\sigma_j^{\tau_j}(a_j^l|\theta_j, h^{l-1}) \right\}}{\sum_{\theta'_{-i}}\sum_{\{\tau'_{-i}:\tau'_j<k \; \forall j\neq i\}}
        \mathcal{F}(\theta'_{-i}|\theta_i)
        \prod_{j\neq i}\left\{ \hat{P}_{ij}^k(\tau'_j)\prod_{l=1}^t\sigma_j^{\tau'_j}(a_j^l|\theta'_j, h^{l-1}) \right\}}.$$
\end{itemize}
\end{lemma}

\begin{proof}
See Appendix \ref{appendix:proof_general}.
\end{proof}

One property of the dynamic CH solution is that in the later histories, 
the support of the posterior beliefs is (weakly) shrinking (\cite{lin2022cognitive}, 
Proposition 2).
In dynamic games of incomplete information, players will learn 
other players' types and levels in every subgame. 
Proposition \ref{prop:support_evo} shows that the marginal belief about 
other players' types always have full support. 
The intuition of this result is that when history gets longer,
players will rule out the possibility of level profiles but not type profiles.
To state the proposition, we first define the support of the 
marginal beliefs.

\begin{definition}[Support]
For any multi-stage game with observed actions $\Gamma$, any $i\in N$, any $\tau_i\in \mathbb{N}$, any $\theta_i\in \Theta_i$, and any history $h\in \mathcal{H}\backslash
\mathcal{H}^{T}$, let $supp_i(\theta_{-i}|\tau_i, \theta_i, h)$ and 
$supp_i(\tau_{-i}|\tau_i, \theta_i, h)$
be the support of level $\tau_i$ player $i$'s marginal belief
about other players' types and levels at information set $(\theta_i, h)$, 
respectively. In other words, for any $\theta_{-i}$ and $\tau_{-i}$,
\begin{align*}
    \theta_{-i}\in supp_i(\theta_{-i}|\tau_i, \theta_i, h) &\iff 
    \sum_{\{\tau_{-i}:\tau_j < \tau_i \; \forall j\neq i\}}
    \mu_i^{\tau_i}(\theta_{-i},\tau_{-i} | \theta_i, h) > 0, \\
    \tau_{-i} \in supp_i(\tau_{-i}|\tau_i, \theta_i, h) &\iff 
    \sum_{\theta_{-i}\in \Theta_{-i}} \mu_i^{\tau_i}(\theta_{-i},\tau_{-i} | \theta_i, h) 
    > 0.
\end{align*}
\end{definition}

\begin{proposition}\label{prop:support_evo}
Consider any multi-stage game with observed actions $\Gamma$, any $i\in N$, 
any $\tau_i\in \mathbb{N}$, and any $\theta_i \in \Theta_i$. The following 
two statements hold.
\begin{itemize}
    \item[1.] For any $h^t = (h^{t-1}, a^t) \in \mathcal{H}^t \backslash \mathcal{H}^T$, 
    $supp_i(\tau_{-i}|\tau_i, \theta_i, h^{t}) \subseteq 
    supp_i(\tau_{-i}|\tau_i, \theta_i, h^{t-1})$.
    \item[2.] For any $h\in \mathcal{H}\backslash\mathcal{H}^T$.
    $supp_i(\theta_{-i}|\tau_i, \theta_i, h) = \Theta_{-i}$.
\end{itemize}
\end{proposition}

\begin{proof}
See Appendix \ref{appendix:proof_general}.
\end{proof}

The intuition of Proposition \ref{prop:support_evo} is that since 
it is always possible for other players to be level 0,
players can always rationalize any type profile by 
assuming all other players are level 0. 
Notice that this argument relies on two underlying assumptions.  
First, the action sets are independent of types. 
If this assumption doesn't hold, it is possible to find some history
that can only be created by some specific type profile.
Consequently, players would rule out the possibility of some other type profiles 
when such history is realized. 
Second, the horizons are finite. 
For infinite horizon games, consider a history that can be realized
only if some type of player chooses infinitely many dominated actions.
At such history, that type will be eliminated from the support since 
the possibility of level 0 converges to 0 when the length of history is infinite.

In the following, we further assume every player's type is independently 
drawn. That is, $\mathcal{F}(\theta)=\prod_{i\in N} \mathcal{F}_i(\theta_i)$.
In this case, the belief updating process will satisfy a particular independence 
property. Proposition \ref{prop:independence} establishes that 
at every information set, the posterior beliefs are independent across players.

\begin{proposition}\label{prop:independence}
For any multi-stage game with observed actions $\Gamma$, 
any $h\in \mathcal{H}\backslash\mathcal{H}^T$, any $i\in N$, $\theta_i\in \Theta_i$,
and for any $k\in \mathbb{N}$, if the prior distribution of types 
is independent across players, i.e., $\mathcal{F}(\theta) = \prod_{i=1}
^n \mathcal{F}_i(\theta_i)$, then 
level $k$ player $i$'s posterior 
belief about other players' types and levels at $h$ is 
independent across players. That is,
$$\mu_i^k(\theta_{-i}, \tau_{-i}|\theta_i, h) = 
\prod_{j\neq i}\mu_{ij}^k(\theta_j,\tau_j|\theta_i, h). $$
\end{proposition}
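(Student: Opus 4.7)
The plan is to fix any level $k \geq 1$ and player $i$ with type $\theta_i$, and to proceed by induction on the depth $t$ of the public history $h^{t-1}$. The lower-level strategies $\sigma_j^{k'}$ for $k' < k$ are taken as already determined by the recursive construction underlying the DCH solution (Lemma \ref{lemma_uniqueness}). At the empty history $h_\emptyset$ the joint prior over $(\theta_{-i}, \tau_{-i})$ equals $\prod_{j\neq i} \mathcal{F}_j(\theta_j)\,\hat{P}^k_{ij}(\tau_j)$, using (i) the independence of types across players, (ii) the fact that $P = \prod_{i} P_i$ so that the truncated rational-expectations prior in (\ref{truncated}) factorizes across $j$, and (iii) Assumption \ref{assumption:level_type_independence} which lets types and levels enter separately. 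This settles the base case.

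For the inductive step, suppose at history $h^{t-1}$ the posterior factorizes as $\mu_i^k(\theta_{-i}, \tau_{-i} \mid \theta_i, h^{t-1}) = \prod_{j\neq i} \mu_{ij}^k(\theta_j, \tau_j \mid \theta_i, h^{t-1})$. In period $t$, because the game has observed actions and players move simultaneously, each player $j \neq i$'s action $a_j^t$ conditional on $(\theta_j, \tau_j, h^{t-1})$ is drawn according to $\sigma_j^{\tau_j}(a_j^t \mid \theta_j, h^{t-1})$, independently of the realizations of $(\theta_{j'}, \tau_{j'})$ or $a_{j'}^t$ for $j' \neq j$. Hence the likelihood of observing $a_{-i}^t$ factorizes across $j$, and Bayes' rule yields
$$\mu_i^k(\theta_{-i}, \tau_{-i} \mid \theta_i, h^t) \;\propto\; \prod_{j\neq i} \mu_{ij}^k(\theta_j, \tau_j \mid \theta_i, h^{t-1}) \, \sigma_j^{\tau_j}(a_j^t \mid \theta_j, h^{t-1}).$$
Since both the prior and the likelihood split into independent factors in $j$, the normalizing constant also factors as $\prod_{j\neq i} Z_j$ with $Z_j = \sum_{\theta_j, \tau_j} \mu_{ij}^k(\theta_j, \tau_j \mid \theta_i, h^{t-1})\,\sigma_j^{\tau_j}(a_j^t \mid \theta_j, h^{t-1})$, and each $Z_j$ is strictly positive because level-$0$ players fully mix at every information set. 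Dividing factor-by-factor identifies each piece as the updated marginal $\mu_{ij}^k(\theta_j, \tau_j \mid \theta_i, h^t)$, completing the induction.

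The substantive issue to isolate is \emph{why} the multi-stage-with-observed-actions structure is indispensable, as the remark after the proposition hints. It is this structure that makes the period-$t$ action kernel $\Pr(a_{-i}^t \mid \theta_{-i}, \tau_{-i}, h^{t-1})$ a product across $j$: all players move simultaneously conditional on the common history, the realized action profile is publicly observed before the next stage, and the feasible set $A_j(h^{t-1})$ depends only on the public history. In a general extensive game, an opponent's move could be conditioned on private information he has inferred about \emph{another} opponent's unobserved earlier move, coupling $(\theta_j, \tau_j)$ and $(\theta_{j'}, \tau_{j'})$ in the likelihood and breaking the factorization. Once that subtle point is pinned down, the rest is just the routine observation that Bayesian updating with a product prior and a product likelihood returns a product posterior.
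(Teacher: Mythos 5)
Your proof is correct and follows essentially the same route as the paper's: induction on the length of the public history, with the base case coming from the independent priors over types and levels, and the inductive step exploiting that the Bayes update multiplies a product prior by a likelihood that factorizes across opponents (each $\sigma_j^{\tau_j}(a_j^t\mid\theta_j,h^{t-1})$ depends only on player $j$'s own type and level), so the normalizing constant and hence the posterior also factor, each factor being the updated marginal. The paper packages the prior/posterior formulas in a separate closed-form lemma before running the same induction, but the substance of the argument is identical.
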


\begin{proof}
See Appendix \ref{appendix:proof_general}.
\end{proof}

Proposition \ref{prop:independence} is an analogous property of the  
``no-signaling-what-you-don't-know'' condition of perfect Bayesian equilibrium (see \cite{fudenberg1991perfect}). Since the prior distribution is independent, and 
the past history is public information at the beginning of each period, 
each player's belief will remain independent across players. In other words,
when the types are drawn independently, 
each player's action does not convey any information about other players'
private information. Proposition \ref{prop:independence} demonstrates 
the independence does not only hold in equilibrium, 
but also hold in the dynamic CH solution. 
This property does not rely on Assumption \ref{assumption:level_type_independence}---it 
would hold as long as the priors of types and levels are both independent across players.
Finally, it is worth noticing that it is a useful property for solving the dynamic CH solution 
when the game structure is really complex or when there are a lot of players.
In these cases, it is easier to compute the posterior beliefs by each player 
rather than by each stage.

Although every level of players' posterior beliefs about others are 
independent across players, the belief of any other player's type 
and level is generally correlated.
As players start observing the histories, 
they will learn the types and levels at the same time, causing
these two dimensions to become correlated.
In Section \ref{sec:dirty_face}, we will discuss how the beliefs of types and levels 
are correlated in details in the context of dirty faces games.

To conclude this section, we analyze the case where every player's type 
is not drawn independently. When players' types are correlated, 
their actions may signal not only their own types but also those of players 
whose types are correlated with them. 
Similar to the observations of \cite{myerson1985bayesian} and \cite{fudenberg1991perfect},
to deal with correlated types, we can simply transform the 
original game (correlated types) into one game with independent types.
After solving the transformed games (independent types), 
we then map the solution back to the original game.

Proposition \ref{prop:corr_type} shows that the dynamic CH 
solution is invariant in the transformed and the original game. 
The insight of this result is that the independence assumption of 
the types is without loss of generality. Moreover, since the 
types and the levels are drawn independently,
the transformation is in fact \emph{level-independent}.
To this end, the dynamic CH solution closely mirrors the equilibrium model.

Specifically, for any multi-stage game with observed actions $\Gamma$, 
we can consider a corresponding transformed game $\hat{\Gamma}$ where
the prior distribution of types is the product of 
independent uniform marginal distributions. Namely,
$$\hat{\mathcal{F}}(\theta) = \frac{1}{\prod_{i=1}^n |\Theta_i|} \;\;\;\; \forall \theta\in \Theta.$$
In addition, we can transform the utility functions
to be
$$\hat{u}_i(h^T,\theta_i,\theta_{-i}) = 
\mathcal{F}(\theta_{-i}|\theta_i)u_i(h^T, \theta_i, \theta_{-i}).$$

\begin{proposition}\label{prop:corr_type}
The level-dependent assessment $(\hat{\sigma}, 
\hat{\mu})$ is the dynamic CH solution of the 
transformed (independent types) game if and only if
the level-dependent assessment $(\sigma, 
\mu)$ is the dynamic CH solution of the original 
(correlated types) game where $\sigma = \hat{\sigma}$ and for any $i\in N$, $\theta_i\in\Theta_i$, $k>0$, and $h^t\in \mathcal{H}\backslash\mathcal{H}^T$,
\begin{align*}
\mu_i^k(\theta_{-i},\tau_{-i}| \theta_i, h^t)
=\frac{\mathcal{F}(\theta_{-i}|\theta_i)
\hat{\mu}_{i}^k(\theta_{-i},\tau_{-i}|\theta_i, h^t)}{\sum_{\theta'_{-i}}\sum_{\{\tau'_{-i}:\tau'_j<k \; \forall j\neq i\}}
\mathcal{F}(\theta'_{-i}|\theta_i)
\hat{\mu}_{i}^k(\theta'_{-i},\tau'_{-i}|\theta_i, h^t)}.
\end{align*}
\end{proposition}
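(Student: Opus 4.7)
The plan is to prove the proposition by strong induction on the sophistication level $k$, showing simultaneously that (i) the level $k$ behavioral strategies coincide across the two games, and (ii) the level $k$ posterior beliefs are related by the displayed transformation formula. The key structural observation is that the modified utility $\hat{u}_i = \mathcal{F}(\theta_{-i}\mid\theta_i)\,u_i$ exactly re-injects the type correlation that was removed by replacing $\mathcal{F}$ with the uniform prior $\hat{\mathcal{F}}$, so that level $k$'s expected payoff in the two games differs only by a positive, action-independent multiplicative constant.

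The base case $k=0$ is immediate because level $0$ strategies are defined as uniform randomization at every information set, independent of priors and payoffs, so $\sigma_i^0 = \hat{\sigma}_i^0$. For the inductive step, I assume $\sigma_j^m = \hat{\sigma}_j^m$ for all $j\in N$ and $0\le m<k$. Since a behavioral strategy is a map $\Pi_i\to\Delta(A_i)$ and does not depend on the prior over types or the utility, the reaching probability of any non-terminal history $h^t$ conditional on $(\theta,\tau)$—which depends only on the lower-level strategies up to period $t$—is identical in both games. I then apply Bayes' rule in each game using this common reaching probability together with the respective priors $\mathcal{F}$ and $\hat{\mathcal{F}}$ to derive
\begin{equation*}
\mu_i^k(\theta_{-i},\tau_{-i}\mid \theta_i,h^t) \;\propto\; \mathcal{F}(\theta_{-i}\mid\theta_i)\,\hat{P}_{ij}^{k}(\tau_{-i})\,Q(h^t\mid\theta,\tau), \qquad \hat{\mu}_i^k(\theta_{-i},\tau_{-i}\mid \theta_i,h^t) \;\propto\; \hat{P}_{ij}^{k}(\tau_{-i})\,Q(h^t\mid\theta,\tau),
\end{equation*}
where $\hat{\mathcal{F}}(\theta_{-i}\mid\theta_i)$ is absorbed into the proportionality since it is a constant in $\theta_{-i}$. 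Taking the ratio and renormalizing yields the stated formula for $\mu_i^k$ in terms of $\hat{\mu}_i^k$, establishing (ii).

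For (i), I substitute the belief transformation into equation (\ref{eq:dch_maximization}) for the original game. The factor $\mathcal{F}(\theta_{-i}\mid\theta_i)$ in the numerator combines with $u_i(h^T,\theta)$ to yield $\hat{u}_i(h^T,\theta)$, while $P_i^k$ is unchanged because it is determined by the lower-level strategies (which coincide by the inductive hypothesis) together with $\sigma_i^k$ itself. The net effect is
\begin{equation*}
\mathbb{E}u_i^{k}(\sigma\mid \theta_i, h^t) \;=\; \frac{1}{Z_i^k(\theta_i,h^t)}\,\mathbb{E}\hat{u}_i^{k}(\hat{\sigma}\mid \theta_i, h^t),
\end{equation*}
where $Z_i^k(\theta_i,h^t) = \sum_{\theta'_{-i},\tau'_{-i}} \mathcal{F}(\theta'_{-i}\mid\theta_i)\,\hat{\mu}_i^k(\theta'_{-i},\tau'_{-i}\mid\theta_i,h^t)$ is the positive normalizer from the transformation formula. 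Crucially, $Z_i^k$ depends only on the information set $(\theta_i,h^t)$ at which player $i$ is deliberating and not on the action being evaluated, so the argmax over $\sigma_i^k(\cdot\mid\theta_i,h^t)$ is the same in both games; the tie-breaking convention (uniform over optimal actions) then forces $\sigma_i^{k}=\hat{\sigma}_i^{k}$. The equivalence between the two DCH solutions follows by iterating across all $k\in\mathbb{N}$, and each direction of the ``if and only if'' is obtained by running the argument with the roles of $\sigma$ and $\hat{\sigma}$ exchanged.

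The main obstacle is bookkeeping rather than substantive mathematics: I must keep the reaching probability $Q(h^t\mid\theta,\tau)$ strictly separated from the utility weighting when I invoke Bayes' rule, because conflating them would make it appear as though $\hat{u}_i$ should show up inside the posterior. The cleanest way to handle this is to unfold $\hat{\mu}_i^k$ once via Bayes' rule, substitute into the right-hand side of the proposed formula, and verify that the result matches Bayes' rule applied directly in the original game; this reduces everything to cancelling a common reaching-probability factor. A secondary subtlety is that the transformation is level-independent despite the iterative construction, but this falls out of Assumption \ref{assumption:level_type_independence}: because types and levels are independent, the posterior factors as type-posterior times level-posterior conditionally on the reaching probability, and only the type-posterior is affected by replacing $\mathcal{F}$ with $\hat{\mathcal{F}}$.
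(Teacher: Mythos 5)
Your proof is correct and takes essentially the same route as the paper's: the paper likewise expresses each posterior as the prior times a common reaching-probability factor (its closed-form belief characterization in Lemma \ref{lemma_belief_closed_form}), reads off the stated transformation because the uniform prior $\hat{\mathcal{F}}$ drops out of the normalization, and then notes that $\mathbb{E}u_i^k$ under $\mu_i^k$ and $\mathbb{E}\hat{u}_i^k$ under $\hat{\mu}_i^k$ differ only by a positive, action-independent factor, so the optimal (and uniformly tie-broken) strategies coincide. The only cosmetic difference is that you make the induction on levels explicit where the paper leaves it implicit in the recursive definition of DCH (and your $\hat{P}_{ij}^k(\tau_{-i})$ should be the product $\prod_{j\neq i}\hat{P}_{ij}^k(\tau_j)$, a notational slip only).
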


\begin{proof}
See Appendix \ref{appendix:proof_general}.
\end{proof}

\section{Dirty Faces Games: Theory}\label{sec:dirty_face}

The dirty faces game is first described by \cite{littlewood1953littlewood} to study the 
relationship between common knowledge and 
actions.\footnote{The dirty faces game has also been reframed as the ``cheating wives puzzle'' 
\citep{gamow1958forty}, the ``cheating husbands puzzle'' \citep{moses1986cheating},
the ``muddy children puzzle'' \citep{barwise1981scenes} and 
\citep{halpern1990knowledge}, and the ``red hat puzzle'' \citep{hardin2008introduction}.}
In an earlier version of the puzzle, Littlewood describes an incident where
three ladies---all with dirty faces yet unable to see their own faces---are
sitting in a railway carriage and laugh at each other. These ladies will laugh 
at anyone with a dirty face, but stop laughing when they realize their own 
faces are dirty. As each lady realizes other ladies do not stop laughing,
they suddenly infer that their own faces must be dirty. 
The reason is that, as argued by (\cite{littlewood1953littlewood}, pp. 4): 
``If I, A, am not laughable, B will be arguing: if I, B, am
not laughable, C has nothing to laugh at. Since B does not so argue, 
I, A, must be laughable.''

Notice that this logic is extremely bold that does not rely on any 
structural assumption. The argument is independent of the payoffs, the timing,
and the (prior) probability of having a dirty face.
Moreover, it can be easily extended to $n$-lady scenarios. 
The only piece of information required is that there is a \emph{public} laughter 
who will laugh whenever there is at least one lady having a dirty face.

To analyze the dirty faces game by the standard game theory approach, 
we slightly reframe the game into the environment introduced by 
\cite{fudenberg1991game}, which is also been experimentally studied by 
\cite{bayer2007dirty}. Let $N= \{1,2, \ldots, n \}$ be the set of players. For each $i\in N$, 
let $x_i\in \{O, X\}$ represent whether the player has a 
clean face $(O)$ or a dirty face $(X)$. 
Each player's face type is independently and identically determined
by a commonly known probability 
$p = \Pr(x_i = X) = 1 - \Pr(x_i = O)$.\footnote{Notice that how the face types 
are drawn does not play any role in Littlewood's argument. 
Therefore, the equilibrium analysis remains the same 
if players' face types are correlated.}
Each player $i$ can observe other players' faces $x_{-i}$
but not their own faces.\footnote{To fit into the framework, each player's ``type'' (their own private information) can be specified as ``other players' faces.''
That is, $\theta_i = x_{-i}$.}
If there is at least one player having a dirty face, 
there will be a public announcement to all players at the 
beginning of the game. Let $a \in \{0,1 \}$ denote the event 
of whether there is an announcement. 
If there is an announcement ($a=1$), all players are informed there is at least one
dirty face but not the identities.

There are up to $T\geq 2$ periods. In each period, 
every player $i$ simultaneously chooses $s_i\in \{U,D \}$. 
The game ends after any period where any player chooses $D$.
Furthermore, their actions are revealed at 
the end of each period (so this is a multi-stage game with observed actions).
Finally, as the game proceeds to period $t\leq T$, 
the payoffs depend on their own face types and actions. 
As some player chooses $D$, 
he will get $\alpha>0$ if he has a dirty face while receive $-1$ if he has a clean face.
We assume that 
\begin{align}\label{payoff_assumption}
p\alpha - (1-p) <0 \iff 0<\alpha < (1-p)/ p,    
\end{align}
where $p\alpha - (1-p) $ is the expected payoff of $D$ when the belief of having 
a dirty face is $p$.
Thus, Assumption (\ref{payoff_assumption}) guarantees it is strictly dominated
to choose $D$ in period 1 when observing at least one dirty face.
In other words, players will be rewarded when correctly inferring the dirty face but penalized when wrongly claiming the dirty face.
Besides, the payoffs will be discounted with a common 
discount factor $\delta\in(0,1)$. 
To summarize, conditional on reaching period $t$, 
each player's payoff function (which depends on their own faces) can be written as:
\begin{align*}
    u_i(s_i | t, X) = 
    \begin{cases}
    \delta^{t-1}\alpha \;\; & \mbox{if}\;\; s_i=D\\
    0 & \mbox{if}\;\; s_i=U
    \end{cases}
    \;\;\;\; \mbox{ and }\;\;\;\;
    u_i(s_i | t, O) = 
    \begin{cases}
    -\delta^{t-1} \;\; & \mbox{if}\;\; s_i=D\\
    0 & \mbox{if}\;\; s_i=U.
    \end{cases}
\end{align*}
Therefore, a dirty faces game is defined by a tuple 
$\langle N, p, T, \alpha, \delta \rangle$.

To better understand the dirty faces game, Figure \ref{fig:dirty_game_tree} plots the 
game tree of a two-person two-period dirty faces game. At the beginning, Nature will 
randomly determine each player $i$'s face type with probability 
$\Pr(x_i=X) = p$. There are four possible realizations, and we use 
$(x_1, x_2)$ to denote the profile of face types. We ignore the case where $(x_1, x_2) =OO$
because in this case, it is common knowledge to both players that their faces are clean.
From this game tree, we can see how complicated the dirty faces games are---even in the 
simplest case (two-person two-period games), there are multiple non-singleton 
information sets where players need to form correct posterior beliefs.
Despite the complexity of the game structure, there is a unique equilibrium.

\begin{landscape}
\begin{figure}[htbp!]
\centering
\begin{tikzpicture}
    \tikzstyle{solid node}=[circle,draw,inner sep=1.2,fill=black];
    \tikzstyle{hollow node}=[circle,draw,inner sep=2.2];
    
    \tikzstyle{level 1}=[level distance=25mm,sibling distance=67mm]
    \tikzstyle{level 2}=[level distance=20mm,sibling distance=35mm]
    \tikzstyle{level 3}=[level distance=25mm,sibling distance=18mm]
    \tikzstyle{level 4}=[level distance=25mm,sibling distance=40mm]
    \tikzstyle{level 5}=[level distance=20mm,sibling distance=16mm]
    
    \node(0)[hollow node,label=above:{$N$}]{}
        child[grow=left, level distance=80mm]{node[solid node, label=left:{......}]{} 
            edge from parent node[above]{\footnotesize{$OO$}}
            edge from parent node[below]{\scriptsize{$[(1-p)^2]$}}}
        child{node(1-1)[solid node]{}
            child{node(2-1)[solid node]{} 
                child{node(3-1)[solid node]{} 
                    child{node(4-1)[solid node]{} 
                        child{node(5-1)[solid node, label=below:{\footnotesize{$(0,0)$}}]{} edge from parent node[left]{$U$}}
                        child{node(5-2)[solid node, label=below:{\footnotesize{$(0,\delta\alpha)$}}]{} edge from parent node[right]{$D$}}
                        edge from parent node[left]{$U$}}
                    child{node(4-2)[solid node]{}
                        child{node(5-3)[solid node, label=below:{\footnotesize{$(-\delta,0)$}}]{} edge from parent node[left]{$U$}}
                        child{node(5-4)[solid node, label=below:{\footnotesize{$(-\delta,\delta\alpha)$}}]{} edge from parent node[right]{$D$}}
                        edge from parent node[right]{$D$}}
                    edge from parent node[left]{$U$}}
                child{node(3-2)[solid node, label=below:{\footnotesize{$(0,\alpha)$}}]{} edge from parent node[right]{$D$}}
                edge from parent node[left]{$U$}}
            child{node(2-2)[solid node]{}
                child{node(3-3)[solid node, label=below:{\footnotesize{$(-1,0)$}}]{} edge from parent node[left]{$U$}}
                child{node(3-4)[solid node, label=below:{\footnotesize{$(-1,\alpha)$}}]{} edge from parent node[right]{$D$}}
                edge from parent node[right]{$D$}}
            edge from parent node[above, yshift=3]{\footnotesize{$OX$}}
            edge from parent node[below, xshift=10, yshift=-3]{\scriptsize{$[p(1-p)]$}}}
        child{node(1-2)[solid node]{}
            child{node(2-3)[solid node]{} 
                child{node(3-5)[solid node]{} 
                    child{node(4-3)[solid node]{} 
                        child{node(5-5)[solid node, label=below:{\footnotesize{$(0,0)$}}]{} edge from parent node[left]{$U$}}
                        child{node(5-6)[solid node, label=below:{\footnotesize{$(0,\delta\alpha)$}}]{} edge from parent node[right]{$D$}}
                        edge from parent node[left]{$U$}}
                    child{node(4-4)[solid node]{} 
                        child{node(5-7)[solid node, label=below:{\footnotesize{$(\delta\alpha,0)$}}]{} edge from parent node[left]{$U$}}
                        child{node(5-8)[solid node, label=below:{\footnotesize{$(\delta\alpha, \delta\alpha)$}}]{} edge from parent node[right]{$D$}}
                        edge from parent node[right]{$D$}}
                    edge from parent node[left]{$U$}}
                child{node(3-6)[solid node, label=below:{\footnotesize{$(0,\alpha)$}}]{} edge from parent node[right]{$D$}}
                edge from parent node[left]{$U$}}
            child{node(2-4)[solid node]{} 
                child{node(3-7)[solid node, label=below:{\footnotesize{$(\alpha, 0)$}}]{} edge from parent node[left]{$U$}}
                child{node(3-8)[solid node, label=below:{\footnotesize{$(\alpha,\alpha)$}}]{} edge from parent node[right]{$D$}}
                edge from parent node[right]{$D$}}
            edge from parent node[above, yshift=3]{\footnotesize{$XX$}}
            edge from parent node[below, yshift=-3]{\scriptsize{$[p^2]$}}}
        child{node(1-3)[solid node, label=above:{$1b$}]{}
            child{node(2-5)[solid node]{} 
                child{node(3-9)[solid node, label=right:{$1d$}]{} 
                    child{node(4-5)[solid node]{} 
                        child{node(5-9)[solid node, label=below:{\footnotesize{$(0,0)$}}]{} edge from parent node[left]{$U$}}
                        child{node(5-10)[solid node, label=below:{\footnotesize{$(0,-\delta)$}}]{} edge from parent node[right]{$D$}}
                        edge from parent node[left]{$U$}}
                    child{node(4-6)[solid node]{} 
                        child{node(5-11)[solid node, label=below:{\footnotesize{$(\delta\alpha,0)$}}]{} edge from parent node[left]{$U$}}
                        child{node(5-12)[solid node, label=below:{\footnotesize{$(\delta\alpha, -\delta)$}}]{} edge from parent node[right]{$D$}}
                        edge from parent node[right]{$D$}}
                    edge from parent node[left]{$U$}}
                child{node(3-10)[solid node, label=below:{\footnotesize{$(0,-1)$}}]{} edge from parent node[right]{$D$}}
                edge from parent node[left]{$U$}}
            child{node(2-6)[solid node]{} 
                child{node(3-11)[solid node, label=below:{\footnotesize{$(\alpha,0)$}}]{} edge from parent node[left]{$U$}}
                child{node(3-12)[solid node, label=below:{\footnotesize{$(\alpha,-1)$}}]{} edge from parent node[right]{$D$}}
                edge from parent node[right]{$D$}}
            edge from parent node[above, yshift=3]{\footnotesize{$XO$}}
            edge from parent node[below, yshift=-3]{\scriptsize{$[p(1-p)]$}}};

        \draw[dashed](1-1)to(1-2);
        \draw[dashed](2-1)to(2-2);
        \draw[dashed](2-3)to(2-4)to(2-5)to(2-6);
        \draw[loosely dotted,very thick](3-1)to[out=-35,in=220](3-5);
        \draw[dashed](4-1)to(4-2);
        \draw[dashed](4-3)to(4-4)to(4-5)to(4-6);
        
        \node at ($(1-1)!.5!(1-2)$)[above]{$1a$};
        \node at ($(2-1)!.5!(2-2)$)[above]{$2a$};
        \node at ($(2-4)!.5!(2-5)$)[above]{$2b$};
        \node at ($(3-1)!.5!(3-5)$)[below, yshift=-36]{$1c$};
        \node at ($(4-1)!.5!(4-2)$)[above]{$2c$};
        \node at ($(4-4)!.5!(4-5)$)[above]{$2d$};        
\end{tikzpicture}

\caption{The game tree of a two-person two-period dirty faces game. Here we omit the case of $OO$
since it is common knowledge to all players that both of their faces are clean.}

\label{fig:dirty_game_tree}

\end{figure}
\end{landscape}

To solve for the equilibrium, we assume from now on that there is 
a public announcement. Otherwise, 
it is common knowledge to all players that their faces are clean. 
In this case, let $0\leq k_i\leq n-1$ be the number of dirty faces 
observed by player $i$. Then for any $i\in N$ and any $k_i$, 
the unique Nash equilibrium (and hence perfect Bayesian equilibrium)
is that player $i$ will choose $U$ in periods $t<k_i +1$, 
and $D$ in periods $t\geq k_i + 1$.

To see this, we can proceed by induction on the number of observed dirty faces.
If the player doesn't observe any dirty face, he knows his face 
is dirty at the beginning. Hence, he will choose $D$ from the first period 
(because of discounting). Since all players know that their opponents know 
the game structure, all players know that a player will choose $D$ whenever
one doesn't observe any dirty face.
Therefore, if the game proceeds to period 2, it is common knowledge to all players
that there are at least two dirty faces. 
Continuing the same argument, if the game proceeds to period $k$, then it is common
knowledge to all players that there are at least $k$ dirty faces.
Therefore, if player $i$ observes $k$ dirty faces, he knows the game will end 
at period $k$ if his face is clean. 
Yet, if the game proceeds to period $k+1$, he will realize that 
his face is dirty for sure, and choose $D$ from period $k+1$.

Notice that the common knowledge of rationality is critical for reaching the 
equilibrium. Without common knowledge of rationality, 
the failure of choosing $D$ is not necessarily caused by 
observing that many dirty faces. Instead, it is possibly because of lack 
of rationality or because they don't believe other players are not rational.
In fact, as documented in previous dirty faces game experiments,
common knowledge of rationality is an empirically implausible assumption.
To bridge the gap between the theory and experimental results, 
we apply the dynamic cognitive hierarchies to the dirty faces games. 
In this section, to avoid the intuition from being blurred by the algebra,
we focus on the analysis of two-person dirty faces games.
We defer the analysis of three-person games
to Appendix \ref{sec:dirty_exp_appendix}.


\subsection{Dynamic CH Solution for Two-Person Dirty Faces Game}\label{sec:ext_dirty_face}

In two-person dirty faces games, let $N=\{1,2 \}$ be the set of players. 
Hence, any two-person dirty faces game can be described by the four-tuple 
$\langle p, T, \alpha, \delta\rangle$ where $p, \delta \in (0,1)$, $T\geq 2$, 
and Assumption (\ref{payoff_assumption}) is satisfied. We use $\mathcal{D}_2$ to 
denote the set of two-person dirty faces games.

In a two-person dirty faces game, given there is a public announcement, 
each player's information sets can be described by the period and the other 
player's face type since the game can proceed to the next period
only if all players choose $U$ in previous periods.
As a result, each player $i$'s behavioral strategy can be
represented by: 
$$\sigma_i: \{1,\ldots, T\} \times \{O,X\} \rightarrow [0,1],$$ 
which is a mapping from the period and 
the observed face to the probability of choosing $D$.
As analyzed above, the unique equilibrium is that players 
will choose $D$ in period 1 when $x_{-i}=O$, while 
choose $U$ in period 1 and $D$ in period 2 when $x_{-i}=X$.

In the dynamic CH solution, let player $i$'s level of sophistication be independently and 
identically drawn from the distribution $p = (p_k)_{k=0}^{\infty}$ where 
$p_k>0$ for all $k$.
We maintain the assumption that 
the distribution of face types and the distribution of levels are independent. 
Each player's optimal behavioral strategy is level-dependent. 
We denote level $k$ player $i$'s strategy as $\sigma_i^k$.
Following previous notations, we let 
$\mu_i^k(x_i, \tau_{-i} |t, x_{-i})$ be level $k$ player $i$'s belief about their own face and 
the other player's level of sophistication conditional on 
observing $x_{-i}$ and being at period $t$.
Level 0 players will uniformly randomize everywhere, so $\sigma_{i}^0(t,x_{-i})=1/2$ 
for all $t,x_{-i}$.

Proposition \ref{prop_extensive_dirty} fully characterizes the dynamic CH solution. 
The intuition is straightforward. 
When a player observes a clean face, one can immediately figure out the face type. 
Therefore, the prediction of dynamic CH coincides with the equilibrium when $x_{-i}=O$.
However, when a player observes a dirty face and $U$ in period 1, 
he cannot tell whether he has a dirty face or not for sure. 
Instead, he will believe that he is \emph{more} likely to have a dirty face as the game continues. 
As a result, conditional on observing a dirty face, players will claim having a dirty face
as long as the reward $\alpha$ is high enough or the discount rate $\delta$ is sufficiently low.
Otherwise, they will wait for more evidence.

\begin{proposition}\label{prop_extensive_dirty}
For any two-person dirty faces game, the level-dependent strategy profile of the 
dynamic CH solution can be characterized as following.
For any $i\in N$,
\begin{itemize}
    \item[1.] $\sigma_i^k(t,O)=1$ for any $k\geq 1$ and $1\leq t \leq T$.
    
    \item[2.] $\sigma_i^1(t,X)=0$ for any $1\leq t \leq T$. Moreover, for any $k\geq 2$, 
    \begin{itemize}
        \item[(1)] $\sigma_i^k(1,X)=0$,
        \item[(2)] for any $2\leq t \leq T-1$, $\sigma_{i}^{k}(t,X) = 1$ if and only if
        \begin{align*}
            \alpha \geq \left(\frac{1-p}{p}\right)\left(\frac{\left[ \left(\frac{1}{2}\right)^{t-1} - \left(\frac{1}{2}\right)^{t}\delta \right]p_0}{\left[ \left(\frac{1}{2}\right)^{t-1} - \left(\frac{1}{2}\right)^{t}\delta \right]p_0 + (1-\delta)\sum_{j=1}^{k-1}p_j} \right),
        \end{align*}
        \item[(3)] $\sigma_{i}^{k}(T,X) = 1 $ if and only if
        \begin{align*}
        \alpha \geq \left(\frac{1-p}{p}\right)\left(\frac{\left(\frac{1}{2}\right)^{T-1}p_0}{\left(\frac{1}{2}\right)^{T-1}p_0 + \sum_{j=1}^{k-1}p_j} \right).   
    \end{align*}
    \end{itemize}
\end{itemize}
\end{proposition}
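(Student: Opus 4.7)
The plan is to proceed by strong induction on the level $k$, using the recursive structure that lower levels' strategies pin down the transition probabilities higher levels face at each period.

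For the base case $k=1$, I note that level 1 believes the opponent is level 0, whose uniform random actions are uninformative about any face type. Seeing $O$, the public announcement forces the posterior on own face being dirty to 1, so claiming at any $t$ yields $\delta^{t-1}\alpha>0$ and strictly dominates waiting. Seeing $X$, the posterior on own face stays at the prior $p$ at every period reached, and the payoff assumption $\bar{\alpha}<1$ gives $\delta^{t-1}(p\alpha-(1-p))<0$ for every $t$, so waiting is optimal throughout.

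For the inductive step, assume levels $1,\ldots,k-1$ follow the strategies in the proposition. When level $k$ sees $O$, the base-case reasoning pins down own face as dirty, so $\sigma_i^k(t,O)=1$ at every (possibly off-path) information set. When level $k$ sees $X$ at $t=1$, the prior posterior $p$ and $\bar{\alpha}<1$ again rule out claiming. For $t\geq 2$, the key observation is that, conditional on the opponent being level $j\geq 1$, reaching period $t$ forces $x_i=X$, since a level-$j$ opponent would have claimed at $t=1$ upon seeing $O$. Combining this with the opponent-level-0 case, in which the opponent waits through $t-1$ periods with probability $(1/2)^{t-1}$ independently of $x_i$, Bayes' rule yields
\[
\pi_t^k \equiv \Pr(x_i=X\mid \text{reach } t)=\frac{p\left[(1/2)^{t-1}p_0'+\sum_{j=1}^{k-1}p_j'W_j^X(t)\right]}{p\left[(1/2)^{t-1}p_0'+\sum_{j=1}^{k-1}p_j'W_j^X(t)\right]+(1-p)(1/2)^{t-1}p_0'},
\]
where $p_j'=p_j/\sum_{\ell=0}^{k-1}p_\ell$ and $W_j^X(t)\in\{0,1\}$ indicates whether level $j$'s strategy prescribes waiting through period $t-1$ when seeing $X$. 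For $2\leq t\leq T-1$, I then compare claiming at $t$ with the one-step deviation ``wait at $t$, claim at $t+1$ if reached.'' Writing $q_O=1/2$ (only level-0 opponents can reach $t$ with $x_i=O$) and computing $q_X$ under the tentative hypothesis that every $j\in\{1,\ldots,k-1\}$ waits at $t$, the comparison $\pi_t^k\alpha(1-\delta q_X)\geq (1-\pi_t^k)(1-\delta q_O)$ simplifies, after cancelling the common factor $(1-\delta/2)$, to the displayed threshold. For $t=T$ the continuation value is $0$, and the same computation without the $(1-\delta q_X)$ factor delivers the terminal-period threshold.

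The main technical obstacle is that the formula implicitly assumes $W_j^X(t)=1$ for every $j<k$, even though some lower level might actually claim at or before $t$. I resolve this by a monotonicity argument: because $\sum_{\ell=1}^{j-1}p_\ell$ is nondecreasing in $j$, the level-$j$ threshold at period $t$ weakly exceeds the level-$k$ threshold at the same $t$ whenever $j<k$. Hence if the level-$k$ threshold at $t$ fails, every lower-level threshold at $t$ also fails, so no lower level claims at $t$, the assumption $W_j^X(t)=1$ is correct, and level $k$ waits; conversely, if some $j<k$ claims at $t$ (so the ``true'' level-$k$ threshold is even lower than what the formula gives), then $\bar\alpha$ exceeds the higher level-$j$ threshold and hence the proposition's threshold, so level $k$ claims as stated. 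Finally, since $(1/2)^{t-1}(1-\delta/2)$ is decreasing in $t$, the thresholds are decreasing in $t$, so the optimal strategy is a cutoff in $t$ and the one-step lookahead used above is sufficient for optimality.
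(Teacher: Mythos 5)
Your overall skeleton matches the paper's: induction on levels, the posterior at $(t,X)$ written over the set of lower levels whose strategies are consistent with reaching $t$, a one-step-deviation comparison yielding the displayed thresholds, and monotonicity of the thresholds in $t$ to justify the cutoff structure (your algebra for the interior and terminal thresholds is right, though the remark about ``cancelling the common factor $(1-\delta/2)$'' is inaccurate, and the decreasing-in-$t$ claim also needs the separate one-line comparison between the period-$(T-1)$ and period-$T$ thresholds, which have different functional forms). The necessity direction is fine: when the level-$k$ threshold at $t$ fails, monotonicity in both $k$ and $t$ gives that no lower level claims at any $t'\le t$, so the full-support formulas are valid and the one-step deviation dominates.

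The genuine gap is in your second case, where some $j<k$ claims at or before $t$. There your entire argument is the parenthetical assertion that ``the true level-$k$ threshold is even lower than what the formula gives,'' and this is not only unproven but false as stated. When a lower level drops out \emph{before} $t$, it is removed from level $k$'s support at $(t,X)$, so level $k$'s posterior probability of having a dirty face is \emph{lower} than the full-support formula presumes; e.g.\ at $t=T$ the true condition for claiming becomes $\bar\alpha \ge \left(\tfrac12\right)^{T-1}p_0\big/\left[\left(\tfrac12\right)^{T-1}p_0+\sum_{j\in S}p_j\right]$ with $S\subsetneq\{1,\dots,k-1\}$, which is \emph{strictly more} demanding than the stated threshold. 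The drop-out simultaneously lowers the value of waiting, so the net comparison is genuinely two-sided and cannot be settled by comparing formula thresholds across levels. This is exactly what the paper's Step 3 supplies: a cross-level monotonicity lemma with the \emph{actual} supports --- if level $k-1$ claims at $(t,X)$ then so does level $k$ --- proved by noting that level $k-1$ is excluded from level $k$'s support after period $t$, so the two levels face the same continuation problem from $t+1$ onward ($V^{k}_{t+1}=V^{k-1}_{t+1}$), while level $k$ has a weakly higher posterior on $x_i=X$ at $t$ (so claiming is weakly better for $k$) and a weakly lower perceived probability that the game continues (so waiting is weakly worse). Combining that lemma with the induction hypothesis that level $k-1$ claims at $t$ precisely when $\bar\alpha$ exceeds its (higher) threshold is what closes your Case B; without it, your proof of the ``if'' direction is incomplete in exactly the parameter region where lower levels stop early.
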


\begin{proof}
See Appendix \ref{appendix:proof_dirty}.
\end{proof}

To better understand this result, we can focus on the analysis of level 2 players who will 
best respond to a mixture of level 0 and level 1 players. 
Level 1 players believe the other player is non-strategic. The only useful information for level 1 players
is the announcement and what they have observed---the other player's action doesn't convey 
any information about the face type. Namely, level 1 players would view the decision at every period 
as the same problem as in period 1, and make the same choice as in period 1. 
As a result, level 1 players will always choose $U$ when $x_{-i}=X$, 
and always choose $D$ when $x_{-i}=O$.

When observing a clean face, level 2 players will know their face type immediately, \emph{and 
they know level 1 players would know when observing a clean face.} On the other hand, 
when observing a dirty face, level 2 players will use their prior beliefs to make inferences at the 
first period, and choose $U$. As the game proceeds to period 2, 
level 2 players will know it is \emph{impossible} that the other player is level 1 and observes a dirty face.
Specifically, the posterior belief $\mu^2_i(x_i,\tau_{-i}|2,X)$ is
\begin{align*}
    &\mu_i^2(X,0|2,X) = \frac{\left(\frac{1}{2}\right)p p_0}{\left(\frac{1}{2}\right) p_0 + p p_1},
    &\mu_i^2(O,0|2,X) = \frac{\left(\frac{1}{2}\right)(1-p) p_0}{\left(\frac{1}{2}\right) p_0 + p p_1},\\
    &\mu_i^2(X,1|2,X) = \frac{pp_1}{\left(\frac{1}{2}\right) p_0 + p p_1},
    &\mu_i^2(O,1|2,X) = 0. \;\;\;\;\;\;\;\;\;\;\;\;\;\;\;\;\;\;\;
\end{align*}

At period 2, level 2 players would believe there is only 50\% chance
that the game continues to this period if the other player is level 0. In addition, 
they believe the game will end in period 1 if $(x_i, \tau_{-i}) = (O, 1)$
while the game will continue to period 2 if $(x_i, \tau_{-i}) = (X, 1)$.
Therefore, the marginal probability of having a dirty face is 
\begin{align*}
\mu_i^2(X|2,X) = \mu_i^2(X,0|2,X) + \mu_i^2(X,1|2,X) = \frac{p\left[\left(\frac{1}{2}\right)p_0 + p_1 \right]}{\left(\frac{1}{2}\right) p_0 + p p_1} > p,
\end{align*}
suggesting that level 2 players would think their face is more likely to be dirty. 

Since the impossibility of $(x_i, \tau_{-i})=(O,1)$ has been updated in period 2,
the only information to learn in later periods is that the other player is less likely to be level 0 if the 
game continues. For any period $2\leq t \leq T$, the
marginal probability of having a dirty face is 
\begin{align*}
\mu_i^2(X|t,X) = \frac{p\left[\left(\frac{1}{2}\right)^{t-1}p_0 + p_1 \right]}{\left(\frac{1}{2}\right)^{t-1} p_0 + p p_1},
\end{align*}
which is an increasing function of $t$. This suggests that in later periods, level 2 players are
more certain about having a dirty face. In other words, level 2 players can benefit from 
waiting in order to get more information. Yet, the risks of choosing $U$ are that 
the other player may (randomly) end the game and the utility will be discounted. 
Hence, this problem becomes a sequential sampling problem similar to 
\cite{wald1947sequential} where level 2 players decide when to stop sampling and claim the dirty face.

To solve the problem, we need to calculate level 2 player's expected payoff. For any $2\leq t \leq T$, level 2 player $i$'s expected payoff of choosing $D$ at period $t$ is
$$\mathbb{E}u_i^2(D|t) := \delta^{t-1}\left[\alpha \mu^2_i(X|t,X) - \mu^2_i(O|t,X)\right]. $$
Note that since period $T$ is the last period, 
it is optimal for level 2 players to choose $D$ if and only if 
$$\mathbb{E}u_i^2(D|T)\geq 0 \iff \alpha \geq \frac{\mu^2_i(O|T,X)}{\mu^2_i(X|T,X)}
= \left(\frac{1-p}{p}\right)\left(\frac{\left(\frac{1}{2}\right)^{T-1}p_0}{\left(\frac{1}{2}\right)^{T-1}p_0 + p_1} \right).$$
For other periods between $2$ and $T-1$, it is optimal to choose $D$ at some period $t$ only if 
$$\mathbb{E}u_i^2(D|t) \geq \Pr(t+1 | t,X)\mathbb{E}u_i^2(D|t+1),$$
where $\Pr(t+1 | t,X)$ is level 2 player 2's belief of the probability that the other player would choose 
$U$ in period $t$.\footnote{Specifically, $\Pr(t+1 | t,X)$ can be calculated from level 2 player's posterior belief where
\begin{align*}
    \Pr(t+1 | t,X) = \frac{1}{2} \mu^2_i(0|t,X) + \mu^2_i(1|t,X)
            = \frac{\left(\frac{1}{2}\right)^{t}p_0 + p p_1}{\left(\frac{1}{2}\right)^{t-1}p_0 + p p_1 }.
\end{align*}
}
As we rearrange the inequality, we can obtain the condition stated in Proposition \ref{prop_extensive_dirty}. 
The proof in Appendix \ref{appendix:proof_dirty} shows the condition is not only necessary
but sufficient to make level 2 players to choose $D$ in period $t$.
Moreover, we prove the statement by induction on $k$, demonstrating that every level $k> 2$ player is in fact
facing with a similar sequential sampling problem as level 2 players.

Proposition \ref{prop_extensive_dirty} characterizes the level-dependent behavioral strategies. 
Alternatively, we can characterize the solution by computing the level-dependent stopping period for any $x_{-i}$.
\begin{definition}[Stopping Period]\label{def_optimal_stopping}
For any two-person dirty faces games and its dynamic CH solution, let $\hat{\sigma}^{k}_i (x_{-i})$ be level $k$ player $i$'s earliest period to choose $D$ conditional on observing $x_{-i}$ for any $k\geq 1$ and $i\in N$. Specifically, 
\begin{align*}
    \hat{\sigma}_i^k(x_{-i}) = 
    \begin{cases}
    \arg\min_{t}\left\{ \sigma_i^k(t,x_{-i}) = 1 \right\},  \;\;\;\; \mbox{ if } \exists t \mbox{ s.t. }  \sigma_i^k(t,x_{-i}) = 1 \\
    T+1, \;\;\;\;\;\;\;\;\;\;\;\;\;\;\;\;\;\;\;\;\;\;\;\;\;\;\;\;\;\;\;\;\;\;
    \mbox{ otherwise. }
    \end{cases}
\end{align*}
\end{definition}

With Proposition \ref{prop_extensive_dirty}, 
Corollary \ref{coro_extensive_dirty} follows directly.
If $x_{-i} = O$, every strategic level of players would always choose $D$. 
Therefore, $\hat{\sigma}_i^k(O)=1$ for every $k\geq 1$
as $\sigma_i^k(t,O)=1$ for every $1\leq t \leq T$ and $k\geq 1$.
Focusing on the case where $x_{-i} = X$, Corollary \ref{coro_extensive_dirty} shows that 
the optimal stopping period is decreasing in $k$. 
In other words, high-level players are more rational 
in the sense that they can figure out their face type in fewer stages.

\begin{corollary}\label{coro_extensive_dirty}
For any two-person dirty faces game, the level-dependent strategy profile of the
dynamic CH solution can be equivalently 
characterized by level-dependent optimal stopping periods. For any $i\in N$,
\begin{itemize}
    \item[1.]  $\hat{\sigma}_i^k(O)=1$;
    \item[2.]  $\hat{\sigma}_i^1(X)=T+1$ and $\hat{\sigma}_i^k(X)\geq 2$ for all $k\geq 2$. Moreover, level $k\geq 2$ players' optimal stopping periods are weakly decreasing in $k$. 
\end{itemize}
\end{corollary}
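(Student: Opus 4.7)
The plan is to deduce this corollary almost entirely by inspection of Proposition \ref{prop_extensive_dirty}, which already pins down $\sigma_i^k(t, x_{-i})$ as a deterministic $\{0,1\}$-valued choice at every $(k, t, x_{-i})$ with $k\geq 1$. Because each level's behavioral strategy is deterministic on the relevant information sets, the stopping period $\hat\sigma_i^k(x_{-i})$ from Definition \ref{def_optimal_stopping} is well defined and encodes $\sigma_i^k$ completely (set $\sigma_i^k(t, x_{-i})=1$ for $t = \hat\sigma_i^k(x_{-i})$ and $0$ otherwise, with the convention that $T+1$ means ``never claim''). This gives the equivalence claim, so what remains is to verify the three itemized properties.

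Items 1 and 2 are essentially translations of parts of Proposition \ref{prop_extensive_dirty}. For item 1, part 1 of the proposition says $\sigma_i^k(t, O)=1$ for every $k\geq 1$ and every $t$, so in particular $\sigma_i^k(1, O) = 1$ and hence $\hat\sigma_i^k(O) = 1$. For item 2, part 2 of the proposition says $\sigma_i^1(t, X) = 0$ for all $t$, which by Definition \ref{def_optimal_stopping} gives $\hat\sigma_i^1(X) = T+1$; and part 2(1) says $\sigma_i^k(1, X) = 0$ for $k\geq 2$, so the earliest claiming period is at least $2$.

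Item 3 is the only substantive piece, and the plan is to read off monotonicity in $k$ from the explicit thresholds in parts 2(2) and 2(3) of Proposition \ref{prop_extensive_dirty}. I would define, for each $t\in \{2,\ldots, T\}$, the threshold $\bar\alpha^*_k(t)$ such that level $k\geq 2$ claims at period $t$ (conditional on reaching $t$ and observing $X$) if and only if $\bar\alpha \geq \bar\alpha^*_k(t)$. For $2\leq t\leq T-1$,
\begin{equation*}
\bar\alpha^*_k(t) = \frac{A_t\, p_0}{A_t\, p_0 + (1-\delta)\sum_{j=1}^{k-1} p_j}, \qquad A_t \equiv \left(\tfrac{1}{2}\right)^{t-1} - \left(\tfrac{1}{2}\right)^t \delta > 0,
\end{equation*}
and analogously for $t = T$ with $A_T = (1/2)^{T-1}$ and $(1-\delta)$ replaced by $1$. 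Since $p_j>0$ for every $j$, $\sum_{j=1}^{k-1} p_j$ is strictly increasing in $k$, so $\bar\alpha^*_k(t)$ is strictly decreasing in $k$ at every $t$. Hence the set $\mathcal{T}_k \equiv \{t\in\{2,\ldots, T\} : \sigma_i^k(t, X) = 1\}$ satisfies $\mathcal{T}_k \subseteq \mathcal{T}_{k+1}$, and so $\min \mathcal{T}_{k+1} \leq \min \mathcal{T}_k$ (with the convention $\min \emptyset = T+1$), which is exactly $\hat\sigma_i^{k+1}(X) \leq \hat\sigma_i^k(X)$.

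The only mild obstacle is bookkeeping for the endpoint $t = T$, where the threshold takes a slightly different algebraic form, and for the ``never claim'' case where $\mathcal{T}_k = \emptyset$; both are handled by treating the $T+1$ convention uniformly so that shrinking thresholds can only pull the earliest claiming period weakly to the left. Nothing new is needed beyond Proposition \ref{prop_extensive_dirty} and the strict positivity of each $p_j$.
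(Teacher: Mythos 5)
Your treatment of the three itemized claims is correct and is in substance the paper's own proof: items 1 and 2 are read off Proposition \ref{prop_extensive_dirty} via Definition \ref{def_optimal_stopping}, and your monotonicity argument for item 3 --- the cutoff $\bar{\alpha}^*_k(t)$ is strictly decreasing in $k$ because $\sum_{j=1}^{k-1}p_j$ is strictly increasing and $p_j>0$, with the $t=T$ cutoff handled separately --- is exactly the paper's case split ($t=T$ versus $2\leq t\leq T-1$), merely rephrased as the inclusion $\mathcal{T}_k\subseteq \mathcal{T}_{k+1}$ and the resulting inequality of minima.

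The one flaw is the parenthetical reconstruction rule you give for the equivalence claim: setting $\sigma_i^k(t,x_{-i})=1$ only at $t=\hat{\sigma}_i^k(x_{-i})$ and $0$ otherwise does \emph{not} reproduce the DCH strategy. Proposition \ref{prop_extensive_dirty} characterizes claiming by cutoffs that are decreasing in $t$: for $2\leq t\leq T-1$ the coefficient $(\frac{1}{2})^{t-1}-(\frac{1}{2})^{t}\delta=(\frac{1}{2})^{t-1}(1-\frac{\delta}{2})$ falls with $t$, and the period-$T$ cutoff is lower still (compare $(\frac{1}{2})^{T-1}(1-\delta)$ with $(\frac{1}{2})^{T-2}-(\frac{1}{2})^{T-1}\delta$). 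Hence once the cutoff is met it is met at every later period, so the DCH strategy has $\sigma_i^k(t,X)=1$ for \emph{all} $t\geq \hat{\sigma}_i^k(X)$, not only at the stopping period (and $\sigma_i^k(t,O)=1$ at every $t$, not only at $t=1$). The faithful encoding is therefore ``claim if and only if $t\geq\hat{\sigma}_i^k(x_{-i})$,'' which is also what underlies the paper's characterization $\hat{\sigma}_i^k(X)=t \iff \sigma_i^k(t-1,X)=0$ and $\sigma_i^k(t,X)=1$. This is a one-line repair and does not affect your arguments for items 1--3, but as written your stated bijection between strategies and stopping periods is not the DCH profile the corollary refers to.
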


\begin{proof}
By Proposition \ref{prop_extensive_dirty}, we know $\sigma_i^k(t,0)=1$ for all $t$ 
and $k\geq 1$, and $\sigma_i^1(t,X) = 0$ for all $t$. Then by Definition \ref{def_optimal_stopping}, we can 
obtain that $\hat{\sigma}_i^k(O)=1$ for all $k\geq 1$,
and $\hat{\sigma}_i^1(X)=T+1$.  
In addition, since $\sigma_i^k(1,X) = 0$ for all $k\geq 2$, $\hat{\sigma}_i^k(X)\neq 1$.
Moreover, we can characterize the dynamic CH solution 
by the optimal stopping period because for any $t\geq 2$ and $k\geq 2$,
\begin{align*}
\hat{\sigma}_i^k(X) = t &\iff \sigma_i^k(t-1,X)= 0 \;\;
    \mbox{ and }\;\; \sigma_i^k(t,X)= 1,    \\
\hat{\sigma}_i^k(X) = T+1 &\iff \sigma_i^k(t',X)=0 \;\; \mbox{ for any } 1\leq t'\leq T.
\end{align*}

Finally, to show the monotonicity, it suffices to show that for any $k' > k \geq 2$ and any $2\leq t \leq T$, 
if $\sigma_i^k(t,X)= 1$, then $\sigma_i^{k'}(t,X)= 1$. We can separate the discussion into two cases.
First, if $t=T$, then by Proposition \ref{prop_extensive_dirty}, we know $\sigma_i^k(T,X)= 1$ suggests 
\begin{align*}
\alpha \geq \left(\frac{1-p}{p}\right)\left(\frac{\left(\frac{1}{2}\right)^{T-1}p_0}{\left(\frac{1}{2}\right)^{T-1}p_0 + \sum_{j=1}^{k-1}p_j} \right) 
 > \left(\frac{1-p}{p}\right)\left(\frac{\left(\frac{1}{2}\right)^{T-1}p_0}{\left(\frac{1}{2}\right)^{T-1}p_0 + \sum_{j=1}^{k'-1}p_j} \right),
\end{align*}
which implies $\sigma_i^{k'}(T,X)= 1$. Second, for any $2\leq t \leq T-1$, 
we can obtain from Proposition \ref{prop_extensive_dirty} that $\sigma_i^k(t,X)= 1$ suggests 
\begin{align*}
    \alpha &\geq \left(\frac{1-p}{p}\right)\left(\frac{\left[ \left(\frac{1}{2}\right)^{t-1} - \left(\frac{1}{2}\right)^{t}\delta \right]p_0}{\left[ \left(\frac{1}{2}\right)^{t-1} - \left(\frac{1}{2}\right)^{t}\delta \right]p_0 + (1-\delta)\sum_{j=1}^{k-1}p_j} \right) \\
            &> \left(\frac{1-p}{p}\right)\left(\frac{\left[ \left(\frac{1}{2}\right)^{t-1} - \left(\frac{1}{2}\right)^{t}\delta \right]p_0}{\left[ \left(\frac{1}{2}\right)^{t-1} - \left(\frac{1}{2}\right)^{t}\delta \right]p_0 + (1-\delta)\sum_{j=1}^{k'-1}p_j} \right),
\end{align*}
implying that $\sigma_i^{k'}(t,X)= 1$. This completes the proof.
\end{proof}

\subsubsection{Visualization}

To summarize the characterization of the dynamic CH solution, 
we illustrate the optimal stopping periods for 
level 2 and level infinity players when $x_{-i} = X$.
For illustrative purposes, we assume $p=0.5$ and $T=5$ so that the
set of two-person dirty faces games $\mathcal{D}_2$ can simply be
described by two parameters $(\delta, \alpha)$. When $p=0.5$, 
Assumption (\ref{payoff_assumption}) is equivalent to $0<\alpha<1$. 
Therefore, $\mathcal{D}_2$ is the unit square on the $(\delta, \alpha)$-plane.

In addition, we assume the distribution of levels follows 
Poisson(1.5), which is a ``good omnibus guess'' of prior according to 
\cite{camerer2004cognitive}. 
Once the distribution of levels is specified, we can solve for the dynamic CH solution 
(and the level-dependent optimal stopping periods) by Proposition \ref{prop_extensive_dirty}.
Figure \ref{fig:dynamic_CH_solution} plots the level-dependent optimal stopping periods for 
level 2 and level infinity players in the left and right panel, respectively.
From this figure, we can find that the optimal stopping periods form a 
partition of the set of dirty faces games.
For instance, level 2 players would choose $D$ at period 2 if and only if 
\begin{align*}
    \alpha \geq \left(\frac{1-p}{p}\right)\left(\frac{ \left(\frac{1}{2} - \frac{1}{4}\delta\right)p_0}{\left(\frac{1}{2} - \frac{1}{4}\delta\right)p_0 + (1-\delta)p_1} \right)
    = \frac{ \left(\frac{1}{2} - \frac{1}{4}\delta\right)e^{-1.5}}{\left(\frac{1}{2} - \frac{1}{4}\delta\right)e^{-1.5} + (1-\delta)1.5e^{-1.5}} = \frac{2-\delta}{8-7\delta},
\end{align*}
which corresponds to the red area in the left panel.

Focusing on level 2 players (left panel), we can find that dynamic CH model predicts that 
it is possible for them to choose any stopping period in $\{2,3,4,5,6 \}$,
depending on the parameters $\alpha$ and $\delta$.
This result contrasts to the equilibrium which predicts they will choose 2 in any 
dirty faces game. 
More surprisingly, dynamic CH model makes similar predictions even for 
level infinity players (right panel)---no matter how sophisticated the players are, 
their behavior is dependent with the parameters. 
This is because they always believe 
it is possible that the other player's action is due to randomness. 
Therefore, whenever the reward is not high enough, every level of players would 
have the incentive to strategically delay in order to get more information.

\begin{figure}[htbp!]
    \centering
    \includegraphics[width = 
    \textwidth]{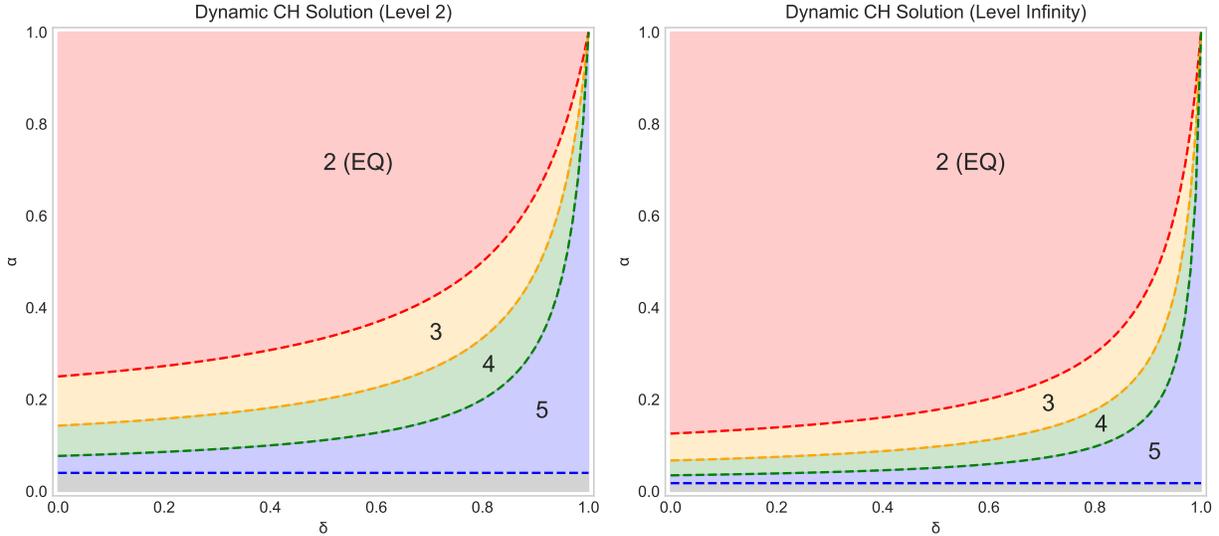}
    \caption{Optimal stopping periods for level 2 (left) and level $\infty$ players (right) as $x_{-i} = X$ where $p=0.5$, $T=5$, and the levels are drawn from Poisson(1.5).}
    \label{fig:dynamic_CH_solution}
\end{figure}

\subsection{Static CH Solution for Two-Person Dirty Faces Game}\label{sec:normal_dirty_face}

Since the static CH solution is defined on simultaneous move games, to solve the 
static CH for dirty faces games, we need to transform the original game 
into its corresponding reduced normal form. That is, 
the dirty faces game is alternatively specified as as a static Bayesian game 
where players simultaneously decide the earliest period to choose $D$ 
conditional on $x_{-i}$.
Specifically, each player has $T+1$ actions, corresponding to the earliest period to choose $D$ 
or never $D$. 
Hence, the set of actions is $S = \{1,2, \ldots, T, T+1 \}$ where $T+1$ corresponds to never $D$. 
In any strategic form two-person dirty faces game, 
a mixed strategy for player $i$ is a function from the other player's face type
to a probability distribution over the action space, i.e.,
$$\tilde{\sigma}_i: \{O,X \} \rightarrow \Delta(S). $$

In the static CH solution, players' strategies are also level-dependent. 
We use $\tilde{\sigma}_i^k(x_{-i})$ to denote level $k$ player $i$'s strategy.
Following the analysis of the dynamic CH solution, we let the level of 
sophisticated of each player be independently and identically 
drawn from the distribution $p=(p_k)_{k=0}^{\infty}$ where $p_k>0$ for 
all $k$ and keep assuming the levels and face types are independent.
Level 0 players will uniformly randomize at every state, so 
$\tilde{\sigma}_i^0(x_{-i}) = \frac{1}{T+1}$ for all $i, x_{-i}$.
In the static CH solution, level $k\geq 1$ players will
generically choose pure strategies. Therefore,
in the following, we slightly abuse the notation to 
use $\tilde{\sigma}_i^k(x_{-i})$ to denote 
the pure strategies.\footnote{Specifically, for any $t\in\{1,2,\ldots, T, T+1\}$, we use $\tilde{\sigma}_i^k(x_{-i})=t$ 
to denote the degenerated distribution:
$$\tilde{\sigma}_i^k(x_{-i})(t)=1, \;\;\mbox{ and }\;\;
\tilde{\sigma}_i^k(x_{-i})(t')=0 \;\; \forall \; t'\neq t. $$}

Here we emphasize the two main differences between the dynamic and static CH 
solutions. First, level 0 players are mixing on different sets of actions. 
The static CH solution is defined on strategic form games, and hence 
level 0 players will randomize on the set of contingent strategies. 
On the contrary, level 0 players in the dynamic CH solution will randomize 
at every information set.  Although level 0 players uniformly randomize 
in both CH solutions, this difference will make level 0 players to have
generically different choice probabilities conditional on reaching each 
information set.\footnote{For instance, in the first period of the dirty faces game,
the dynamic CH solution predicts level 0 players will choose $D$ with probability $1/2$, 
while the static CH solution predicts the probability of level 0 players choosing $D$ is 
$1/(T+1)$.}

Second, the spirits of ``learning'' are different in the dynamic and static CH solutions.
In the dynamic CH solution, players can observe other players' past actions, and 
make \emph{statistical inferences} on other players' levels of sophistication 
and basic game structures. On the other hand, players in the static CH cannot 
observe past actions (since it is defined on simultaneous games),
they have to make decisions conditional on \emph{hypothetical events}. 
As experimentally examined by \cite{esponda2014hypothetical}, making decisions 
conditional on hypothetical events and extracting information from opponents' 
strategies are behaviorally different types of learning.

In the following, we solve for the static CH solution of the two-person 
dirty faces games. Notice that 
the equilibrium analysis of the strategic form is essentially the same as 
the analysis of the extensive form. When observing a clean face, 
players can learn that they have a dirty face immediately, and hence choose $1$.
On the other hand, when observing a dirty face, the unique equilibrium 
predicts players would choose 2, i.e., choosing $U$ in period 1 and $D$ in
period 2.

Proposition \ref{prop_strategic_dirty} is parallel to Proposition \ref{prop_extensive_dirty} that 
characterizes the static CH solution. The intuition is also similar to the dynamic CH solution. 
When observing a clean face, players can figure out their face types immediately. 
Hence, they will choose the strictly dominant strategy $\tilde{\sigma}_i^k(O)=1$ for all 
$k\geq 1$. 
On the other hand, when observing a dirty face, players have to make similar inferences as before, 
but the only available information now is the prior belief and what they observe 
at the beginning.
Players cannot update the other player's level of sophistication 
from the history---they can only make inferences hypothetically.

\begin{proposition}\label{prop_strategic_dirty}
For any two-person dirty faces game, the static CH solution can be characterized as following. For any $i\in N$,
\begin{itemize}
    \item[1.] $\tilde{\sigma}_i^k(O)=1$ for any $k\geq 1$.
    
    \item[2.] $\tilde{\sigma}_i^1(X)=T+1$. Moreover, for any $k\geq 2$, 
    \begin{itemize}
        \item[(1)] $\tilde{\sigma}_i^k(X)\geq 2$,
        \item[(2)] for any $2\leq t \leq T-1$, $\tilde{\sigma}_i^{k}(X) \leq t$ if and only if
        \begin{align*}
            \alpha \geq \left(\frac{1-p}{p}\right)\left(\frac{\left[ \frac{T+2-t}{T+1} - \frac{T+1-t}{T+1}\delta \right]p_0}{\left[ \frac{T+2-t}{T+1} - \frac{T+1-t}{T+1}\delta \right]p_0 + (1-\delta)\sum_{j=1}^{k-1}p_j} \right),
        \end{align*}
        \item[(3)] $\tilde{\sigma}_i^{k}(X) \leq T $ if and only if
        \begin{align*}
        \alpha \geq \left(\frac{1-p}{p}\right)\left(\frac{\frac{2}{T+1} p_0}{ \frac{2}{T+1}p_0 + \sum_{j=1}^{k-1}p_j} \right).   
    \end{align*}
    \end{itemize}
\end{itemize}
\end{proposition}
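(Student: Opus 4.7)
The simultaneous dirty-faces game is a one-stage simultaneous-move game with action set $S = \{1, \ldots, T+1\}$, so by the remark after Lemma~\ref{lemma_uniqueness} the DCH solution coincides with the standard CH solution. I proceed by induction on the level $k$, paralleling Proposition~\ref{prop_extensive_dirty}. The base cases are straightforward. When $x_{-i} = O$, the public announcement combined with the observation pins down $x_i = X$; since the payoff $\delta^{t-1}\alpha$ of claiming is strictly decreasing in $t$ (regardless of the opponent's action), $\tilde{\sigma}_i^k(O) = 1$ for every $k \geq 1$. For level~$1$ with $x_{-i} = X$, the truncated belief places mass one on the opponent being level $0$, whose action is uniform on $S$ and independent of $x_i$, so $\mathbb{E}[u \mid s_i = t] = \delta^{t-1}\tfrac{T+2-t}{T+1}(\alpha p - (1-p))$ for $t \leq T$, which is strictly negative by~(\ref{payoff_assumption}), while $s_i = T+1$ yields $0$. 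Hence $\tilde{\sigma}_i^1(X) = T+1$.

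For the inductive step, fix $k \geq 2$ and assume the claim for all $j < k$; set $q_j = p_j / \sum_{m=0}^{k-1} p_m$. Using the induction hypothesis---together with the monotonicity $\tilde{\sigma}_i^{j}(X) \geq \tilde{\sigma}_i^{j+1}(X)$ verified below---the joint distribution of $(x_i, s_{-i})$ given $x_{-i} = X$ decomposes as: level $0$ contributes a uniform $s_{-i}$ independent of $x_i$; level $j \geq 1$ with $x_i = X$ contributes a point mass at $s_{-i} = \tilde{\sigma}_i^j(X) \geq \tilde{\sigma}_i^{k-1}(X)$; level $j \geq 1$ with $x_i = O$ contributes a point mass at $s_{-i} = 1$. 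In the region $2 \leq t \leq \tilde{\sigma}_i^{k-1}(X)$ (``Case~1''), only the level-$0$ tail $\{t, \ldots, T+1\}$ and the $x_i = X$ contributions of levels $j \geq 1$ satisfy $s_{-i} \geq t$, yielding
\begin{equation*}
f_k(t) := \mathbb{E}[u \mid s_i = t] = \delta^{t-1}\left[ q_0\, \tfrac{T+2-t}{T+1}\,(\alpha p - (1-p)) + p\alpha \sum_{j=1}^{k-1} q_j\right].
\end{equation*}
Moreover $f_k(1) = \alpha p - (1-p) < 0$ and $f_k(T+1) = 0$, ruling out $\tilde{\sigma}_i^k(X) = 1$.

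Within Case~1, using $p\alpha = \bar{\alpha}(1-p)$ and $(1-p) - \alpha p = (1-p)(1-\bar{\alpha})$, the one-step comparison $f_k(t) \geq f_k(t+1)$ rearranges to
\begin{equation*}
\bar{\alpha} \;\geq\; \frac{q_0\!\left[\tfrac{T+2-t}{T+1} - \tfrac{T+1-t}{T+1}\delta\right]}{q_0\!\left[\tfrac{T+2-t}{T+1} - \tfrac{T+1-t}{T+1}\delta\right] + (1-\delta)\sum_{j=1}^{k-1}q_j},
\end{equation*}
which, after multiplying numerator and denominator by $\sum_{m=0}^{k-1} p_m$, is exactly the threshold of part~(2). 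Part~(3) follows analogously from $f_k(T) \geq f_k(T+1) = 0$, which directly yields $\bar{\alpha} \geq \tfrac{\tfrac{2}{T+1}p_0}{\tfrac{2}{T+1}p_0 + \sum_{j=1}^{k-1}p_j}$. The right-hand side of the threshold is strictly decreasing in $t$, so once the one-step inequality holds it keeps holding for larger $t$, making $f_k$ unimodal on Case~1 and identifying its argmax there as the smallest $t$ meeting the inequality. The threshold is also strictly decreasing in $k$, which gives the monotonicity $\tilde{\sigma}_i^k(X) \leq \tilde{\sigma}_i^{k-1}(X)$ required to sustain the induction.

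The main obstacle is ensuring that the global optimum of $f_k$ over $S$ lies in Case~1 rather than in the ``Case~2'' region $t > \tilde{\sigma}_i^{k-1}(X)$, where the level-$(k-1)$ mass at $s_{-i} = \tilde{\sigma}_i^{k-1}(X)$ drops out of $\{s_{-i} \geq t\}$. The resolution is that crossing the boundary from $\tilde{\sigma}_i^{k-1}(X)$ to $\tilde{\sigma}_i^{k-1}(X) + 1$ subtracts the positive term $q_{k-1}\,p\alpha$ from the bracket while still multiplying by the extra $\delta$; combined with the unimodality established in Case~1 and the downward-in-$k$ monotonicity of the thresholds, this rules out any Case~2 maximum and closes the induction.
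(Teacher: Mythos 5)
Your overall route is the paper's route: reduce the simultaneous game to the standard CH solution, handle $x_{-i}=O$ and level 1 directly, then induct on $k$ using the expected-payoff formula that is valid when all strategic levels below $k$ stop no earlier than the candidate period, and read off the thresholds from the one-step comparisons $f_k(t)\geq f_k(t+1)$ and $f_k(T)\geq 0$. Your algebra for those thresholds, the monotonicity of the thresholds in $t$ (including the junction with the part-(3) threshold at $T$, which you assert but do not check; it reduces to $3-2\delta>2-2\delta$) and in $k$, and the resulting characterization of the stopping period as the smallest $t$ passing the threshold all match the paper's Steps 1, 3 and 4. The one structural difference is that the paper proves the level-monotonicity $\tilde{\sigma}_i^{k+1}(X)\leq\tilde{\sigma}_i^{k}(X)$ as a standalone lemma (its Step 2) by a revealed-preference chain of payoff inequalities, and then uses it in the induction, whereas you obtain it ex post from the thresholds being decreasing in $k$; that is legitimate and not circular, since at level $k$ you only need monotonicity among levels below $k$, which the induction hypothesis delivers.

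The one genuine soft spot is exactly the step you flag: ruling out an optimum in the ``Case 2'' region $t>\tilde{\sigma}_i^{k-1}(X)$. As written, your resolution only controls the first crossing (and, incidentally, the mass that drops there is that of \emph{all} lower levels stopping at $\tilde{\sigma}_i^{k-1}(X)$, not just $q_{k-1}$), and ``unimodality established in Case 1'' does not by itself say anything about periods deeper in Case 2, where further masses drop out at each lower-level stopping period. The clean completion, which your own observations already contain, is: for every $t\geq 2$ the true expected payoff of claiming at $t$ equals $\delta^{t-1}$ times a bracket obtained from the Case-1 bracket by deleting some of the nonnegative terms $p\alpha q_j$, hence is bounded above by $f_k(t)$ for \emph{all} $t$, with equality on $t\leq \tilde{\sigma}_i^{k-1}(X)$; and the decreasing-threshold computation is a statement about the formula $f_k$ itself, so it gives $f_k(t)\leq f_k(t^*)$ for every $t\geq t^*$, where $t^*$ is the smallest period passing the level-$k$ threshold and lies in Case 1 because the thresholds decrease in $k$. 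Combining the pointwise bound with this global unimodality of $f_k$ shows no Case-2 period can beat $t^*$, which is precisely the role played by the chain of inequalities in the paper's Step 2. With that sentence added, your argument is complete and equivalent to the paper's.
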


\begin{proof}
See Appendix \ref{appendix:proof_dirty}.
\end{proof}

The result is similar to Proposition \ref{prop_extensive_dirty} in the sense that 
when $x_{-i} = X$, strategic players will not choose 1; 
instead, they will choose to claim a dirty face at period $t$
if and only if the reward $\alpha$ is sufficiently high or the player is impatient enough.
However, the critical level of $\alpha$ is different. 
In the static CH solution, the critical level of $\alpha$ depends on 
the horizon $T$ while it is independent of $T$
in the dynamic CH solution. We will discuss this contrast later.

Although the dynamic CH solution and the 
static CH solution are quantitatively different, we can still obtain the similar 
qualitative result as Corollary \ref{coro_extensive_dirty}---higher-level 
players claim having a dirty face
earlier than lower-level players.

\begin{corollary}
For any strategic form two-person dirty faces game, any $k\geq 2$ and any $i\in N$, 
when $x_{-i}=X$, the static CH solution $\tilde{\sigma}_i^k(X)\geq 2$ and is 
weakly decreasing in $k$.
\end{corollary}

\begin{proof}
It suffices to prove the monotonicity by showing for all $k'>k\geq 2$, if $\tilde{\sigma}_i^k(X)\leq t$, then 
$\tilde{\sigma}_i^{k'}(X)\leq t$ for any $2\leq t\leq T$. 
We can separate the analysis into two cases. First, if $t=T$, then by Proposition \ref{prop_strategic_dirty},
we know $\tilde{\sigma}_i^k(X)\leq T$ suggests
\begin{align*}
        \alpha \geq \left(\frac{1-p}{p}\right)\left(\frac{\frac{2}{T+1} p_0}{ \frac{2}{T+1}p_0 + \sum_{j=1}^{k-1}p_j} \right) > \left(\frac{1-p}{p}\right)\left(\frac{\frac{2}{T+1} p_0}{ \frac{2}{T+1}p_0 + \sum_{j=1}^{k'-1}p_j} \right),
\end{align*}
which implies $\tilde{\sigma}_i^{k'}(X)\leq T$. Second, for any $2\leq t \leq T-1$, we 
can obtain from Proposition \ref{prop_strategic_dirty} that $\tilde{\sigma}_i^k(X)\leq t$ suggests
\begin{align*}
    \alpha &\geq \left(\frac{1-p}{p}\right)\left(\frac{\left[ \frac{T+2-t}{T+1} - \frac{T+1-t}{T+1}\delta \right]p_0}{\left[ \frac{T+2-t}{T+1} - \frac{T+1-t}{T+1}\delta \right]p_0 + (1-\delta)\sum_{j=1}^{k-1}p_j} \right)\\
    & > \left(\frac{1-p}{p}\right)\left(\frac{\left[ \frac{T+2-t}{T+1} - \frac{T+1-t}{T+1}\delta \right]p_0}{\left[ \frac{T+2-t}{T+1} - \frac{T+1-t}{T+1}\delta \right]p_0 + (1-\delta)\sum_{j=1}^{k'-1}p_j} \right),
\end{align*}
implying $\tilde{\sigma}_i^{k'}(X)\leq t$ by Proposition \ref{prop_strategic_dirty}. This completes the proof.
\end{proof}

\subsubsection{Visualization}

Similar to the analysis of the dynamic CH solution, 
we illustrate the static CH optimal strategies (stopping periods) for 
level 2 and level infinity players when $x_{-i} = X$.
For illustrative purposes, we again assume $p=0.5$ and $T=5$ so that $\mathcal{D}_2$ 
is simply described by two parameters $(\delta, \alpha)$. When $p=0.5$, 
Assumption (\ref{payoff_assumption}) is equivalent to $0<\alpha<1$, 
so $\mathcal{D}_2$ is the unit square on the $(\delta, \alpha)$-plane.
Moreover, to compare the static and dynamic CH solutions, 
we assume the distribution of levels follows Poisson(1.5).

From Figure \ref{fig:static_CH_solution}, we can observe that the static CH solution
is similar to the dynamic solution. Both solutions predict
players' optimal stopping periods depend on the reward 
$\alpha$ and their patience $\delta$. If $\alpha$ is 
sufficiently low, players will never claim a dirty face, 
no matter how high the players' levels are.
In the next section, we will compare both solutions 
and characterize how players will behave differently in 
extensive form and strategic form games.

\begin{figure}[htbp!]
    \centering
    \includegraphics[width=\textwidth]{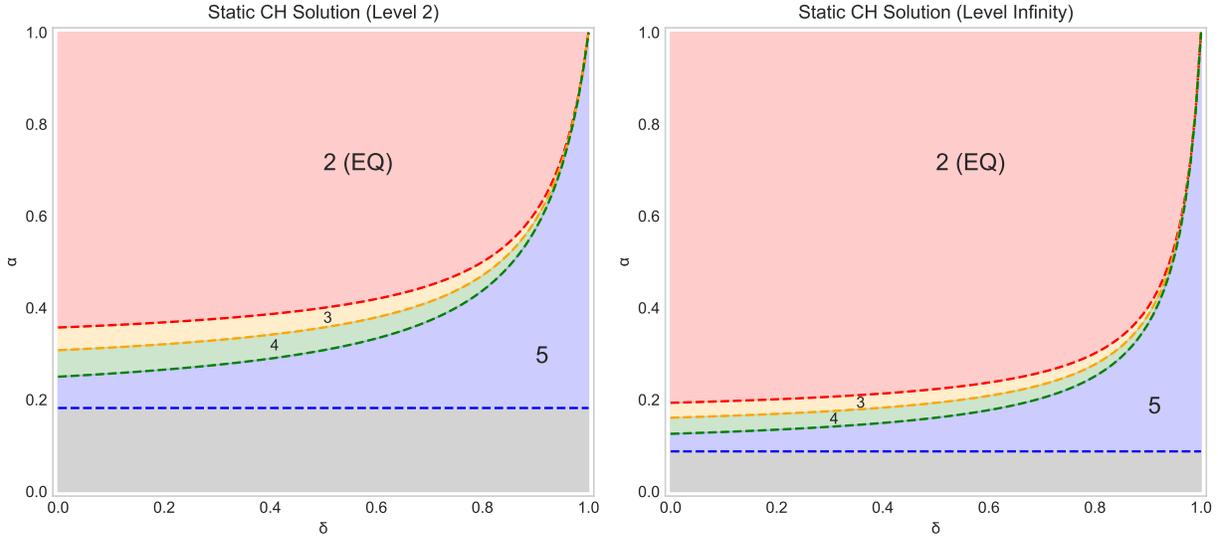}
    \caption{Static CH solution for level 2 (left) and level $\infty$ players (right) 
    as $x_{-i} = X$ where $p=0.5$, $T=5$, and the levels are drawn from Poisson(1.5).}
    \label{fig:static_CH_solution}
\end{figure}

\subsection{Representation Effect}
\label{sec:representation_effect}

As discussed in the previous section, besides level 0 players' behavior, 
the concepts of learning are completely different in the dynamic and static CH solutions. 
Therefore, if the game is in extensive form, static CH is essentially a misspecified model.
The comparison between the dynamic and static CH is then quantifying how much 
the prediction will be distorted in the wrong model. 
However, if we view both models as correctly specified models, the difference
between the dynamic and static CH models is the behavioral difference 
in different \emph{representations} of the game.

The representation effect is still an unsolved debate in experimental economics.
As surveyed by \cite{brandts2011strategy}, there is no definitive answer to 
whether the direct-response method (extensive form representation) is 
behaviorally equivalent to the strategy method (strategic form representation).
In addition to experimental work, there is some theoretical work attempting 
to provide insights from a different angle. For instance, \cite{lin2022cognitive} 
applies the dynamic CH model to centipede games and show that players tend to 
terminate the game earlier in the extensive form representation than 
the strategic form, which agrees with the empirical pattern found in \cite{garcia2020hot}.

To this end, we here analyze the representation effect of the two-person dirty faces game 
with the CH models.
To formally compare the two solutions, we define the 
following two ways to partition the set of 
dirty faces games $\mathcal{D}_2$.
Because strategic players will choose $D$ immediately when observing a clean face 
no matter in which representation, we focus on the situation where $x_{-i}=X$.
First, we can partition the set of dirty faces games 
(with extensive form representation) based on the dynamic CH solution. 
For any $t\geq 2$ and $k\geq 1$,
let $\mathcal{E}_{t}^k$ be the set of dirty faces games 
where $\hat{\sigma}_i^k(X)\leq t$. 
For level 1 players, since $\hat{\sigma}_i^1(X)=T+1$ by Corollary \ref{coro_extensive_dirty},
we know $\mathcal{E}_{t}^1 = \emptyset$ for all $t=2,\ldots, T$, and $\mathcal{E}_{T+1}^1 = 
\mathcal{D}_2$.
For higher-level players, 
$\mathcal{E}_{t}^k$ can also be visualized in Figure \ref{fig:dynamic_CH_solution}. For instance, $\mathcal{E}_{2}^2$
corresponds to the ``2 (EQ)'' area in the left panel.\footnote{Formally,
when $p=0.5$, $T=5$, and the distribution of levels is Poisson(1.5), 
$\mathcal{E}_{2}^2$ is characterized by:
$$(\delta, \alpha)\in \mathcal{E}_{2}^2 \iff \frac{2-\delta}{8-7\delta} \leq \alpha 
< 1 \;\; \mbox{ and } \;\; 0< \delta <1. $$}

Second, we can also partition the set $\mathcal{D}_2$ 
(with strategic form representation) based on the static CH solution. 
Namely, for any $t\geq 2$ and $k\geq 1$, we can
define $\mathcal{S}_{t}^k$ as the set of dirty faces games 
where $\tilde{\sigma}_i^k(X)\leq t$. This partition is illustrated in 
Figure \ref{fig:static_CH_solution}.
Proposition \ref{prop:dirty_representation} compares the dynamic 
and static CH solutions by the set inclusions of $\mathcal{E}_{t}^k$ and $\mathcal{S}_{t}^k$.

\begin{proposition}\label{prop:dirty_representation}
Consider any $T\geq 2$ and the set of two-person 
dirty faces games. For any level $k\geq 2$, the following relationships hold.
\begin{itemize}
    \item[1.] $\mathcal{S}_{T}^k \subset \mathcal{E}_{T}^k$.
    \item[2.] $\mathcal{S}_{t}^k \subset \mathcal{E}_{t}^k$
    for any $ \left[\ln(T+1)/\ln{2}\right] \leq t \leq T-1$.
    \item[3.] There is no set inclusion relationship between 
    $\mathcal{S}_{t}^k$ and $\mathcal{E}_{t}^k$ for 
    $ 2 \leq t < \left[\ln(T+1)/\ln{2}\right]$. 
    Moreover, for any $i\in N$, there exists 
    $\overline{\delta}(T,t) \in (0,1)$ such that 
    $t=\hat{\sigma}_i^k(X) \leq \tilde{\sigma}_i^k(X)$
    if $\delta \leq \overline{\delta}(T,t)$ and $\hat{\sigma}_i^k(X) \geq \tilde{\sigma}_i^k(X)=t$ if $\delta > \overline{\delta}(T,t)$. Specifically,
    \begin{align*}
        \overline{\delta}(T,t) = \frac{(2^t-2)(T+1)-(t-1)2^t}{(2^t-1)(T+1)-t2^t}.
    \end{align*}
\end{itemize}
\end{proposition}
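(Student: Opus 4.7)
My strategy is to exploit a common structural feature of the thresholds in Propositions \ref{prop_extensive_dirty} and \ref{prop_strategic_dirty}. In both the sequential and simultaneous games, the condition ``level $k\geq 2$ claims by period $t$ when observing $X$'' takes the form $\bar{\alpha}\geq f(A) := A p_0/(Ap_0+B)$, where $B$ equals $(1-\delta)\sum_{j=1}^{k-1}p_j$ for $2\le t\le T-1$ and $\sum_{j=1}^{k-1}p_j$ for $t=T$, and $A$ is a game-dependent coefficient:
\begin{align*}
A_t^{E}(\delta) &= \tfrac{1}{2^{t-1}}-\tfrac{\delta}{2^t}\quad\text{and}\quad A_t^{S}(\delta)=\tfrac{T+2-t}{T+1}-\tfrac{T+1-t}{T+1}\delta,\\
A_T^{E} &= \tfrac{1}{2^{T-1}},\qquad A_T^{S}= \tfrac{2}{T+1}.
\end{align*}
Since $f$ is strictly increasing in $A$ (as $B>0$), the set inclusion $\mathcal{S}_t^k\subseteq \mathcal{E}_t^k$ is equivalent to the pointwise inequality $A_t^{S}(\delta)\geq A_t^{E}(\delta)$ on $(0,1)$, with strict set inclusion whenever the inequality is strict. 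This reduces the proposition to comparing two explicit affine functions of $\delta$.

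For part 1, since $A_T^{E}$ and $A_T^{S}$ are $\delta$-independent constants, the inclusion $\mathcal{S}_T^k\subsetneq \mathcal{E}_T^k$ amounts to $2/(T+1)>1/2^{T-1}$, i.e. $2^{T}>T+1$, which holds for every $T\geq 2$. For part 2, I will check the endpoint inequalities at $\delta=0$ and $\delta=1$; since both $A_t^{E}$ and $A_t^{S}$ are affine in $\delta$, the inequality on the unit interval follows. At $\delta=1$ the condition $A_t^{S}(1)\geq A_t^{E}(1)$ collapses to $2^t\geq T+1$, i.e. $t\geq \log_2(T+1)$. At $\delta=0$ the condition becomes $(T+2-t)\,2^{t-1}\geq T+1$; I will verify this holds for all $2\le t\le T-1$ by noting that the left-hand side equals $2T$ at $t=2$ (hence $\geq T+1$ for $T\ge 1$) and that the ratio of consecutive terms, $2(T+1-t)/(T+2-t)$, is at least $1$ whenever $t\leq T$, so the sequence is nondecreasing on $\{2,\dots,T-1\}$. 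Combined with $t\geq \lceil \log_2(T+1)\rceil$, both endpoint inequalities hold, yielding $\mathcal{S}_t^k\subsetneq \mathcal{E}_t^k$.

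For part 3, the calculation above shows that when $2\leq t<\log_2(T+1)$ we have $A_t^{S}(0)>A_t^{E}(0)$ but $A_t^{S}(1)<A_t^{E}(1)$, so the two affine functions cross once inside $(0,1)$. Solving $A_t^{E}(\overline{\delta})=A_t^{S}(\overline{\delta})$ linearly in $\delta$ and clearing denominators by $(T+1)2^t$ gives
\begin{equation*}
\overline{\delta}(T,t)=\frac{(2^t-2)(T+1)-(t-1)2^t}{(2^t-1)(T+1)-t\,2^t}.
\end{equation*}
For $\delta<\overline{\delta}(T,t)$ the simultaneous coefficient exceeds the sequential one, so the simultaneous threshold on $\bar\alpha$ is strictly larger; on the boundary of $\mathcal{E}_t^k$ (where $\hat{\sigma}_i^k(X)=t$) the simultaneous game then requires strictly more patience to claim by $t$, giving $\tilde{\sigma}_i^k(X)\geq t=\hat{\sigma}_i^k(X)$. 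The reverse inequality holds for $\delta>\overline{\delta}(T,t)$ by the same argument, and the absence of set inclusion in either direction follows because neither $A_t^{S}\geq A_t^{E}$ nor $A_t^{E}\geq A_t^{S}$ holds throughout $(0,1)$.

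The main obstacle I anticipate is purely bookkeeping: correctly tracking the direction of the inequality between the two thresholds and its translation into the stopping-period comparison, and verifying the $\delta=0$ endpoint inequality $(T+2-t)2^{t-1}\geq T+1$ cleanly for the full range $2\leq t\leq T-1$ without case analysis. Everything else reduces to comparing two affine functions on $[0,1]$, which is mechanical once the monotonicity of $f(A)$ is established.
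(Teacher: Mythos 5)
Your proposal is correct and follows essentially the same route as the paper: both reduce the set comparisons to a period-by-period comparison of the critical $\bar{\alpha}$ thresholds from Propositions \ref{prop_extensive_dirty} and \ref{prop_strategic_dirty}, which (since both thresholds share the same term in $\sum_{j=1}^{k-1}p_j$) amounts to comparing the linear-in-$\delta$ coefficients, and solving for the crossing yields exactly $\overline{\delta}(T,t)$ with $\overline{\delta}(T,t)\geq 1$ precisely when $t\geq \ln(T+1)/\ln 2$. Your endpoint checks at $\delta=0$ and $\delta=1$ are a cosmetic reorganization of the paper's direct algebraic rearrangement, and your brief gloss translating the threshold comparison into the ordering of $\hat{\sigma}_i^k(X)$ and $\tilde{\sigma}_i^k(X)$ is at the same level of detail as the paper's own argument.
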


\begin{proof}
See Appendix \ref{appendix:proof_dirty}.
\end{proof}

Proposition \ref{prop:dirty_representation} formally compares the 
dynamic and the static CH solutions, demonstrating how 
different representations would affect players' behavior.
We can first observe that for any level $k\geq 2$, when both
players' faces are dirty, they are more likely to learn their face type before the game ends
in extensive form than in strategic form as $\mathcal{S}_{T}^k \subset \mathcal{E}_{T}^k$. However, more likely to learn their face type
eventually does \emph{not} imply players would 
learn their face type \emph{earlier}.
The second and third results show that when the horizon is long enough and 
the players are sufficiently patient, i.e., $\delta>\overline{\delta}(T,t)$,
they will claim having a dirty \emph{later} in the extensive form.
That is, dynamic CH model predicts players do not always 
behave closer to the equilibrium in the extensive form than in the 
strategic form.
More surprisingly, the cutoff $\overline{\delta}(T,t)$ is independent of the 
level of sophistication. To some extend, 
the representation of the game has the same impact on each level 
of players' behavior.

\begin{figure}[htbp!]
    \centering
    \includegraphics[width=\textwidth]{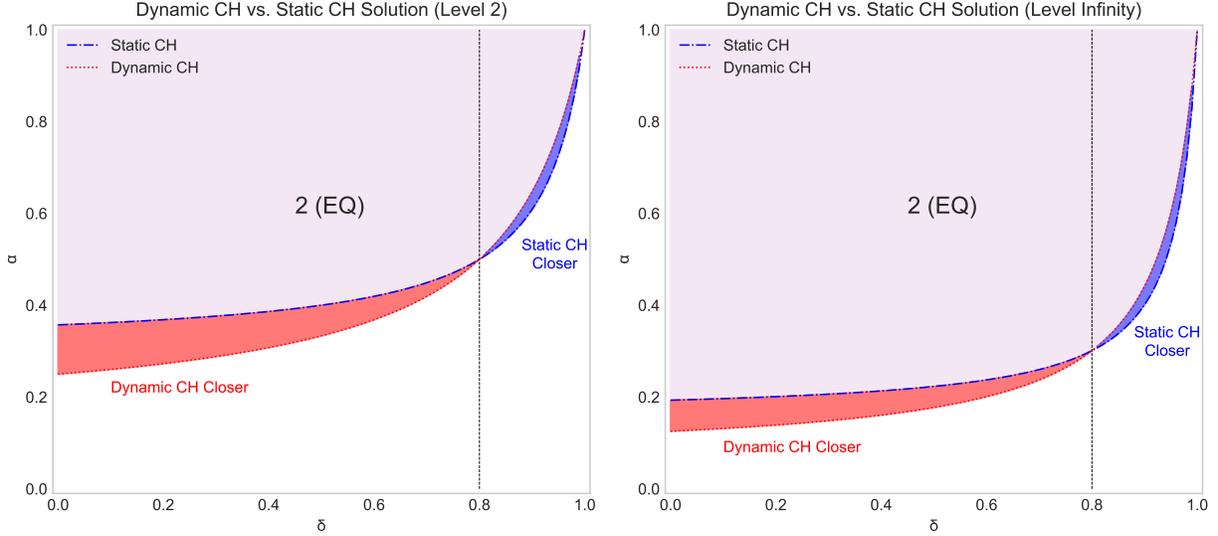}
    \caption{Representation effect of level 2 (left) and 
    level $\infty$ players (right) when $x_{-i}=X$ where $p=0.5$, $T=5$, 
    and the levels are drawn from Poisson(1.5).}
    \label{fig:representation_effect}
\end{figure}

To illustrate the representation effect, we keep focusing on the 
running example where $p=0.5$, $T=5$, and the distribution of levels 
follows Poisson(1.5). By Proposition \ref{prop:dirty_representation},
we can find that $\mathcal{S}_{t}^k \subset \mathcal{E}_{t}^k$ 
for any $k\geq 2$ and $3\leq t \leq 5$. 
Yet, since there is no set inclusion relationship between
$\mathcal{S}_{2}^k$ and $\mathcal{E}_{2}^k$,
we plot $\mathcal{S}_{2}^k$ and $\mathcal{E}_{2}^k$ for 
$k=2$ and $\infty$ in Figure \ref{fig:representation_effect}.
Specifically, by Proposition \ref{prop_extensive_dirty} and Proposition \ref{prop_strategic_dirty},
we know for any $\delta \in (0,1)$, the boundaries of $\mathcal{E}_{2}^2$ and 
$\mathcal{S}_{2}^2$ can be characterized by:
\begin{align*}
    (\delta, \alpha) \in \mathcal{E}_{2}^2 &\iff \alpha \geq \frac{ \left(\frac{1}{2} - \frac{1}{4}\delta\right)e^{-1.5}}{\left(\frac{1}{2} - \frac{1}{4}\delta\right)e^{-1.5} + (1-\delta)1.5e^{-1.5}} = \frac{2-\delta}{8-7\delta} \\
    (\delta, \alpha) \in \mathcal{S}_{2}^2 & \iff \alpha \geq \frac{ \left(\frac{5}{6} - \frac{2}{3}\delta\right)e^{-1.5}}{\left(\frac{5}{6} - \frac{2}{3}\delta\right)e^{-1.5} + (1-\delta)1.5e^{-1.5}} = \frac{5-4\delta}{14-13\delta}.
\end{align*}
Hence, the boundaries intersect at $\delta=0.8$. This suggests that when $\delta \leq 0.8$,
$(\delta, \alpha) \in \mathcal{S}_{2}^2$ implies $(\delta, \alpha) \in \mathcal{E}_{2}^2$,
and vice versa. By similar calculation, we can find that the 
boundaries of $\mathcal{E}_{2}^\infty$ and $\mathcal{S}_{2}^\infty$ also 
intersect at $\delta=0.8$. This illustrates the third result of Proposition 
\ref{prop:dirty_representation}---the cutoff 
$\overline{\delta}(5,2)$ is the same for every level.

This result complements to the finding in \cite{lin2022cognitive} where 
they find players will \emph{always} behave closer to the equilibrium in the centipede game when the game is played in extensive form.
The intuition behind the difference is that when there is 
incomplete information, players do not only learn the other players' 
levels but also the payoff-relevant types at the same time.
Since each subgame in this environment is more 
informative, it is more valuable for players to 
strategically deviate from the equilibrium 
to gain more information. In the dirty faces game, 
when players are patient enough, once the players are able 
to observe the history, they tend to strategically wait for more evidence.

Finally, we analyze the representation effect when there are almost
infinitely many periods, i.e., $T\rightarrow \infty$.
Besides the observability of past actions, 
another difference between the extensive form and the 
strategic form is that level 0 players have more available 
actions to randomize in strategic form.
In extensive form, level 0 players can only randomize 
between two actions at every information set. 
Yet, they can randomize across $T+1$
actions in strategic form. 
As a result, when $T\rightarrow\infty$, for any period 
$t\geq 2$ and level $k\geq 2$, $\mathcal{S}_t^k$ and 
$\mathcal{E}_t^k$ do not have set inclusion relationship,
suggesting neither the dynamic nor the static CH solution 
is definitely closer to the equilibrium in the limit.
The result is formally stated in Corollary \ref{coro:limit_representation}.

\begin{corollary}\label{coro:limit_representation}
Consider the set of two-person dirty 
faces games. When $T\rightarrow\infty$,
for any $t\geq 2$ and $k\geq 2$, there is no set 
inclusion relationship between $\mathcal{S}_t^k$ and 
$\mathcal{E}_t^k$. Specifically, 
$t=\hat{\sigma}_i^k(X) \leq \tilde{\sigma}_i^k(X)$
if $\delta \leq \overline{\delta}^*(t)$ and $\hat{\sigma}_i^k(X) \geq \tilde{\sigma}_i^k(X)=t$ if $\delta > \overline{\delta}^*(t)$ 
where 
\begin{align*}
    \overline{\delta}^*(t) = [2^t-2]/[2^t-1].
\end{align*}
\end{corollary}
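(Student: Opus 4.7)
The plan is to derive this corollary as a direct limiting case of Proposition \ref{prop:dirty_representation}, specifically its third part. The key observation is that for any fixed $t \geq 2$, the bound $\ln(T+1)/\ln 2$ grows without bound as $T \to \infty$, so eventually we have $2 \leq t < \ln(T+1)/\ln 2$. Thus for all sufficiently large $T$, the pair $(t,k)$ falls into the third case of Proposition \ref{prop:dirty_representation}, which already states that no set inclusion relationship exists and provides the explicit cutoff $\overline{\delta}(T,t)$.

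Given this, the only remaining work is to compute the pointwise limit of the cutoff. Starting from
\begin{align*}
\overline{\delta}(T,t) = \frac{(2^t-2)(T+1)-(t-1)2^t}{(2^t-1)(T+1)-t2^t},
\end{align*}
I would divide numerator and denominator by $T+1$ to obtain
\begin{align*}
\overline{\delta}(T,t) = \frac{(2^t-2)-(t-1)2^t/(T+1)}{(2^t-1)-t2^t/(T+1)}.
\end{align*}
For fixed $t$, both correction terms vanish as $T \to \infty$, yielding $\overline{\delta}^*(t) = (2^t-2)/(2^t-1)$, which is the claimed limit. Note that $\overline{\delta}^*(t) \in (0,1)$ since $0 < 2^t - 2 < 2^t - 1$ for $t \geq 2$, so the cutoff is nontrivial in the limit.

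Finally, to articulate "no set inclusion relationship" in the limit, I would argue as follows: by the third part of Proposition \ref{prop:dirty_representation}, for any $\delta < \overline{\delta}^*(t)$ one can choose $T$ large enough that also $\delta < \overline{\delta}(T,t)$, in which case $t = \hat{\sigma}_i^k(X) \leq \tilde{\sigma}_i^k(X)$, showing a point of $\mathcal{E}_t^k$ not in $\mathcal{S}_t^k$; and symmetrically for $\delta > \overline{\delta}^*(t)$, a point of $\mathcal{S}_t^k$ not in $\mathcal{E}_t^k$ is obtained. This precludes either direction of set inclusion in the limit. Since the substantive content is already established in Proposition \ref{prop:dirty_representation}, there is no genuine obstacle here; the only subtle point is simply checking that the limiting cutoff $\overline{\delta}^*(t)$ lies strictly between $0$ and $1$ so that both sides of the dichotomy are nonempty.
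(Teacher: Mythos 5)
Your proposal is correct and follows essentially the same route as the paper: invoke the third part of Proposition \ref{prop:dirty_representation}, note that the condition $t<\ln(T+1)/\ln 2$ holds for every $t\geq 2$ as $T\rightarrow\infty$, and compute the limit of $\overline{\delta}(T,t)$ to obtain $(2^t-2)/(2^t-1)$. Your additional checks (that the limit lies in $(0,1)$ and the explicit two-sided argument for non-inclusion) are harmless elaborations of what the paper leaves implicit.
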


\begin{proof}

By Proposition \ref{prop:dirty_representation}, we know for any 
$k\geq 2$, there is no set inclusion relationship between 
$\mathcal{S}_t^k$ and $\mathcal{E}_t^k$ if $2\leq t < 
[\ln(T+1)/\ln(2)]$. 
When $T\rightarrow \infty$, this condition holds for any $t\geq 2$. Moreover, from
Proposition \ref{prop:dirty_representation}, we
can obtain that 
\begin{align*}
    \overline{\delta}^*(t) = \lim_{T\rightarrow\infty} \overline{\delta}(T,t) = 
    \lim_{T\rightarrow\infty} \frac{(2^t-2)(T+1)-(t-1)2^t}{(2^t-1)(T+1)-t2^t} = \frac{2^t-2}{2^t-1}.
\end{align*}
This completes the proof.
\end{proof}

\subsection{Three-Person Games}\label{sec:dirty_theory_discussion}

Thus far, we have focused on two-person games. When there are more
than two players, it is challenging to characterize the dynamic CH solution 
analytically. In the dynamic CH framework, no matter how 
sophisticated the players are and which information set is reached, 
players always need to make statistical inferences as it is always 
possible that others are level 0 players. Therefore, when there are more 
players, the number of information sets and possible level profiles is 
too large for us to solve the dynamic CH solution analytically.

To provide some insights of how the dynamic CH solution would look like 
when there are more players, we consider a special case, the three-person
three-period dirty faces game, which is also one of the treatments in 
\cite{bayer2007dirty}. We defer the general characterization of the dynamic CH 
solution to Appendix \ref{sec:dirty_exp_appendix}. Here we numerically illustrate 
level 3 players' behavior. Notice that level 3 players are the least sophisticated 
players than would possibly choose $D$ when observing two dirty faces.

For illustrative purposes, we assume $p=0.5$ and the distribution of levels
follows Poisson(1.5). Therefore, similar to the analysis of two-person games, 
the set of dirty faces games becomes the unit square on
the $(\delta, \alpha)$-plane. When observing two clean faces, 
level 3 players will know their faces are dirty immediately. 
Hence, they will choose $D$ at period 1. 

When observing one ($x_{-i}=OX$) or two dirty faces ($x_{-i}=XX$), 
level 3 players cannot tell their faces 
for sure in period 1, and thus they will choose $U$. If $x_{-i}=OX$ and 
the game proceeds beyond the first period, level 3 players will know it is 
impossible that his face is clean and the player with a dirty face is strategic 
at the same time. As a result, level 3 players' belief of having a dirty face
at period $t\in \{2,3\}$ is:
\begin{align*}
    \mu_{i}^{3}(X|t,OX) = \sum_{\tau_{-i}}
    \mu_i^{3}(X,\tau_{-i}|t,OX) = \frac{p\left[ \left(\frac{1}{2}\right)^{t-1}p_0 + p_1 + p_2\right]}{\left(\frac{1}{2}\right)^{t-1}p_0 + p_1 + p_2}.
\end{align*}
Therefore, at period 3, the last period of the game, level 3 players will choose $D$ 
if and only if the expected payoff is non-negative. That is,
\begin{equation}
\sigma_i^3(3,OX) = 1 \iff \alpha \geq \frac{\frac{1}{4}p_0}{\frac{1}{4}p_0 + p_1
+ p_2} = \frac{2}{23}. 
\end{equation}
At period 2, level 3 players would think the probability that all others choose $U$ 
is
\begin{align*}
    \left[\frac{\frac{1}{4}p_0 + p_1 + p_2}{\frac{1}{2}p_0 + p_1 + p_2} \right] 
    \left[\frac{\frac{1}{4}p_0 + p(p_1 + p_2)}{\frac{1}{2}p_0 + p(p_1 + p_2)} \right]
    \equiv \gamma_3 \left[\frac{\frac{1}{4}p_0 + p(p_1 + p_2)}{\frac{1}{2}p_0 + p(p_1 + p_2)} \right]
\end{align*}
where the first (second) term corresponds to the probability of the player 
with a clean (dirty) face choosing $U$ at period 2. As a result, level 3 players would
choose $D$ at period 2 if and only if 
\begin{align}
    \alpha \geq \frac{\left(\frac{1}{2} - \frac{1}{4}\gamma_3 \delta \right)p_0}{\left(\frac{1}{2} - \frac{1}{4}\gamma_3 \delta \right)p_0 + (1-\gamma_3\delta)(p_1 + p_2)} = 
    \frac{100-46\delta}{625-529\delta}.
\end{align}

Finally, when observing two dirty faces, level 3 players cannot tell their face types in the 
first two periods. This is because level 1 and 2 players will choose $U$ in period 1 no 
matter which face types they are. At period 3, level 3 players will choose $D$ if 
and only if (1) level 2 players would choose $D$ at period 2 when seeing only one 
dirty face,\footnote{Otherwise, if level 1 and 2 players will both choose $U$ at period 2 when 
seeing only one dirty face, then level 3 players can still not tell their face types 
even when the game proceeds to period 3.}
and (2) the expected payoff of $D$ is non-negative. Therefore, 
it is optimal to choose $D$ at period 3 if and only if 
\begin{align}
    \alpha \geq \max\left\{ \frac{16-7\delta}{64-49\delta}, \;\;
    \left(\frac{14}{23}\right)^2 \right\}.
\end{align}

The dynamic CH solution for level 3 players is visualized in Figure 
\ref{fig:DCH_3p_main}. Here we naturally extend the definition of 
optimal stopping periods to three-person games.
We plot the optimal stopping periods in the left and right panels of Figure 
\ref{fig:DCH_3p_main}, respectively.
From this figure, we can observe two features that are different from the 
two-person games. First, when $x_{-i}=OX$ and $\delta \rightarrow 1$, 
the level 3 players would choose $D$ at period 2 when $\alpha \geq 9/16$.
However, in two-person games, when $\delta \rightarrow 1$, players will 
always wait till the last period. This is because when there are more players,
the game is more likely to be randomly terminated, causing the players to choose 
$D$ earlier even if the payoff is not discounted.

Second, from the right panel, we can see that when $x_{-i}=XX$, players' 
behavior at period 3 depends on $\delta$ even if it is the last period.
The reason is that level 3 players' beliefs at period 3 depend on 
level 2 players' behavior at period 2 which is related to the discount 
factor. Finally, it is worth remarking that for general dirty faces games, 
since there are more information sets, the boundaries may be much kinkier, 
which makes the analytic characterization more challenging. 
In Appendix \ref{sec:dirty_exp_appendix}, we also solve for the static 
CH solution and find a stronger representation effect in three-person three-period 
games.

\begin{figure}[htbp!]
    \centering
    \includegraphics[width=\textwidth]{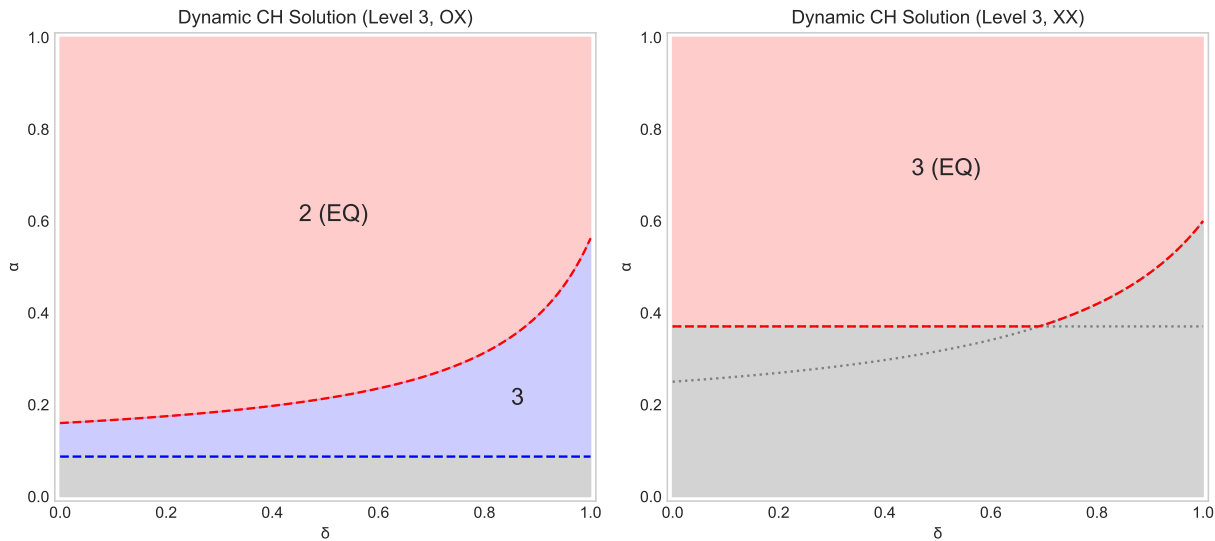}
    \caption{Optimal stopping periods for level 3 players when $x_{-i}=OX$ (left) and 
    $x_{-i}=XX$ (right) where $p=0.5$ and the levels are drawn from Poisson(1.5).}
    \label{fig:DCH_3p_main}
\end{figure}

\section{Dirty Faces Games: Experimental Evidence}
\label{sec:dirty_face_exp}

\subsection{Experimental Environment and Data Description}

In this section, we revisit the dirty faces game experiment conducted by
\cite{bayer2007dirty} with the dynamic CH solution. 
The experiment consists of two treatments: two-person 
two-period games (Treatment 1) and three-person three-period games (Treatment 2).
In both treatments, the prior probability of having a dirty face
is $2/3$, the discount factor $\delta$ is $0.8$, and the reward 
$\alpha$ is $1/4$.\footnote{To be more specific, 
in their experiment, the payoff of correctly claiming a dirty face
is 100 ECU (experimental currency unit) and the penalty of wrongly claiming 
a dirty face is $-400$ ECU.
As we normalize the payoff, we can obtain that the relative reward 
of correctly claiming a dirty face is $1/4$, which corresponds
to the parameter $\alpha$ in the analysis of Section \ref{sec:dirty_face}.}
In \cite{weber2001behavior}'s experiment, there is no discount factor, 
i.e., $\delta=1$, and there are fewer observations than \cite{bayer2007dirty}.\footnote{
\cite{weber2001behavior}'s dataset consists of two experiments where experiment 2 is 
comparable with \cite{bayer2007dirty}'s design. In this experiment, 
there are 10 groups in two-person games and 9 groups in three-person games, and 
the experiment lasts for 9 repetitions. Hence, there are much fewer observations in 
\cite{weber2001behavior} than in \cite{bayer2007dirty}.} 
Therefore, we will focus on \cite{bayer2007dirty}'s experiment in the following analysis.

There are two sessions in each treatment. 
In the experiment, there are 42 and 48 subjects in Treatment 1 and 2, respectively.
At the beginning of each session, the computer randomly matches two (Treatment 1)
or three (Treatment 2) subjects into a group, 
and the matching groups are fixed for all rounds in a session. 
This is commonly known to all subjects.
There are 14 consecutive rounds in both treatments and 
the face types are independently drawn in each round according to 
the prior probabilities. In each group, an announcement is made on 
the screen if there is at least one person having type $X$.

After observing the face types and the announcement, subjects enter the 
first period and are asked to choose either $U$ or $D$ simultaneously. 
When every subject in the group has made the decision, 
the action profile of this period is revealed to everyone in the same group.
In the next period, players are asked to choose either $U$ or $D$ again.
The game continues to the next period unless someone has chosen $D$ or 
the game has reached the horizon. 
At the end of each round, the subjects are told their own payoffs in that round
but they are \emph{never} told their own face types.
The subjects are paid with the sum of the earnings of all 14 rounds.
See \cite{bayer2007dirty} for the instructions and screen shots of the 
experimental program.

In our analysis, we will exclude the data from the situation where 
there is no public announcement\footnote{We exclude 144 observations from 
Treatment 1 and 63 observations from Treatment 2. When there is no 
public announcement, players should always choose $U$, while 
among these excluded data, there are eight choices of $D$ in total 
(four from Treatment 1 and four from Treatment 2).
} 
because it is common knowledge to all
players that their faces are clean. 
Moreover, to gain the most statistical power, we pool the data from all 
14 rounds, which yields 1,611 observations (690 in Treatment 1 and 921 
in Treatment 2).\footnote{
Since the matching groups are fixed through the experiment, 
players' behavior may be affected by the experience from previous rounds.
In Appendix \ref{sec:robustness_appendix}, we separate the data by 
the first and last seven rounds, finding that the estimation results are similar 
in two subsamples. Therefore, we assume the learning effect across rounds is mild.
}

Table \ref{tab:round_1_data} reports the raw data at each information 
set of the game. Notice the information sets can be represented by 
the period and what the player has observed, which is $(t, x_{-i})$.
Each entry in the table states the number of observations and 
the percentage of the choices that follow the equilibrium predictions.
For instance, in the information set $(t,x_{-i})=(2,X)$, 
there are 170 choices and 62 percent of the choices are $D$,
which is the action predicted by the equilibrium.

\begin{table}[htbp!]
\centering
\begin{threeparttable}
\caption{Experimental Data from \cite{bayer2007dirty}}
\label{tab:round_1_data}
\begin{tabular}{ccccccc}
\hline
 & \multicolumn{6}{c}{Number of Players} \\ \cline{2-7} 
 & \multicolumn{2}{c}{2} &  & \multicolumn{3}{c}{3} \\ \cline{2-3} \cline{5-7} 
$x_{-i}$ & $O$ & $X$ &  & $OO$ & $OX$ & $XX$ \\
EQ & $D$ & $UD$ &  & $D$ & $UD$ & $UUD$ \\ \hline
Period & \multicolumn{6}{c}{Number of Obs (EQ \%)} \\
1 & 123 (0.94) & 391 (0.79) &  & \phantom{1}48 (0.92) & 280 (0.61) & 320 (0.76) \\
2 & \phantom{12}6 (0.50) & 170 (0.62) &  & \phantom{14}2 (0.50) & \phantom{1}60 (0.58) & 145 (0.79) \\
3 & --- & --- &  & --- & \phantom{1}10 (0.20) & \phantom{1}56 (0.36) \\ \hline
\end{tabular}
\begin{tablenotes}
\footnotesize
\item Note: In Treatment 1, there are 21 groups of subjects (42 subjects in total), and in 
Treatment 2, there are 16 groups of subjects (48 subjects in total). Because each group
plays 14 rounds, the data set consists of $(21+16)\times14 = 518$ games.
\end{tablenotes}

\end{threeparttable}
\end{table}

From Table \ref{tab:round_1_data}, we can observe that 
when players do not observe any dirty face, the behavior is 
highly aligned with the equilibrium prediction. In this situation 
($x_{-i}=O$ or $OO$), players know their face type is $X$, and 
they should choose $D$ in period 1. The corresponding frequencies for 
both treatments are $94\%$ and $92\%$.
Moreover, the behavior becomes less consistent with the equilibrium 
as the problem gets more complicated. 
When they observe only one dirty face ($x_{-i}=X$ or $OX$), players 
should realize their face type is $X$ as the game proceeds to period 2. 
Yet, the frequencies that players claim to have a dirty face at period 2 are
62\% and 58\% for Treatment 1 and 2, respectively.
Similarly, when observing two dirty faces ($x_{-i}=XX$), only 30\% of the players 
claim to have a dirty face in period 3 when observing two dirty faces.

These findings suggest equilibrium fails to explain a significant
portion of the data. In the following, 
we will compare the fitness of the dynamic CH model 
with the static CH model and the agent quantal response equilibrium 
(AQRE) proposed by \cite{mckelvey1998quantal}.
By comparing the dynamic and static CH, we can quantify the improvement 
of incorporating the learning from past actions 
into the CH model. 
On the other hand, AQRE is an equilibrium model for extensive form games
where players make stochastic choices and assume other players do so as well. 
The comparison between the dynamic CH and AQRE would demonstrate
how hierarchical thinking models can generate statistically comparable 
predictions as equilibrium-based models.

\subsection{Likelihood Functions}
\label{sec:likelihood_functions}

In this section, we derive the likelihood functions. 
For the cognitive hierarchy theories, we follow \cite{camerer2004cognitive} 
and \cite{chong2016generalized} to 
assume the prior distribution of levels follows Poisson distribution.
Therefore, for both of the dynamic CH and static CH, there is one parameter to be 
estimated---the average number of levels, $\tau$. 
For AQRE, we follow \cite{mckelvey1998quantal} to estimate the 
logit-AQRE which has a single parameter $\lambda$.
In addition, in order to capture the learning from repeated play, 
we consider another specification that incorporate the time trend into the 
likelihood functions.

\subsubsection{Poisson-CH Models}

The assumption of Poisson distribution has some distinct advantages, 
especially the interpretability of the parameter $\tau$.
Specifically, the Poisson CH model assumes each player's level of 
sophistication is identically and independently drawn from 
$\left(p_k\right)_{k=0}^{\infty}$ where
\begin{align*}
    p_k \equiv \frac{e^{-\tau}\tau^k}{k!}, \;\;\;\mbox{ for all } \; k=0,1,2,\ldots 
\end{align*}
and $\tau>0$. Because $\tau$ is the mean and variance of the Poisson distribution, 
the economic meaning of $\tau$ is the average level of sophistication among the population. 

Moreover, another theoretic property of the Poisson CH model in 
dirty faces games is when $\tau \rightarrow \infty$, 
the prediction of Poisson CH would converge to the equilibrium,
in the sense of aggregate choice frequencies.
This convergence property provides the second interpretation of the 
parameter $\tau$---the higher of $\tau$, the closer to the equilibrium.
Notice that this convergence property does \emph{not} hold generically 
(see \cite{camerer2004cognitive}). Here we formally discuss the convergence
property of the Poisson DCH in two-person games. A similar argument holds for 
three-person three-period games.

For any two-person dirty faces game, conditional on there is an announcement,
there are two possible states: one dirty face or two dirty faces, which are denoted as 
$\Omega = \{OX, XX \}$.  For each $\omega\in \Omega$, equilibrium predicts a deterministic 
terminal period. We use $F_\omega^*(t)$ to express the 
(degenerated) distribution of terminal periods at the equilibrium.\footnote{ 
For two-person dirty-faces games, the equilibrium predicts that players will choose
$D$ in period 1 when observing $O$, and choose $D$ in period 2 when seeing $X$.
Therefore, when $\omega=OX$, the game will end in period 1, and when $\omega=XX$,
the game will be terminated at period 2. In other words,
\begin{align*}
F_{OX}^*(t) = 
\begin{cases}
0 \;\;\mbox{if}\;\; t<1 \\
1 \;\;\mbox{if}\;\; t\geq 1,
\end{cases}
\mbox{ and }\quad
F_{XX}^*(t) = 
\begin{cases}
0 \;\;\mbox{if}\;\; t<2 \\
1 \;\;\mbox{if}\;\; t\geq 2. 
\end{cases}
\end{align*}
}

On the other hand, given any $\tau>0$ and $\omega\in \Omega$, 
the dynamic CH solution predicts a non-degenerated distribution over 
all possible terminal periods. We use $F_{w}^D(t|\tau)$ to denote the 
distribution predicted by the dynamic CH solution.
Proposition \ref{prop:covergence_poisson} states that when $\tau\rightarrow\infty$,
the max norm between $F_{\omega}^D(t|\tau)$ and
$F_{\omega}^*(t)$ will converge to 0 for any $\omega\in\Omega$.

\begin{proposition}\label{prop:covergence_poisson}

Consider any extensive form two-person dirty faces games. When the prior distribution of levels
follows Poisson($\tau$), then for any $\omega\in \Omega$,

$$\lim_{\tau\rightarrow\infty} \left\lVert F_\omega^*(t) - F_\omega^D(t|\tau) \right\rVert_{\infty} = 0.$$

\end{proposition}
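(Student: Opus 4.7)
\emph{Proof plan.} The strategy is to show that, as $\tau \to \infty$, Poisson mass concentrates on high levels, and simultaneously the DCH thresholds in Proposition \ref{prop_extensive_dirty} force every strategic level $k \geq 2$ to claim in period $2$ when seeing a dirty face. Together, with probability tending to one both players behave exactly as the sequential equilibrium prescribes, which yields max-norm convergence. The two ingredients I would develop separately and then combine by a coupling argument on the level profile.

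The first ingredient is a ``uniform threshold'' lemma: for $\tau$ large enough, $\sigma_i^k(2,X) = 1$ simultaneously for every $k \geq 2$. By Proposition \ref{prop_extensive_dirty}, this amounts to verifying
\begin{equation*}
\bar{\alpha} \; \geq \; \frac{\left[\tfrac{1}{2} - \tfrac{\delta}{4}\right] p_0}{\left[\tfrac{1}{2} - \tfrac{\delta}{4}\right] p_0 + (1-\delta)\sum_{j=1}^{k-1} p_j}.
\end{equation*}
Under Poisson$(\tau)$, dividing numerator and denominator by $p_0 = e^{-\tau}$ turns the right-hand side into $\left[\tfrac{1}{2}-\tfrac{\delta}{4}\right] / \left\{\left[\tfrac{1}{2}-\tfrac{\delta}{4}\right] + (1-\delta)\sum_{j=1}^{k-1} \tau^j/j!\right\}$, which is largest at $k=2$ (where the sum equals $\tau$) and tends to $0$ as $\tau \to \infty$. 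Since $\bar{\alpha} > 0$ by (\ref{payoff_assumption}), the inequality holds uniformly in $k\geq 2$ for all sufficiently large $\tau$. Combined with $\sigma_i^k(1,X)=0$ from Proposition \ref{prop_extensive_dirty} and the monotonicity in Corollary \ref{coro_extensive_dirty}, this yields $\hat{\sigma}_i^k(X) = 2$ for every $k\geq 2$ once $\tau$ is large enough, while of course $\hat{\sigma}_i^k(O) = 1$ for every $k \geq 1$.

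The second ingredient is the Poisson tail bound $\Pr(\tau_i \leq 1) = e^{-\tau}(1+\tau) \to 0$, so $\Pr(\text{both players have level } \geq 2) \geq 1 - 2 e^{-\tau}(1+\tau)$. Conditional on this event and on $\tau$ large enough for the first ingredient, the realized terminal period matches $F_\omega^*$ exactly: when $\omega = OX$ the player observing $O$ claims in period~$1$, and when $\omega = XX$ both players wait in period~$1$ and claim in period~$2$. Consequently, for every $t$ and every $\omega\in\Omega$,
\begin{equation*}
\left| F_\omega^D(t \mid \tau) - F_\omega^*(t) \right| \;\leq\; 1 - \Pr(\text{both levels} \geq 2) \;\leq\; 2 e^{-\tau}(1+\tau),
\end{equation*}
which is independent of $t$ and tends to $0$. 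Taking the supremum over $t$ gives the claimed max-norm convergence. The only mildly delicate point I anticipate is bookkeeping the contribution of level $0$ and level $1$ players---since they can in principle generate any terminal period via random claims or perpetual waiting---but the coupling above absorbs all such deviations into the single total-variation budget $2 e^{-\tau}(1+\tau)$, so no finer case analysis is needed.
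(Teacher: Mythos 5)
Your proposal is correct, but it packages the argument differently from the paper. The paper proves the result by writing down the DCH cumulative probabilities at the equilibrium terminal periods in closed form --- $F_{OX}^D(1\mid\tau)=1-\bigl(\tfrac{1}{2}e^{-\tau}\bigr)\bigl(1-\tfrac{1}{2}e^{-\tau}\bigr)$, $F_{XX}^D(1\mid\tau)=1-\bigl[1-\tfrac{1}{2}e^{-\tau}\bigr]^2$, and $F_{XX}^D(2\mid\tau)=1-\bigl[\tfrac{1}{4}e^{-\tau}+\sum_{j=1}^{K^*(\tau)-1}e^{-\tau}\tau^j/j!\bigr]^2$ where $K^*(\tau)$ is the lowest level claiming at $(2,X)$ --- and then letting $\tau\to\infty$, using $K^*(\tau)\to 2$. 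You instead establish the same key fact as a ``uniform threshold'' lemma (for large $\tau$ every level $k\geq 2$ has $\hat{\sigma}_i^k(X)=2$, since the level-$2$ threshold from Proposition \ref{prop_extensive_dirty} vanishes after dividing through by $p_0=e^{-\tau}$, and monotonicity handles $k>2$), and then replace the pointwise CDF computations with a coupling/total-variation bound: on the event that both levels are at least $2$ the realized terminal period coincides with the equilibrium one, so $\sup_t\lvert F_\omega^D(t\mid\tau)-F_\omega^*(t)\rvert\leq 2e^{-\tau}(1+\tau)$ uniformly in $t$ and in $\omega$. This is a valid and arguably cleaner route: it gives an explicit uniform rate and avoids tracking how level-$0$ and level-$1$ players spread mass over terminal periods, which the paper's exact expressions implicitly absorb. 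The only bookkeeping item worth noting is that your displayed threshold is the interior-period condition, which presumes $T\geq 3$; for $T=2$ one should instead invoke the last-period condition $\bar{\alpha}\geq \bigl(\tfrac{1}{2}\bigr)^{T-1}p_0/\bigl[\bigl(\tfrac{1}{2}\bigr)^{T-1}p_0+\sum_{j=1}^{k-1}p_j\bigr]$, which vanishes by the same $p_j/p_0=\tau^j/j!$ computation, so the conclusion is unaffected.
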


\begin{proof}
See Appendix \ref{appendix:proof_dirty}.
\end{proof}

After discussing the theoretical properties of Poisson CH, 
we now construct the likelihood functions. 
We first build the likelihood function for dynamic CH.
For each subject $i$, we use $\Pi_i$ to denote 
the set of information sets that subject $i$ has encountered
in the game. Notice that every information set can be 
described by the period $t$ and the observed face types $x_{-i}$.
We use $\mathcal{I}_i$ to denote a generic information set.
At any information set $\mathcal{I}_i$,
subject $i$ can choose $c_i \in \{U,D \}$. Let $P_k(c_i | \mathcal{I}_i, \tau )$
be the probability of level $k$ players
choosing $c_i$ at information set $\mathcal{I}_i$.
Moreover, let $f(k|\mathcal{I}_i, \tau)$ be the posterior distribution 
of levels at information set $\mathcal{I}_i$.
At period 1, $f(k|\mathcal{I}_i, \tau) = e^{-\tau}\tau^k / k!$. For later periods, 
we can analytically solve $f(k|\mathcal{I}_i, \tau)$ given any $\tau$
by Proposition \ref{prop_extensive_dirty} (two-person games) 
and Proposition \ref{prop:three_p_ext} (three-person games) in 
Appendix \ref{sec:dirty_exp_appendix}. 
Finally, the predicted choice probability for $c_i$ at 
information set $\mathcal{I}_i$ is simply 
the aggregation of best responses from all levels 
weighted by the proportion $f(k|\mathcal{I}_i, \tau)$:
\begin{equation}
D(c_i | \mathcal{I}_i, \tau) = \sum_{k=0}^{\infty} 
f(k|\mathcal{I}_i, \tau) P_k(c_i | \mathcal{I}_i, \tau ).    
\end{equation}
We then aggregate over all subjects $i$, actions $c_i$ 
and information sets $\mathcal{I}_i$ to form the 
log-likelihood function for the dynamic CH model:
\begin{equation}
\ln L^D (\tau) = \sum_{i}\sum_{\mathcal{I}_i \in \Pi_i} \sum_{c_i \in \{U,D \}}
\mathds{1}\{c_i,\mathcal{I}_i\}\ln \left[ D(c_i | \mathcal{I}_i, \tau)\right],
\end{equation}
where $\mathds{1}\{c_i,\mathcal{I}_i\}$ is the 
indicator function which is 1 when subject $i$ chooses 
$c_i$ at information set $\mathcal{I}_i$.\footnote{When pooling the data from all rounds, 
we implicitly assume that players' levels of sophistication are randomly drawn in every round.}

Second, the log-likelihood function for the static CH model can be 
constructed in the similar way. Given any $\tau$, the static CH model
predicts a probability distribution over $\{1,\ldots, T, T+1 \}$ 
(earliest period to choose $D$ or never $D$) for each level of players
conditional on the announcement and other players' faces.
Following previous notations, the probability of level $k$ subject $i$ 
choosing $t$ conditional on $x_{-i}$ is denoted by 
$\tilde{\sigma}_{i}^k(t|x_{-i})$, which can be analytically solved by 
Proposition \ref{prop_strategic_dirty} (two-person games) and
Proposition \ref{prop:three_p_strategic} (three-person games) 
in Appendix \ref{sec:dirty_exp_appendix}.
Therefore, subject $i$'s predicted choice probability for $t\in \{1,\ldots, T, T+1 \}$
conditional on $x_{-i}$ is the aggregation of choice frequencies of all 
levels weighted by Poisson($\tau$): 
\begin{equation}
\tilde{S}(t |x_{-i} , \tau) = \sum_{k=0}^{\infty} 
\frac{e^{-\tau}\tau^k}{k!}  \tilde{\sigma}_{i}^k(t|x_{-i}).    
\end{equation}
Notice that since $\tilde{\sigma}_{i}^0(t|x_{-i})= \frac{1}{T+1}$ for all $t$, 
$\tilde{S}(t |x_{-i} , \tau) >0$ for all $t$.
In addition, we can compute the conditional probability to choose $D$ or $U$ at 
information set $\mathcal{I}_i$. Specifically, 
the predicted conditional probability to choose $c_i \in \{U,D\}$ at $(t,x_{-i})$ is 
\begin{align*}
S(D | \mathcal{I}_i, \tau) &= \frac{\tilde{S}(t |x_{-i} , \tau)}{\sum_{t'\geq t} \tilde{S}(t' | x_{-i} , \tau)} \;\;\mbox{ and }\;\;
S(U | \mathcal{I}_i, \tau) = 1 - S(D | \mathcal{I}_i, \tau),
\end{align*}
where $\mathcal{I}_i = (t, x_{-i})$. Finally, we can construct the log-likelihood function 
for the static CH model by aggregating over all subjects $i$, actions $c_i$, and 
information sets $\mathcal{I}_i$:
\begin{equation}
\ln L^S (\tau) = \sum_{i}\sum_{\mathcal{I}_i \in \Pi_i} \sum_{c_i \in \{U,D \}}
\mathds{1}\{c_i,\mathcal{I}_i\}\ln \left[ S(c_i | \mathcal{I}_i, \tau)\right].
\end{equation}


\subsubsection{Logit-AQRE Model}

Let $Q(c_i | \mathcal{I}_i, \lambda)$ be the probability of subject $i$ choosing $c_i$ 
at information set $\mathcal{I}_i$ predicted by the logit-AQRE. We describe the details of the 
model as it applies to the two-person two-period dirty faces games. 
The AQRE of three-person three-period games can be found in Appendix \ref{sec:dirty_exp_appendix}.
In this game, each player's strategy is defined by a four-tuple $(q_1, q_2, r_1, r_2)$ which 
corresponds to $Q(D | 1,O,\lambda)$, $Q(D | 2,O,\lambda)$, $Q(D | 1,X,\lambda)$,
and $Q(D | 2,X,\lambda)$, respectively.
At information set $(t,x_{-i}) = (1,O)$, players would estimate the payoff of $D$ and $U$ by
\begin{align*}
    U_{1,O}(D) &= \alpha + \epsilon_{1,O,D} \\
    U_{1,O}(U) &= \delta\alpha(1-r_1)q_2 + \epsilon_{1,O,U},
\end{align*}
where $\epsilon_{1,O,D}$ and $\epsilon_{1,O,U}$ are independent random variables with a Weibull distribution 
with the precision parameter $\lambda$. Then the logit formula suggests 
\begin{equation*}
q_1 = \frac{1}{1 + exp\left\{\lambda\left[\delta\alpha(1-r_1)q_2 - \alpha\right] \right\}}.    
\end{equation*}
Similarly, we can express $q_2$ by: 
$$q_2 = \frac{1}{1 + exp\left\{-\delta\alpha\lambda \right\}}.$$
On the other hand, when observing a dirty face and the game proceeds to period 2, 
players' posterior beliefs become:
$$\mu \equiv \Pr(X|2,X) = \frac{p(1-r_1)}{p(1-r_1) + (1-p)(1-q_1)}
= \frac{1}{1 + \left(\frac{1-p}{p}\right)\left(\frac{1-q_1}{1-r_1}\right)},$$
and hence the expected payoff to choose $D$ at information set $(2,X)$ is:
$$\delta\left[\alpha\mu - (1-\mu) \right] 
= \delta \left[(1+\alpha)\mu -1 \right].$$
As a result, we can obtain that $r_2$ satisfies that 
$$r_2 = \frac{1}{1 + exp\left\{\lambda\delta\left[1-(1+\alpha)\mu \right]\right\}}.$$
Finally, the expected payoff of choosing $D$ at information set $(1,X)$ is 
$\alpha p -(1-p)$, while the expected payoff of $U$ is
\begin{align*}
    \underbrace{[p(1-r_1) + (1-p)(1-q_1)]}_{\mbox{prob. to reach period 2}}
    r_2 \delta \left[(1+\alpha)\mu -1 \right] \equiv A,
\end{align*}
and therefore, $r_1$ can be expressed by:
\begin{align*}
r_1 = \frac{1}{1 + exp\left\{\lambda\left[ A + (1-p) -\alpha p\right]\right\}}.
\end{align*}
As plugging $p=2/3$, $\delta=4/5$ and $\alpha=2/3$ into the choice probabilities, 
we can obtain that
\begin{align}
    r_1 &= \frac{1}{1 + exp\left\{\lambda\left[\frac{2}{15}(1-r_1)r_2 - \frac{4}{15}(1-q_1)r_2
    + \frac{1}{6}\right]\right\}}\\
    r_2 &= \frac{1}{1 + exp\left\{\lambda\left[\frac{4}{5} - 
    \frac{2-2r_1}{3-2r_1 - q_1} \right]\right\}}\\
    q_1 &= \frac{1}{1 + exp\left\{\lambda\left[\frac{1}{5}(1-r_1)q_2 - \frac{1}{4}\right] \right\}}\\
    q_2 &= \frac{1}{1 + exp\left\{-\frac{1}{5}\lambda \right\}}.
\end{align}

Given each $\lambda$, the system of four equations with four unknowns can be solved uniquely.
In addition, for each information set $\mathcal{I}_i$, we can compute $Q(U|\mathcal{I}_i, \lambda)
= 1 - Q(D|\mathcal{I}_i, \lambda)$.
Finally, we form the log-likelihood function by aggregating over all subjects $i$, actions $c_i$,
and information sets $\mathcal{I}_i$: 
\begin{equation}
\ln L^Q (\lambda) = \sum_{i}\sum_{\mathcal{I}_i \in \Pi_i} \sum_{c_i \in \{U,D \}}
\mathds{1}\{c_i,\mathcal{I}_i\}\ln \left[ Q(c_i | \mathcal{I}_i, \lambda)\right].
\end{equation}



\subsection{Estimation Results}

\begin{table}[htbp!]
\centering
\caption{Estimation Results for Treatment 1 and Treatment 2 Data}
\label{tab:estimation_result_overall}

\begin{threeparttable}

\begin{tabular}{ccccclccc}
\hline
 &  & \multicolumn{3}{c}{Two-Person Games} &  & \multicolumn{3}{c}{Three-Person Games} \\ \cline{3-5} \cline{7-9} 
 &  & \begin{tabular}[c]{@{}c@{}}Dynamic\\ CH\end{tabular} & \begin{tabular}[c]{@{}c@{}}Static\\ CH\end{tabular} & AQRE &  & \begin{tabular}[c]{@{}c@{}}Dynamic\\ CH\end{tabular} & \begin{tabular}[c]{@{}c@{}}Static\\ CH\end{tabular} & AQRE \\ \hline
Parameters & $\tau$ & 1.269 & 1.161 & --- &  & 0.370 & 0.140 & --- \\
 & S.E. & (0.090) & (0.095) & --- &  & (0.043) & (0.039) & --- \\
 & $\lambda$ & --- & --- & 7.663 &  & --- & --- & 5.278 \\
 & S.E. & --- & --- & (0.493) &  & --- & --- & (0.404) \\ \hline
Fitness & LL & -360.75\phantom{-} & -381.46\phantom{-} & -368.38\phantom{-} &  & -575.30 & -608.45 & -565.05 \\
 & AIC & 723.50 & 764.91 & 738.76 &  & 1152.61 & 1218.89 & 1132.11 \\
 & BIC & 728.04 & 769.45 & 743.29 &  & 1157.43 & 1223.72 & 1136.93 \\ \hline
Vuong Test &  &  & 6.517 & 1.463 &  &  & 3.535 & -1.330\phantom{-} \\
p-value &  &  & $<0.001\phantom{<}$ & 0.144 &  &  & $<0.001\phantom{<}$ & 0.184 \\ \hline
\end{tabular}
\begin{tablenotes}
\footnotesize
\item Note: There are 294 games (rounds $\times$ groups) in 
Treatment 1 and 224 games in Treatment 2.
\end{tablenotes}

\end{threeparttable}

\end{table}

Table \ref{tab:estimation_result_overall} reports the estimation results of dynamic CH, 
static CH and AQRE on Treatment 1 and Treatment 2 data. 
The table shows the estimated parameters and 
the fitness of each model. As we compare the fitness of the models, 
we can observe that no matter in which data set, 
static CH has the lowest log-likelihood, while dynamic CH and 
AQRE have quantitatively comparable log-likelihood.
Since these three models are non-nested, we test the difference of the 
log-likelihood between these models by Vuong test \citep{vuong1989likelihood}.\footnote{Let 
$f_1(y|x; \theta_1)$ and $f_2(y|x; \theta_2)$ be the densities for two competing models, and let
$\mathcal{L}_m = \sum_{i=1}^N l_{im}(\hat{\theta}_m)$ be the log-likelihood evaluated at the 
relevant estimate $\hat{\theta}_m$ for model $m=1,2$. The Vuong's model selection 
statistic is: 
\begin{align*}
    V = \frac{N^{-1}\sum_{i=1}^N[l_{i1}(\hat{\theta}_1) - l_{i2}(\hat{\theta}_2)]}
    {\{N^{-1}\sum_{i=1}^N[l_{i1}(\hat{\theta}_1) - l_{i2}(\hat{\theta}_2)]^2 \}^{1/2} / \sqrt{N}}
    \xrightarrow{d} \mathcal{N}(0,1).
\end{align*}}
The result shows dynamic CH can fit the data significantly better 
than static CH in both treatments (Vuong Test p-value $<0.001$). 
However, the difference between dynamic CH and AQRE 
is not statistically significant (Treatment 1:
p-value $=0.144$; Treatment 2: p-value $=0.184$).

\begin{finding}
Dynamic CH can explain the data significantly better than 
static CH while the difference between dynamic CH and AQRE is 
not statistically significant.
\end{finding}

As we compare the estimation results of Treatment 1 and 
Treatment 2, we can find that there is more randomness in
three-person three-period games than two-person two-period 
games---no matter from the perspective of dynamic CH, 
static CH or AQRE. Dynamic CH estimates in two-person games,
players can think 1.269 steps (95\% C.I. $=[
1.093, 1.445]$) on average but they can only think 0.370 steps 
(95\% C.I. $=[0.286, 0.454]$). On the other hand, AQRE finds
when the game changes from two-person games to three-person games, 
the precision becomes significantly smaller (from 
7.663 to 5.278), suggesting players are less likely to make 
best responses in three-person games.

\begin{finding}
Players perform significantly more sophisticatedly in two-person 
games than three-person games.
\end{finding}

In addition, as we compare the estimates of dynamic CH and static CH,
we can observe that static CH systematically under-estimates 
the average level of sophistication, especially in three-person 
three-period games. In three-person games, dynamic CH 
estimates $\hat{\tau}=0.37$ with 
95\% C.I. $=[0.286, 0.454]$ while static CH estimates
$\hat{\tau}=0.14$ with 
95\% C.I. $=[0.006, 0.216]$. 
The driving force of this result is that the learning process
is not captured by static CH. 
When players choose to claim their dirty faces later than the 
equilibrium prediction, dynamic CH is more likely to think 
this is a result of strategic delaying rather than randomness.
Therefore, the difference between the estimates of
dynamic CH and static CH tends to be larger when 
the length of the game is longer.

\begin{finding}
Compared with dynamic CH, static CH 
tends to under-estimates the average level of sophistication.
\end{finding}

To analyze the differences between the models in details, we 
compare the choice probabilities predicted by each model.
Figure \ref{fig:two_person_choice_prob} and 
\ref{fig:three_person_choice_prob} plot the choice probabilities 
of two-person games and three-person games, respectively.
Comparing the dynamic and the static CH models, we find that 
the static CH generally under-estimates the probability to choose $D$ in period 1.
In two-person games, 
the empirical frequencies of $D$ at information sets $(1,O)$ and $(1,X)$ are
$0.943$ and $0.210$, respectively, while the predictions of static CH
are $0.791$ and $0.104$.
Besides, in three-person games, the empirical frequencies of $D$ at information sets $(1,OO)$, $(1,OX)$ and $(1,XX)$ are
$0.917$, $0.393$ and $0.241$ but the estimates of static CH 
are $0.348$, $0.217$ and $0.217$.

This phenomenon is caused by the difference of level 0 players' behavior.
In two-person games, static CH assumes level 0 players uniformly randomizes across 
$\{1,2,3\}$, causing the static CH predicts the probability of 
level 0 players choosing $D$ at period 1 is $1/3$. Yet, level 0 players in the 
dynamic CH would uniformly randomize \emph{everywhere}, yielding 
them to choose $D$ with probability $1/2$.
Similarly, in three-person games, level 0 players uniformly 
randomize among $\{1,2,3,4\}$, and hence the probability for them to choose $D$ at period 1 
is 1/4, rather than 1/3 in two-person games. 
Yet, from the perspective of dynamic CH, level 0 players' behavior
is exactly the same in both two-person and three-person games.

\begin{finding}
Static CH systematically under-estimates the probabilities of 
$D$ at period 1.
\end{finding}

\begin{figure}[htbp!]
    \centering
    \includegraphics[width=\textwidth]{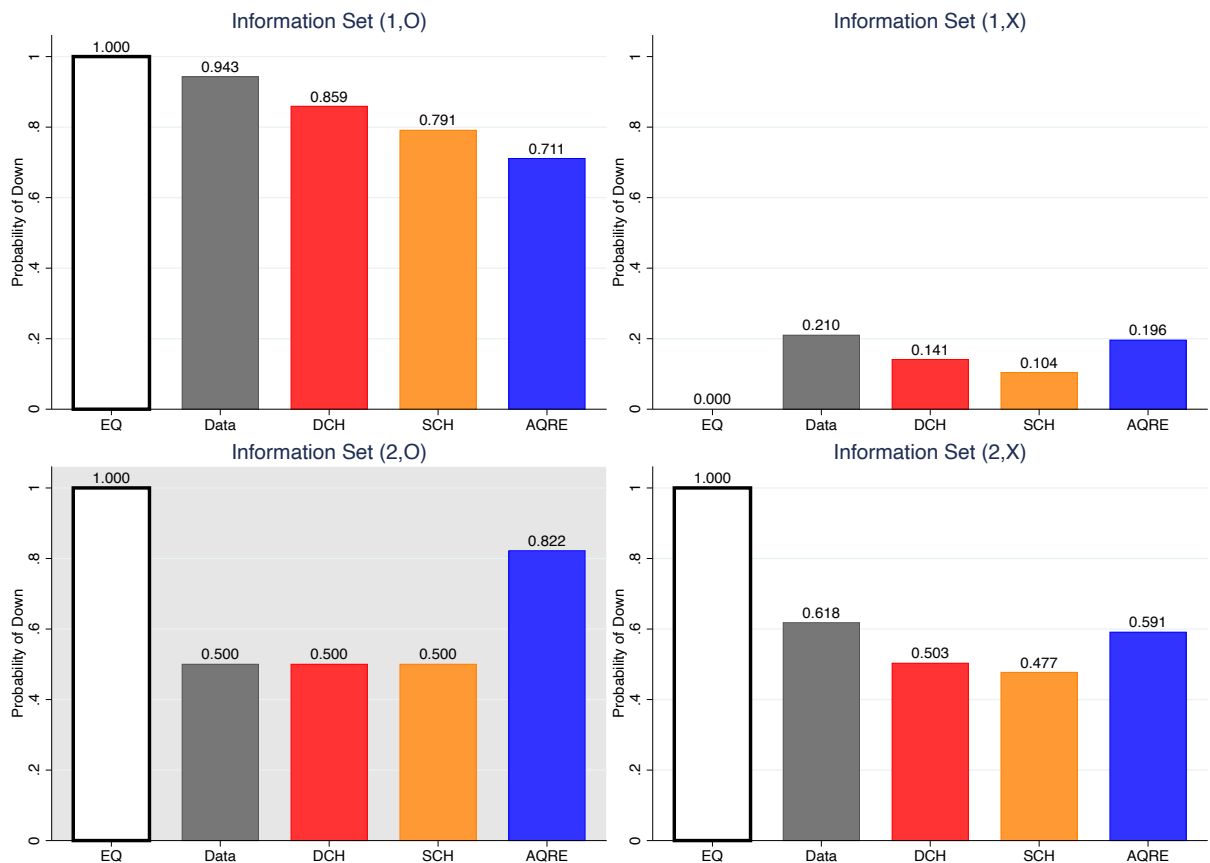}
    \caption{The choice probabilities in two-person games at different information sets. Each panel plots the empirical 
    choice frequencies and the predictions of different models at 
    one information set. The gray panel represents the 
    off-equilibrium path information set.}
    \label{fig:two_person_choice_prob}
\end{figure}

%
%
%

To better understand the key difference between CH approaches
and AQRE, we can focus on the off-equilibrium-path 
information sets.\footnote{When 
$x_{-i}=OO$, equilibrium predicts players will choose $D$ in period 1 and therefore, the game will not 
proceed beyond period 2. Similarly, when $x_{-i}=OX$, at the equilibrium, 
players should choose $D$ at period 2 and hence, the game will not proceed to period 3.}  
Conceptually speaking, the key difference between CH approaches
and AQRE is the reason why the game could proceed to 
the off-equilibrium-path information sets. From the perspective of 
AQRE, the off-equilibrium-path information sets are reached by mistakes.
As a result, AQRE predicts a high probability of choosing $D$ 
at these off-equilibrium-path information sets because the expected payoff of 
choosing $D$ is much larger than $U$ at these information sets.
On the other hand, in the framework of CH, 
the off-equilibrium-path information sets are reached because the players
are not sophisticated enough. For instance, when observing no 
dirty face, players should choose $D$ immediately since this is 
a dominant strategy. If some one doesn't choose $D$, he is 
definitely a level 0 player from the perspective of CH.

\begin{figure}[htbp!]
    \centering
    \includegraphics[width=\textwidth]{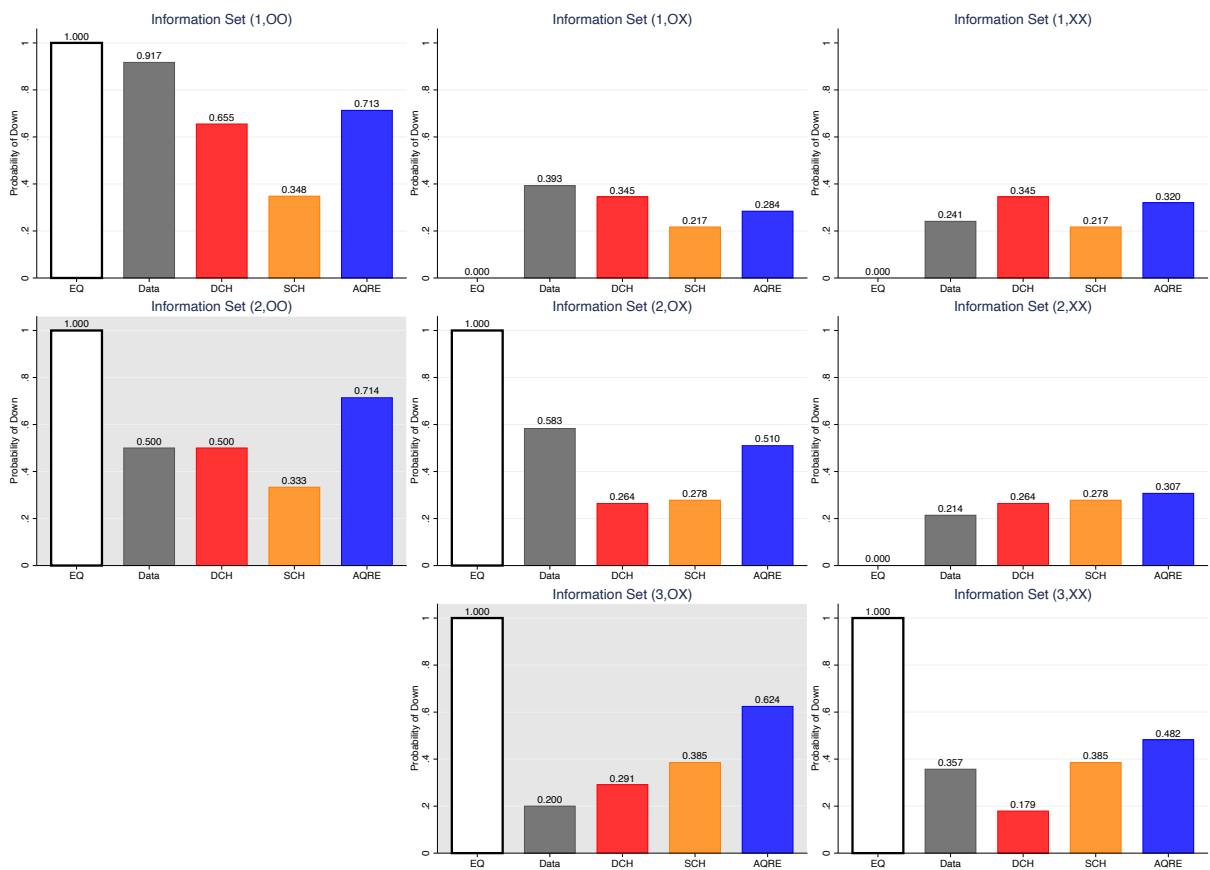}
    \caption{The choice probabilities in three-person games at different information sets. Each panel plots the empirical 
    choice frequencies and the predictions of different models at 
    one information set. The gray panels represent the 
    off-equilibrium-path information sets.}
    \label{fig:three_person_choice_prob}
\end{figure}

From the choice probabilities, we can find that dynamic CH has the 
most accurate predictions at off-path information sets
no matter in two-person games or three-person games. 
At information sets $(2,O)$ and $(2,OO)$, the empirical choice 
probabilities of $D$ are 0.5, which are correctly
predicted by dynamic CH. Furthermore, at information set $(3,OX)$, 
the empirical choice probability of $D$ is 0.2, and the prediction of dynamic CH is 0.291
while the predictions of static CH and AQRE are 0.385 and 0.624, respectively.

\begin{finding}
Dynamic CH has the most accurate predictions at the off-equilibrium-path 
information sets.
\end{finding}

Notice that dynamic and static CH models predict only level 0 players would 
possibly choose strictly dominated strategies, no matter what 
private information the players have received.
In three-person games, when $x_{-i}=OX$ or $XX$, 
it is strictly dominated to choose $D$ at period 1. Thus, 
both CH models predict the choice probabilities will be the 
same at $(1,OX)$ and $(1,XX)$. However, the empirical frequency of $D$
at $(1,OX)$ is 0.393 which is significantly higher than 0.241, the probability to 
choose $D$ at $(1,XX)$ (Mann-Whitney test: p-value $<0.001$).
This empirical pattern is not captured by the current CH models. 
One possible explanation is that players are not fully aware of the observed faces
\citep{liu2008dirty}. When $x_{-i}=OX$ and players are unaware of the dirty face, 
they may therefore choose $D$ at period 1. 
This finding suggests incorporating psychological biases to the model is a 
a potential way to improve the predictivity. 
We leave this extension for future research.

\begin{table}[htbp!]
\centering
\caption{Estimation Results for Pooled Data}
\label{tab:pooled_estimation}
\begin{threeparttable}

\begin{tabular}{ccccc}
\hline
 &  & \begin{tabular}[c]{@{}c@{}}Dynamic\\ CH\end{tabular} & \begin{tabular}[c]{@{}c@{}}Static\\ CH\end{tabular} & AQRE \\ \hline
Parameters & $\tau$ & 1.030 & 0.241 & --- \\
 & S.E. & (0.060) & (0.033) & --- \\
 & $\lambda$ & --- & --- & 6.235 \\
 & S.E. & --- & --- & (0.302) \\ \hline
Fitness & LL & -956.92 & -1047.12\phantom{-} & -940.65 \\
 & AIC & 1915.84 & 2096.23 & 1883.30 \\
 & BIC & 1921.22 & 2101.62 & 1888.69 \\ \hline
Vuong Test &  &  & 7.513 & -1.363\phantom{-} \\
p-value &  &  & $<0.001\phantom{<}$ & 0.173 \\
LR Test $\chi^2_{(1)}$ &  & 41.74 & 114.42 & 14.44 \\
p-value &  & $<0.001\phantom{<}$ & $<0.001\phantom{<}$ & $<0.001\phantom{<}$ \\ \hline
\end{tabular}
\begin{tablenotes}
\footnotesize
\item Note: The likelihood ratio test is testing if the log-likelihood of 
two-parameter models (Treatment 1 and 2) is significantly higher than the log-likelihood 
of one-parameter models.
\end{tablenotes}

\end{threeparttable}
\end{table}

In addition, we estimate the three models on the pooled data, 
and the results are reported in Table \ref{tab:pooled_estimation}.
Similar to the results of two-person games and three-person games, 
we again find that dynamic CH can explain the data significantly better than 
static CH (Vuong test: p-value $<0.001$) while the difference between dynamic CH
and AQRE is not statistically significant (Vuong test p-value $=0.173$).
Moreover, we perform a likelihood ratio test on all three models to see whether 
allowing two-person and three-person games to have different parameter can significantly 
improve the fitness. The result shows heterogeneous models are significantly better
than homogeneous models. Coupled with the second finding, we conclude
that either the level of sophistication or the precision varies with the 
complexity of the games.

\begin{finding}
The null hypothesis that two-person and three-person games share the 
same $\tau$ and $\lambda$ is rejected.
\end{finding}

This result is in line with the prediction of 
\cite{alaoui2016endogenous}. In the theory of endogenous depth of 
reasoning, players will behave as if following a cost-benefit 
analysis and endogenously determine the steps of reasoning 
to perform. One of the predictions is that when the complexity 
increases, players tend to be less sophisticated. This is exactly 
the pattern we observe in this experiment. Moreover, 
the novelty of the dynamic CH approach is that it allows us 
to precisely pin down the shift of sophistication when the 
complexity of some extensive form game has changed.

To summarize, since static CH is a misspecified model for extensive 
form games, it's not surprising that dynamic CH can better explain the data.
However, what is surprising is as we correctly specify the Poisson-CH model for 
extensive form games, we can obtain $\hat{\tau}=1.030$, which is a
\emph{regular} $\tau$ (between 1 and 2)
predicted by \cite{camerer2004cognitive}.
Besides, it's interesting that dynamic CH and AQRE can generate 
quantitatively and statistically similar fits even though the spirits are 
completely different.
In Appendix \ref{sec:robustness_appendix}, we separate the data by the first and 
the last seven rounds to control for the learning effect. We find none of the estimates 
are significantly different in the first and second half,
suggesting the robustness of our results.

\section{Discussion and Conclusion}\label{sec:conclusion}

Multi-stage games of incomplete information include a variety of applications in 
economics and political science.
The standard approach to analyzing such an environment is to solve for 
the sequential equilibrium or the perfect Bayesian equilibrium, which require the 
belief system to be mutually consistent everywhere, no matter on or off the equilibrium path.
Yet, as documented in the literature, mutual consistency is an empirically
fragile requirement.

To this end, this paper considers an alternative solution concept, the dynamic cognitive 
hierarchy solution proposed by \cite{lin2022cognitive}, to the class of multi-stage games 
of incomplete information.
One appealing feature of this approach is that the dynamic CH solution can be 
attained without mutual consistency on the belief system. Instead, 
players are heterogeneous with respect to the depth of reasoning, 
and each level of players will best respond a mixture of lower levels with 
truncated rational expectations.
In a multi-stage game of incomplete information, 
players will learn about other players' payoff-dependent types 
and levels of sophistication at the same time as the history unfolds.

Moreover, we characterize some general properties of the belief updating process. 
First, Proposition \ref{prop:support_evo} demonstrates that the 
posterior belief of other players' types 
always has full support, while players will rule out the possibility 
of levels as the game proceeds.
Secondly, when the types are drawn independently across players, 
Proposition \ref{prop:independence} establishes that every level of 
player's posterior belief is independent across players at every 
information set. 
Thirdly, if the types are correlated across players, 
Proposition \ref{prop:corr_type} proves that we can transform the game 
into another game with independent types, suggesting 
the independence result holds without loss of generality.

As pointed out by \cite{lin2022cognitive}, 
when information sets are non-singleton, the beliefs could 
be correlated across histories. In this class of games, 
players' posterior beliefs of others' types and levels are generally corrected 
at some information sets.
To further understand how the beliefs of types and levels are correlated, 
in the second half of the paper, we analyze the dynamic CH solution in
a class of two-person dirty faces games. As characterized by Proposition
\ref{prop_extensive_dirty}, dynamic CH solution predicts lower-level players 
will figure out their face types later than higher-level players. 
This contrasts with the equilibrium which predicts a degenerate distribution 
of terminal periods given each status of the face types.

Besides, we compare the dynamic and the static CH solutions of the dirty faces
games, finding that 
the representation would affect players' behavior. 
Proposition \ref{prop:dirty_representation} demonstrates whether players 
behave closer to the equilibrium in the extensive form or the strategic form 
depends on the horizon $T$ and the patience $\delta$.
This result complements to the representation effect characterized by \cite{lin2022cognitive}
that in a class of centipede games, players will always behave 
closer to the equilibrium in the extensive form than in the strategic form.

Finally, to quantitatively calibrate the dynamic CH model and compare with other 
solutions, we re-visit the dirty faces game experimental data by \cite{bayer2007dirty}.
As assuming the prior distribution of levels follows a Poisson distribution, 
dynamic CH estimates players can think $1.03$ steps on average, while the average 
level of sophistication significantly varies with the complexity of the game.
Furthermore, when comparing dynamic CH with other models, 
we find that dynamic CH can significantly explain the data better 
than the static CH while its fitness is not significantly different from  
agent quantal response equilibrium. Although the overall fitness is similar to AQRE, 
dynamic CH has the most accurate predictions at off-path information sets, 
which is not expected a priori.

For future research, we suggest to apply the dynamic CH solution to other 
applications where common knowledge of rationality and mutual consistency are
critical to the equilibration. Besides, hybridizing the dynamic CH solution with 
other behavioral models is also an interesting and important direction to purse.
Here are some potential applications where dynamic CH might provide some new insights.
\begin{itemize}
    \item \emph{Social learning}: In social learning games with repeated actions, 
    players will make inferences about the true state based on 
    their private signals and publicly observed actions (see \cite{bala1998learning} 
    and \cite{harel2021rational}). The dynamic CH solution posits that players 
    do not commonly believe others are able to make correct inferences. 
    Specifically, level 0 players' actions do not convey any information 
    about the true states, while level 1 players will always obey their private 
    signals. For higher-level players, they will constantly update their beliefs about 
    the true state and other players' levels of sophistication.
    
    \item \emph{Sequential bargaining}: The equilibrium of a sequential bargaining game
    is first characterized by \cite{rubinstein1982perfect}. To reach the perfect equilibrium, 
    players are required to choose the optimal proposal among a continuum of choices at 
    every subgame, and believe the other player to optimally respond to each proposal. This
    could be an empirically fragile requirement.
    Later, \cite{mckelvey1993engodeneity, mckelvey1995holdout} 
    consider a two-person multi-stage bargaining game 
    where each players has a private payoff-relevant type and makes a binary decide 
    (whether to give in or hold out) in every period. The game continues until 
    at least one of the players gives in. In this game, it is strictly dominant for
    the strong type of players to hold out forever, but not for the weak type.
    In the dynamic CH solution, players will update their beliefs about 
    the opponent's types and levels at the same time.
    
    \item \emph{Signaling:} In a multi-stage signaling game, an informed player will have 
    a persistent type and interact with an uninformed player repeatedly.
    \cite{kaya2009repeated} analyzes such an environment, finding that the set of 
    equilibrium signal sequences includes a large class of possibly complex 
    signal sequences. By contrast, in the dynamic CH solution,
    the uninformed player will learn about the informed player's true type and level when 
    observing a new signal. Besides, the informed player will also learn about 
    the uninformed player's level at each stage. The analysis has a variety of applications.
    One of the potential applications is the ``reputation building'' game by 
    \cite{kreps1982reputation}. In such an environment, each entrant will update their 
    beliefs about the monopolist's type (``tough'' or ``benevolent'') and level 
    after every decision of the monopolist.
    
    \item \emph{Sequential voting:} There is a large class of voting rules that 
    include multiple rounds, such as sequential voting over agendas \citep{baron1989bargaining}
    or elections based on repeated ballots and 
    elimination of one candidate in each round \citep{bag2009multi}. 
    To reach the Condorcet consistent outcomes, 
    players are required to behave strategically. When voters are not strategic or 
    when they believe others are not strategic, we can apply the dynamic CH model to 
    the context of sequential voting. In this case, voters will update their 
    beliefs about others' preferences and levels of sophistication simultaneously, 
    and vote according to their posterior beliefs in each round.
    
    \item \emph{Behavioral models:} In various environments, it is commonly found that 
    players' decisions will be substantially affected by their behavioral 
    characteristics,\footnote{For instance, \cite{lin2020evidence} recently documents
    that in ultimatum game experiments, there is a significant proportion of the players
    who strongly prefer the equal-sharing allocation.
    Furthermore, \cite{camerer2019dynamic, camerer2022using} find that
    in an unstructured bargaining game with asymmetric information, players' preferences
    about the efficiency or equality would significantly affect the equilibrium selection.}
    such as their other-regarding preferences \citep{fehr1999theory}, 
    degrees of loss aversion \citep{tversky1991loss, brown2021meta}, 
    or risk aversion \citep{holt2002risk}, etc.
    When players interact with others repeatedly, especially in a lab setting, 
    players may learn about other players' behavioral traits and levels of sophistication 
    along the gameplay history. To analyze this dynamics, we can potentially specify 
    the behavioral traits as the payoff-relevant types and apply the dynamic CH model. 
    
\end{itemize}

\bibliography{reference.bib}


\newpage
\appendix
\section{Omitted Proofs for General Properties}
\label{appendix:proof_general}

\subsection*{Proof of Lemma \ref{lemma_belief_closed_form}}

\noindent \textbf{\emph{1.}} At the beginning of the game, 
the only information available to 
player $i$ is his own type $\theta_i$ and his level of sophistication
$\tau_i = k$.
Therefore, the prior belief is the probability of 
the opponents' types and levels conditional on $\theta_i$ and $\tau_i$,
which is 
\begin{align*}
\mu_i^k(\theta_{-i}, \tau_{-i} | \theta_i, h_{\emptyset}) &= 
    \Pr(\theta_{-i}, \tau_{-i} | \theta_i , \tau_i = k) \\
    & = \Pr(\theta_{-i}| \theta_i)\Pr(\tau_{-i}|\tau_i=k)\\
    & = \mathcal{F}(\theta_{-i}|\theta_i)\prod_{j\neq i}\hat{P}_{ij}^k(\tau_j).
\end{align*}
The second equality holds because the types and levels are 
independently drawn.

\bigskip

\noindent \textbf{\emph{2.}}
We can prove this statement by induction on $t$. First, 
consider any available history at period 2, $h^1\in \mathcal{H}^1$.
Level $k$ player $i$'s belief at information set $(\theta_i, h^1)$ is 
\begin{equation}\tag{A.1}\label{eq:prop_1_recursive_def}
\mu_i^k(\theta_{-i},\tau_{-i}|\theta_i, h^1)
=\frac{\mu_{i}^k(\theta_{-i},\tau_{-i}|\theta_i, h_\emptyset)\prod_{j\neq i} \sigma_{j}^{\tau_j}(a_j^1|\theta_j,h_\emptyset) }{\sum_{\theta'_{-i}}\sum_{\{\tau'_{-i}:\tau'_j<k \; \forall j\neq i\}}\mu_{i}^k(\theta'_{-i},\tau'_{-i}|\theta_i, h_\emptyset)
\prod_{j\neq i}\sigma_{j}^{\tau'_j}(a_j^1|\theta'_j,h_\emptyset)}.      
\end{equation}
By step 1, we know 
$\mu_{i}^k(\theta_{-i},\tau_{-i}|\theta_i, h_\emptyset) = 
\mathcal{F}(\theta_{-i}|\theta_i)\prod_{j\neq i}
\hat{P}_{ij}^k(\tau_j)$. Plugging in Equation (\ref{eq:prop_1_recursive_def}),
we can obtain that 
\begin{align*}
\mu_i^k(\theta_{-i},\tau_{-i}|\theta_i, h^1)
&=\frac{\mu_{i}^k(\theta_{-i},\tau_{-i}|\theta_i, h_\emptyset)\prod_{j\neq i} \sigma_{j}^{\tau_j}(a_j^1|\theta_j,h_\emptyset) }{\sum_{\theta'_{-i}}\sum_{\{\tau'_{-i}:\tau'_j<k \; \forall j\neq i\}}\mu_{i}^k(\theta'_{-i},\tau'_{-i}|\theta_i, h_\emptyset)
\prod_{j\neq i}\sigma_{j}^{\tau'_j}(a_j^1|\theta'_j,h_\emptyset)}\\
&=\frac{\mathcal{F}(\theta_{-i}|\theta_i)\prod_{j\neq i} \left\{\hat{P}_{ij}^k(\tau_j)\sigma_{j}^{\tau_j}(a_j^1|\theta_j,h_\emptyset)\right\} }{\sum_{\theta'_{-i}}\sum_{\{\tau'_{-i}:\tau'_j<k \; \forall j\neq i\}}\mathcal{F}(\theta'_{-i}|\theta_i)
\prod_{j\neq i}\left\{\hat{P}_{ij}^k(\tau'_j)\sigma_{j}^{\tau'_j}(a_j^1|\theta'_j,h_\emptyset)\right\}}.
\end{align*}
Next, suppose there is $t'$ such that the statement holds for 
every period $t=2,\ldots, t'$.
Consider period $t'+1$ and any history available at period $t'+1$, 
$h^{t'}\in \mathcal{H}^{t'}$. Then level $k$ player $i$'s belief at information set
$(\theta_i, h^{t'})$ is $\mu_i^k(\theta_{-i},\tau_{-i}|\theta_i, h^{t'}) = $
\begin{align*}
&\frac{\mu_{i}^k(\theta_{-i},\tau_{-i}|\theta_i, h^{t'-1})\prod_{j\neq i} \sigma_{j}^{\tau_j}(a_j^{t'}|\theta_j,h^{t'-1}) }{\sum_{\theta'_{-i}}\sum_{\{\tau'_{-i}:\tau'_j<k \; \forall j\neq i\}}\mu_{i}^k(\theta'_{-i},\tau'_{-i}|\theta_i, h^{t'-1})
\prod_{j\neq i}\sigma_{j}^{\tau'_j}(a_j^{t'}|\theta'_j,h^{t'-1})}\\
=&\frac{\mathcal{F}(\theta_{-i}|\theta_i)\prod_{j\neq i}\left\{ 
\hat{P}_{ij}^k(\tau_j)\prod_{l=1}^{t'-1}\sigma_j^{\tau_j}(a_j^{l}|\theta_j, h^{l-1})
\right\}\prod_{j\neq i} \sigma_{j}^{\tau_j}(a_j^{t'}|\theta_j,h^{t'-1}) }{\sum_{\theta'_{-i}}\sum_{\{\tau'_{-i}:\tau'_j<k \; \forall j\neq i\}}\mathcal{F}(\theta'_{-i}|\theta_i)\prod_{j\neq i}\left\{ 
\hat{P}_{ij}^k(\tau'_j)\prod_{l=1}^{t'-1}\sigma_j^{\tau'_j}(a_j^{l}|\theta'_j, h^{l-1})
\right\}
\prod_{j\neq i}\sigma_{j}^{\tau'_j}(a_j^{t'}|\theta'_j,h^{t'-1})}\\
=& \frac{\mathcal{F}(\theta_{-i}|\theta_i)\prod_{j\neq i}\left\{ 
\hat{P}_{ij}^k(\tau_j)\prod_{l=1}^{t'}\sigma_j^{\tau_j}(a_j^{l}|\theta_j, h^{l-1})
\right\}}{\sum_{\theta'_{-i}}\sum_{\{\tau'_{-i}:\tau'_j<k \; \forall j\neq i\}}\mathcal{F}(\theta'_{-i}|\theta_i)\prod_{j\neq i}\left\{ 
\hat{P}_{ij}^k(\tau'_j)\prod_{l=1}^{t'}\sigma_j^{\tau'_j}(a_j^{l}|\theta'_j, h^{l-1})
\right\}}.
\end{align*}
The second equality holds because of the induction hypothesis. 
This completes the proof. $\square$

\subsection*{Proof of Proposition \ref{prop:support_evo}}

The proof for the first statement can be found in Proposition 2 of 
\cite{lin2022cognitive}.
For the second statement, we can prove by induction on $t$.

\noindent \textbf{\emph{Base Case:}} 
Consider $t=1$. For any $i\in N$, $\tau_{i}\in \mathbb{N}$ and 
$\theta_i\in \Theta_i$, by Lemma \ref{lemma_belief_closed_form}, 
we know the belief about other players' types and levels is 
$\mu_i^{\tau_{i}}(\theta_{-i},\tau_{-i}|\theta_i, h_\emptyset) = 
\mathcal{F}(\theta_{-i}|\theta_i)\prod_{j\neq i}\hat{P}^{\tau_i}_{ij}(\tau_j)$.
Since $\mathcal{F}$ has full support, for any $\theta_{-i} \in \Theta_{-i}$,
\begin{align*}
\sum_{\{\tau_{-i}: \tau_j < \tau_i\; \forall j\neq i\}}
\mu_{i}^{\tau_i}(\theta_{-i},\tau_{-i}| \theta_i, h_\emptyset) = 
\sum_{\{\tau_{-i}: \tau_j < \tau_i\; \forall j\neq i\}}\mathcal{F}(\theta_{-i}|\theta_i)\prod_{j\neq i}\hat{P}^{\tau_i}_{ij}(\tau_j) 
=\mathcal{F}(\theta_{-i}|\theta_i)>0.
\end{align*}
Hence, the statement is true at period 1. 

\noindent \textbf{\emph{Induction Step:}}  Next,
suppose there is $t'>1$ such that the result holds 
for all $t=1,\ldots, t'$. We want to show the statement holds 
at period $t'+1$. 
Let $h^{t'}$ be any available history at period $t'+1$ where
$h^{t'} = (h^{t'-1}, a^{t'})$. 
Therefore, player $i$'s posterior belief at $h^{t'}$ is 
$$\mu_{i}^{\tau_i}(\theta_{-i},\tau_{-i}|\theta_i, h^{t'})
= \frac{\mu_{i}^{\tau_i}(\theta_{-i}, \tau_{-i}| \theta_i, h^{t'-1}) 
\prod_{j\neq i}\sigma_j^{\tau_j}(a_j^{t'}| \theta_j, h^{t'-1})}{\sum_{\theta'_{-i}\in \Theta_{-i}}
\sum_{\{\tau'_{-i}: \tau'_j <\tau_i \; \forall j\neq i\}}
\mu_{i}^{\tau_i}(\theta'_{-i}, \tau'_{-i}| \theta_i, h^{t'-1}) 
\prod_{j\neq i}\sigma_j^{\tau'_j}(a_j^{t'}| \theta'_j, h^{t'-1})},$$
which is well-defined because level 0 players are always in the 
support and $\sigma^0_j(a_j^{t'}|\theta'_j, h^{t'-1}) = \frac{1}{|A_j(h^{t'-1})|}>0$ for all $j$.
By induction hypothesis, we know $supp_{i}(\theta_{-i} |\tau_i, \theta_i, h^{t'-1})=\Theta_{-i}$. 
Therefore, as we fix any $\theta_{-i}\in \Theta_{-i}$,
we know $\mu_i^{\tau_i}(\theta_{-i}, (0,\ldots, 0)|\theta_i, h^{t'-1})>0$,
suggesting that $\theta_{-i} \in supp_{i}(\theta_{-i} |\tau_i, \theta_i, 
h^{t'})$ because
\begin{align*}
\mu_{i}^{\tau_i}(\theta_{-i}|\theta_i, h^{t'})
&= \frac{\sum_{\{\tau_{-i}: \tau_j <\tau_i \; \forall j\neq i\}} \mu_{i}^{\tau_i}(\theta_{-i}, \tau_{-i}| \theta_i, h^{t'-1}) 
\prod_{j\neq i}\sigma_j^{\tau_j}(a_j^{t'}| \theta_j, h^{t'-1})}{\sum_{\theta'_{-i}\in \Theta_{-i}}
\sum_{\{\tau'_{-i}: \tau'_j <\tau_i \; \forall j\neq i\}}
\mu_{i}^{\tau_i}(\theta'_{-i}, \tau'_{-i}| \theta_i, h^{t'-1}) 
\prod_{j\neq i}\sigma_j^{\tau'_j}(a_j^{t'}| \theta'_j, h^{t'-1})}\\
&\geq \frac{ \mu_{i}^{\tau_i}(\theta_{-i}, (0,\ldots,0)| \theta_i, h^{t'-1}) 
\prod_{j\neq i}\sigma_j^{0}(a_j^{t'}| \theta_j, h^{t'-1})}{\sum_{\theta'_{-i}\in \Theta_{-i}}
\sum_{\{\tau'_{-i}: \tau'_j <\tau_i \; \forall j\neq i\}}
\mu_{i}^{\tau_i}(\theta'_{-i}, \tau'_{-i}| \theta_i, h^{t'-1}) 
\prod_{j\neq i}\sigma_j^{\tau'_j}(a_j^{t'}| \theta'_j, h^{t'-1})}\\
& = \frac{ \mu_{i}^{\tau_i}(\theta_{-i}, (0,\ldots,0)| \theta_i, h^{t'-1}) 
\prod_{j\neq i}\frac{1}{|A_j(h^{t'-1})|}}{\sum_{\theta'_{-i}\in \Theta_{-i}}
\sum_{\{\tau'_{-i}: \tau'_j <\tau_i \; \forall j\neq i\}}
\mu_{i}^{\tau_i}(\theta'_{-i}, \tau'_{-i}| \theta_i, h^{t'-1}) 
\prod_{j\neq i}\sigma_j^{\tau'_j}(a_j^{t'}| \theta'_j, h^{t'-1})}
&> 0.
\end{align*}
This completes the proof of the proposition. $\qquad\qquad\qquad\qquad\qquad\qquad
\qquad\qquad\qquad\qquad\square$

\subsection*{Proof of Proposition \ref{prop:independence}}

We prove this by induction on $t$. Let $\sigma$ be any level-dependent 
strategy profile and $\mathcal{F}$ and $P$ be any distributions of 
types and levels. First, consider $t=1$. By Lemma \ref{lemma_belief_closed_form}, we know $\mu_i^k(\theta_{-i},\tau_{-i}|\theta_i, h_{\emptyset}) = 
\mathcal{F}(\theta_{-i}|\theta_i)\prod_{j\neq i}\hat{P}^k_{ij}(\tau_j)$. 
As the prior distribution of types is independent across players, 
we can obtain that 
\begin{align*}
    \mu_i^k(\theta_{-i},\tau_{-i}|\theta_i, h_{\emptyset}) &= \mathcal{F}(\theta_{-i}|\theta_i)\prod_{j\neq i}\hat{P}^k_{ij}(\tau_j)\\
    &= \prod_{j\neq i} \mathcal{F}_j(\theta_j) \prod_{j\neq i}\hat{P}^k_{ij}(\tau_j)\\
    & = \prod_{j \neq i} \left[\mathcal{F}_j(\theta_j)\hat{P}^k_{ij}(\tau_j)\right]
    = \prod_{j\neq i}\mu_{ij}^k(\theta_j, \tau_j| \theta_i, h_{\emptyset}).
\end{align*}
Therefore, we know the result is true at $t=1$. Next, suppose there is 
$t'>1$ such that the result holds for all $t=1,\ldots, t'$. 
We want to show that the result holds at period $t'+1$. 
Let $h^{t'} \in \mathcal{H}^{t'}$ be any available history in period 
$t'+1$ where $h^{t'} = (h^{t'-1}, a^{t'})$.
Therefore, player $i$'s posterior belief at history $h^{t'}$ is 
$$\mu_{i}^k(\theta_{-i},\tau_{-i}|\theta_i, h^{t'})
= \frac{\mu_{i}^k(\theta_{-i}, \tau_{-i}| \theta_i, h^{t'-1}) 
\prod_{j\neq i}\sigma_j^{\tau_j}(a_j^{t'}| \theta_j, h^{t'-1})}{\sum_{\theta'_{-i}\in \Theta_{-i}}
\sum_{\{\tau'_{-i}: \tau'_j <k \; \forall j\neq i\}}
\mu_{i}^k(\theta'_{-i}, \tau'_{-i}| \theta_i, h^{t'-1}) 
\prod_{j\neq i}\sigma_j^{\tau'_j}(a_j^{t'}| \theta'_j, h^{t'-1})}.$$
By induction hypothesis, we know 
$$\mu_i^k(\theta_{-i},\tau_{-i}|\theta_i, h^{t'-1}) = \prod_{j\neq i}
\mu_{ij}^k(\theta_j, \tau_j | \theta_i, h^{t'-1}).$$
Therefore, as we rearrange the posterior belief $\mu_i^{k}(\theta_{-i},\tau_{-i}|\theta_i, h^{t'})$, we can obtain that 

\begin{align*}
\mu_{i}^k(\theta_{-i},\tau_{-i}|\theta_i, h^{t'})
&= \frac{\mu_{i}^k(\theta_{-i}, \tau_{-i}| \theta_i, h^{t'-1}) 
\prod_{j\neq i}\sigma_j^{\tau_j}(a_j^{t'}| \theta_j, h^{t'-1})}{\sum_{\theta'_{-i}\in \Theta_{-i}}
\sum_{\{\tau'_{-i}: \tau'_j <k \; \forall j\neq i\}}
\mu_{i}^k(\theta'_{-i}, \tau'_{-i}| \theta_i, h^{t'-1}) 
\prod_{j\neq i}\sigma_j^{\tau'_j}(a_j^{t'}| \theta'_j, h^{t'-1})}\\
& = \frac{\prod_{j\neq i}\left[\mu_{ij}^k(\theta_{j}, \tau_{j}| \theta_i, h^{t'-1}) \sigma_j^{\tau_j}(a_j^{t'}| \theta_j, h^{t'-1})\right]}{\sum_{\theta'_{-i}\in \Theta_{-i}}
\sum_{\{\tau'_{-i}: \tau'_j <k \; \forall j\neq i\}}
\prod_{j\neq i}\left[\mu_{ij}^k(\theta'_{j}, \tau'_{j}| \theta_i, h^{t'-1})\sigma_j^{\tau'_j}(a_j^{t'}| \theta'_j, h^{t'-1})\right]}\\
&= \frac{\prod_{j\neq i}\left[\mu_{ij}^k(\theta_{j}, \tau_{j}| \theta_i, h^{t'-1}) \sigma_j^{\tau_j}(a_j^{t'}| \theta_j, h^{t'-1})\right]}{\sum_{\theta'_{-i}\in \Theta_{-i}}
\prod_{j\neq i}\left[\sum_{\tau'_j <k}\mu_{ij}^k(\theta'_{j}, \tau'_{j}| \theta_i, h^{t'-1})\sigma_j^{\tau'_j}(a_j^{t'}| \theta'_j, h^{t'-1})\right]}\\
&= \frac{\prod_{j\neq i}\left[\mu_{ij}^k(\theta_{j}, \tau_{j}| \theta_i, h^{t'-1}) \sigma_j^{\tau_j}(a_j^{t'}| \theta_j, h^{t'-1})\right]}{
\prod_{j\neq i}\left[\sum_{\theta'_{j}\in \Theta_{j}}\sum_{\tau'_j <k}\mu_{ij}^k(\theta'_{j}, \tau'_{j}| \theta_i, h^{t'-1})\sigma_j^{\tau'_j}(a_j^{t'}| \theta'_j, h^{t'-1})\right]}.
\end{align*}
As a result, we can conclude that 
\begin{align*}
\mu_i^k(\theta_{-i},\tau_{-i}|\theta_i, h^{t'}) &= \prod_{j\neq i}\left[\frac{\mu_{ij}^k(\theta_{j}, \tau_{j}| \theta_i, h^{t'-1}) \sigma_j^{\tau_j}(a_j^{t'}| \theta_j, h^{t'-1})}{
\sum_{\theta'_{j}\in \Theta_{j}}\sum_{\tau'_j <k}\mu_{ij}^k(\theta'_{j}, \tau'_{j}| \theta_i, h^{t'-1})\sigma_j^{\tau'_j}(a_j^{t'}| \theta'_j, h^{t'-1})}\right]\\
& =\prod_{j\neq i }\mu_{ij}^k(\theta_j,\tau_j|\theta_i, h^{t'}).
\end{align*}
This completes the proof of the proposition.  $\qquad\qquad\qquad\qquad\qquad\qquad
\qquad\qquad\qquad\qquad\square$

\subsection*{Proof of Proposition \ref{prop:corr_type}}

By Lemma \ref{lemma_belief_closed_form}, we know that
in the transformed (independent types)
game $\hat{\Gamma}$, 
level $k$ player $i$'s belief at $h^t\in \mathcal{H}^t$ is
\begin{align*}
\hat{\mu}_i^k(\theta_{-i},\tau_{-i}| \theta_i, h^t)
&=\frac{\hat{\mathcal{F}}(\theta_{-i}|\theta_i)
\prod_{j\neq i}\left\{ \hat{P}_{ij}^k(\tau_j)\prod_{l=1}^t\sigma_j^{\tau_j}(a_j^l|\theta_j, h^{l-1}) \right\}}{\sum_{\theta'_{-i}}\sum_{\{\tau'_{-i}:\tau'_j<k \; \forall j\neq i\}}
\hat{\mathcal{F}}(\theta'_{-i}|\theta_i)
\prod_{j\neq i}\left\{ \hat{P}_{ij}^k(\tau'_j)\prod_{l=1}^t\sigma_j^{\tau'_j}(a_j^l|\theta'_j, h^{l-1}) \right\}}
\end{align*}
\begin{align*}
\qquad\qquad =\frac{\prod_{j\neq i}\left\{ \hat{P}_{ij}^k(\tau_j)\prod_{l=1}^t\sigma_j^{\tau_j}(a_j^l|\theta_j, h^{l-1}) \right\}}{\sum_{\theta'_{-i}}\sum_{\{\tau'_{-i}:\tau'_j<k \; \forall j\neq i\}}\prod_{j\neq i}\left\{ \hat{P}_{ij}^k(\tau'_j)\prod_{l=1}^t\sigma_j^{\tau'_j}(a_j^l|\theta'_j, h^{l-1}) \right\}}.
\end{align*}
Therefore, we can obtain that 
\begin{align*}
\mu_i^k(\theta_{-i},\tau_{-i}| \theta_i, h^t)
&=\frac{\mathcal{F}(\theta_{-i}|\theta_i)
\prod_{j\neq i}\left\{ \hat{P}_{ij}^k(\tau_j)\prod_{l=1}^t\sigma_j^{\tau_j}(a_j^l|\theta_j, h^{l-1}) \right\}}{\sum_{\theta'_{-i}}\sum_{\{\tau'_{-i}:\tau'_j<k \; \forall j\neq i\}}
\mathcal{F}(\theta'_{-i}|\theta_i)
\prod_{j\neq i}\left\{ \hat{P}_{ij}^k(\tau'_j)\prod_{l=1}^t\sigma_j^{\tau'_j}(a_j^l|\theta'_j, h^{l-1}) \right\}}\\
&= \frac{\mathcal{F}(\theta_{-i}|\theta_i)
\hat{\mu}_{i}^k(\theta_{-i},\tau_{-i}|\theta_i, h^t)}{\sum_{\theta'_{-i}}\sum_{\{\tau'_{-i}:\tau'_j<k \; \forall j\neq i\}}
\mathcal{F}(\theta'_{-i}|\theta_i)
\hat{\mu}_{i}^k(\theta'_{-i},\tau'_{-i}|\theta_i, h^t)}.
\end{align*}

To complete the proof, it suffices to show that 
for each level $k$ player $i$ and every $h^t\in 
\mathcal{H}\backslash\mathcal{H}^T$, 
maximizing $\mathbb{E}u_i^k$ given belief $\mu_i^k$
and $\sigma_{-i}^{-k}$ is equivalent to 
maximizing $\mathbb{E}\hat{u}_i^k$ given belief 
$\hat{\mu}_i^k$ and $\hat{\sigma}_{-i}^{-k} = \sigma_{-i}^{-k}.$ 
This is true because the expected payoff in the original (correlated
types) game is:
\begin{align*}
\mathbb{E}u_i^{k}(\sigma&|\theta_i,h^t)=\\   
&\sum_{h^T\in\mathcal{H}^T} \sum_{\theta_{-i}\in \Theta_{-i}}
\sum_{\{\tau_{-i}: \tau_j <k \; \forall j\neq i\}}\mu_i^{k}(\theta_{-i},\tau_{-i}|
\theta_i, h^t) P_i^{k}(h^T|h^{t}, \theta,\tau_{-i},
\sigma_{-i}^{-k},\sigma_i^{k}) 
u_i(h^T,\theta_i,\theta_{-i}),
\end{align*}
which is proportional to 
\begin{align*}
&\mathbb{E}\hat{u}_i^{k}(\sigma|\theta_i,h^t)=\\   
&\sum_{h^T\in\mathcal{H}^T} \sum_{\theta_{-i}\in \Theta_{-i}}
\sum_{\{\tau_{-i}: \tau_j <k \; \forall j\neq i\}}\mathcal{F}(\theta_{-i}|\theta_i)
\hat{\mu}_i^{k}(\theta_{-i},\tau_{-i}|
\theta_i, h^t) P_i^{k}(h^T|h^{t}, \theta,\tau_{-i},
\sigma_{-i}^{-k},\sigma_i^{k}) 
u_i(h^T,\theta_i,\theta_{-i})\\
=&\sum_{h^T\in\mathcal{H}^T} \sum_{\theta_{-i}\in \Theta_{-i}}
\sum_{\{\tau_{-i}: \tau_j <k \; \forall j\neq i\}}
\hat{\mu}_i^{k}(\theta_{-i},\tau_{-i}|
\theta_i, h^t) P_i^{k}(h^T|h^{t}, \theta,\tau_{-i},
\sigma_{-i}^{-k},\sigma_i^{k}) 
\hat{u}_i(h^T,\theta_i,\theta_{-i}).
\end{align*}
This completes the proof of the proposition. $\qquad\qquad\qquad\qquad\qquad\qquad
\qquad\qquad\qquad\qquad\square$


\newpage
\section{Omitted Proofs for Two-Person Dirty-Faces Games}
\label{appendix:proof_dirty}

\subsection*{Proof of Proposition \ref{prop_extensive_dirty}}

\noindent\textbf{\emph{Step 1:}} Consider any $i\in N$. If $x_{-i}=O$, 
then player $i$ knows his face is dirty immediately. Therefore, 
$D$ is a dominant strategy, suggesting $\sigma_i^k(t, O)=1$ for all $k\geq 1$
and $1\leq t \leq T$. If $x_{-i}=X$, 
player $i$'s belief of having a dirty face at period 1 is 
$p$. Hence, the expected payoff of choosing $D$ at period 1 is 
$p\alpha - (1-p)<0$, implying $\sigma_i^k(1,X) =0$
for all $k\geq 1$.
Finally, since level 1 players believe the other player's actions don't convey any 
information about their own face types,
the expected payoff of $D$ at each period is $p\alpha - (1-p)<0$, implying 
$\sigma_i^1(t,X)=0$ for any $1\leq t \leq T$.

\bigskip

\noindent \textbf{\emph{Step 2:}} 
Consider any level $k\geq 2$, and period $2\leq t \leq T$. In this step, we 
characterize the posterior beliefs of the dynamic CH solution when $x_{-i}=X$.
When the game proceeds to period $t$, the posterior belief of 
$(x_i, \tau_{-i}) = (f,l)$ for any $f\in\{O,X\}$ and $0\leq l \leq k-1$ is:
\begin{equation}\tag{A.2}\label{eq:post_belief}
    \mu_i^k(f,l|t,X) = \frac{\left[\prod_{t'=1}^{t-1}(1-\sigma_{-i}^l(t',f))\right]p_l\Pr(f)}
    {\sum_{x \in \{O,X\}}\sum_{j=0}^{k-1}
    \left[\prod_{t'=1}^{t-1}(1-\sigma_{-i}^j(t',x))\right]p_j\Pr(x)}.
\end{equation}
Notice that by step 1, when observing a dirty face, 
strategic players will choose $D$ in period 1.
That is, $\sigma_{-i}^l(t',O)=1$ for all $1\leq t' \leq t-1$. 
Therefore, as the game proceeds to period $t$, level $k$ players would update that 
\emph{it is impossible for the other player to observe a dirty face and have a positive 
level of sophistication at the same time.}
Furthermore, let $\mathbb{S}_i^k(t)$ be the support of level $k$ player's marginal belief of 
$\tau_{-i}$ at period $t$. For any $0\leq l \leq k-1$,
$$l\in \mathbb{S}_i^k(t) \iff \sum_{x_i\in\{O,X\}}
\prod_{t'=1}^{t-1}(1-\sigma_{-i}^l(t',x_i))>0, $$
and we let $\mathbb{S}_{i+}^k(t) \equiv \mathbb{S}_i^k(t) \backslash \{0\}$.
Therefore, we can obtain from (\ref{eq:post_belief}) that 
\begin{align*}
    &\mu_i^k(X,0|t,X) = \frac{\left(\frac{1}{2}\right)^{t-1}p p_0}{\left(\frac{1}{2}\right)^{t-1} p_0 + p \sum_{j\in \mathbb{S}_{i+}^k(t)}p_j},
    &\mu_i^k(O,0|t,X) = \frac{\left(\frac{1}{2}\right)^{t-1}(1-p) p_0}{\left(\frac{1}{2}\right)^{t-1} p_0 + p \sum_{\in \mathbb{S}_{i+}^k(t)}p_j}.    
\end{align*}
Moreover, for any $1\leq k'\leq k-1$, $\mu_i^k(O,k'|t,X) = 0,$ and 
for any $l \in \mathbb{S}_{i+}^k(t)$,
\begin{align*}
    \mu_i^k(X,l|t,X) = \frac{p p_{l}}{\left(\frac{1}{2}\right)^{t-1} p_0 + 
    p \sum_{j\in \mathbb{S}_{i+}^k(t)}p_j}.
\end{align*}
Under the posterior beliefs, the marginal belief of having a dirty face at period 
$2\leq t \leq T$ is: 
\begin{align*}
\mu_{i}^k(X|t,X) = \sum_{j=0}^{k-1} \mu_{i}^k(X,j|t,X)
= \frac{p\left[\left(\frac{1}{2}\right)^{t-1} p_0 + \sum_{j\in \mathbb{S}_{i+}^k(t)}p_j \right]}{\left(\frac{1}{2}\right)^{t-1} p_0 + p \sum_{j\in \mathbb{S}_{i+}^k(t)}p_j}. 
\end{align*}
Therefore, the expected payoff of choosing $D$ at period $t$ is 
$\delta^{t-1}\left[(1+\alpha)\mu_i^k(X|t,X) - 1\right]$,
which equals to $\mathbb{E}u_i^k(D|t, X)=$
\begin{equation}\tag{A.3}\label{eq:general_expected_payoff}
\frac{\delta^{t-1}}{\left(\frac{1}{2}\right)^{t-1}p_0 
            +p\sum_{j\in \mathbb{S}_{i+}^k(t)}p_j}
            \left\{p\alpha \left[\left(\frac{1}{2}\right)^{t-1}p_0 + \sum_{j\in \mathbb{S}_{i+}^k(t)}p_j\right]-(1-p)\left[\left(\frac{1}{2}\right)^{t-1}p_0 \right] \right\}.    
\end{equation}
Finally, at period $t$, level $k$ players believe the other player would 
choose $U$ with probability
\begin{equation}\tag{A.4}\label{eq:general_u_prob}
    \frac{1}{2}\mu_i^k(0|t,X) + \sum_{j\in \mathbb{S}_{i+}^{k}(t+1)} \mu_i^k(j|t,X) =  
    \frac{\left(\frac{1}{2}\right)^{t}p_0 + p\sum_{j\in \mathbb{S}_{i+}^{k}(t+1)}p_j}{\left(\frac{1}{2}\right)^{t-1}p_0 + p\sum_{j\in \mathbb{S}_{i+}^{k}(t)}p_j}
\end{equation}

\bigskip

\noindent\textbf{\emph{Step 3:}} This step proves a monotonicity result---if 
$\sigma_i^k(t,X)=1$, then $\sigma_i^{k+1}(t,X)=1$ for any 
$k\geq 2$ and $2\leq t\leq T$. The proof consists of two cases. We first consider
period $T$. From (\ref{eq:general_expected_payoff}), we know $\sigma_i^k(T,X)=1$ 
if and only if 
\begin{align*}
&\frac{\delta^{T-1}}{\left(\frac{1}{2}\right)^{T-1}p_0 +p\sum_{j\in \mathbb{S}^{k}_{i+}(T)} p_j}\left\{p\alpha \left[\left(\frac{1}{2}\right)^{T-1}p_0 + \sum_{j\in \mathbb{S}^{k}_{i+}(T)} p_j\right]-(1-p)\left[\left(\frac{1}{2}\right)^{T-1}p_0 \right] \right\}\geq 0\\
&\qquad\qquad\qquad\qquad\qquad\qquad\qquad\quad\;\;\;\;
\iff \alpha \geq \left(\frac{1-p}{p}\right)\left(\frac{\left(\frac{1}{2}\right)^{T-1}p_0}{\left(\frac{1}{2}\right)^{T-1}p_0 + \sum_{j\in \mathbb{S}^{k}_{i+}(T)} p_j} \right).
\end{align*}
Because $\mathbb{S}_i^k(T) \subseteq \mathbb{S}_i^{k+1}(T)$, we can find that 
\begin{align*}
   &\alpha \geq \left(\frac{1-p}{p}\right)\left(\frac{\left(\frac{1}{2}\right)^{T-1}p_0}{\left(\frac{1}{2}\right)^{T-1}p_0 + \sum_{j\in \mathbb{S}^{k}_{i+}(T)} p_j} \right) 
   \geq \left(\frac{1-p}{p}\right)\left(\frac{\left(\frac{1}{2}\right)^{T-1}p_0}{\left(\frac{1}{2}\right)^{T-1}p_0 + \sum_{j\in \mathbb{S}^{k+1}_{i+}(T)} p_j} \right),
\end{align*}
implying it is also optimal for level $k+1$ players to choose $D$ at period $T$.

Second, we consider any period $2\leq t \leq T-1$. Notice that since 
level $k$ players would choose $D$ at period $t$, $k\not\in \mathbb{S}_i^{k+1}(t+1)$, 
suggesting $\mathbb{S}_{i+}^{k}(t') = \mathbb{S}_{i+}^{k+1}(t')$ for any $t+1\leq t' \leq T$.
Therefore, as the game proceeds beyond period $t$, level $k$ and level $k+1$
players will have the same value. As letting $V^{\tilde{k}}_{\tilde{t}}$ 
be level $\tilde{k}$ player's value 
as period $\tilde{t}$, we can obtain that $V_{t+1}^k = V_{t+1}^{k+1}$.
Coupled with that $\mathbb{S}_{i+}^{k}(t) \subseteq \mathbb{S}_{i+}^{k+1}(t)$, 
we can find level $k+1$ player's expected payoff of choosing $U$ at period $t$ satisfies
\begin{align*}
    \frac{\left(\frac{1}{2}\right)^{t}p_0 + p\sum_{j\in \mathbb{S}_{i+}^{k+1}(t+1)}p_j}{\left(\frac{1}{2}\right)^{t-1}p_0 + p\sum_{j\in \mathbb{S}_{i+}^{k+1}(t)}p_j} V^{k+1}_{t+1} \leq
    \frac{\left(\frac{1}{2}\right)^{t}p_0 + p\sum_{j\in \mathbb{S}_{i+}^{k}(t+1)}p_j}{\left(\frac{1}{2}\right)^{t-1}p_0 + p\sum_{j\in \mathbb{S}_{i+}^{k}(t)}p_j} V^k_{t+1},
\end{align*}
where the RHS is level $k$ player's expected payoff of choosing $U$ at period $t$.
The inequality shows level $k$ player's expected payoff of 
choosing $U$ is weakly higher than level $k+1$ player's expected payoff of choosing $U$. 
It suffices to complete the proof by arguing that level $k+1$ player's expected payoff
of $D$ at period $t$ is higher than level $k$ player's expected payoff of $D$. 
This is true because $\mathbb{S}_{i+}^{k}(t) \subseteq \mathbb{S}_{i+}^{k+1}(t)$ implies 
$\mu_i^{k+1}(X|t,X) \geq \mu_i^{k}(X|t,X)$, and hence, 
\begin{align*}
\delta^{t-1}\left[(1+\alpha)\mu_i^{k+1}(X|t,X) - 1\right] \geq \delta^{t-1}\left[(1+\alpha)\mu_i^{k}(X|t,X) - 1\right].
\end{align*}

\bigskip

\noindent \textbf{\emph{Step 4:}} We prove the proposition by induction on $k$. 
In this step, we show the statement holds for level 2 players, which is the 
base case of the induction argument. 
Notice that from step 1, we know $\sigma_i^1(t,X)=0$ for all $1\leq t\leq T$, so
$\mathbb{S}_{i+}^2(t)=\{1 \}$ for all $1\leq t\leq T$.
Therefore, from equation (\ref{eq:general_expected_payoff}), we can obtain that the 
expected payoff of choosing $D$ at period $T$ is 
$$ \mathbb{E}u_i^2(D|T,X) = 
\frac{\delta^{T-1}}{\left(\frac{1}{2}\right)^{T-1}p_0 +pp_1}
        \left\{p\alpha \left[\left(\frac{1}{2}\right)^{T-1}p_0 + p_1\right]
            -(1-p)\left[\left(\frac{1}{2}\right)^{T-1}p_0 \right] \right\},$$
suggesting $D$ is optimal at period $T$ if and only if 
\begin{align*}
    \mathbb{E}u_i^2(D|T,X) \geq 0 \iff
    \alpha \geq \left(\frac{1-p}{p}\right)\left(\frac{\left(\frac{1}{2}\right)^{T-1}p_0}{\left(\frac{1}{2}\right)^{T-1}p_0 + p_1} \right).
\end{align*}

For any period $2\leq t\leq T-1$, we first prove the direction of necessity.
If the player chooses $U$ at period $t$, 
the game can proceed to period $t+1$ only if the other player chooses $U$ at period $t$ as well. 
From equation (\ref{eq:general_u_prob}), we know level 2 players believe the 
other player would choose $U$ at period $t$ with probability 
\begin{align*}
    \frac{1}{2}\mu_i^2(0 | t,X) + \mu_i^2(1|t,X) = \frac{\left(\frac{1}{2}\right)^{t}p_0 +pp_1}{\left(\frac{1}{2}\right)^{t-1}p_0 +pp_1}. 
\end{align*}
Thus, choosing $U$ at period $t$ can gain \emph{at least}
$\left[\frac{\left(\frac{1}{2}\right)^{t}p_0 +pp_1}{\left(\frac{1}{2}\right)^{t-1}p_0 +pp_1} \right]\mathbb{E}u_i^2(D|t+1, X) = $
\begin{align*}
    \frac{\delta^{t}}{\left(\frac{1}{2}\right)^{t-1}p_0 +pp_1}
            \left\{p\alpha \left[\left(\frac{1}{2}\right)^{t}p_0 + p_1\right]
            -(1-p)\left[\left(\frac{1}{2}\right)^{t}p_0 \right] \right\}.
\end{align*}
Since $U$ is always available, 
$D$ is strictly dominated at period $t$ for level 2 players if 
\begin{align*}
    \mathbb{E}u_i^2(D|t,X)&<
    \left[\frac{\left(\frac{1}{2}\right)^{t}p_0 +pp_1}{\left(\frac{1}{2}\right)^{t-1}p_0 +pp_1} \right]\mathbb{E}u_i^2(D|t+1, X)\\
    &\iff \alpha < \left(\frac{1-p}{p}\right)\left(\frac{\left[ \left(\frac{1}{2}\right)^{t-1} - \left(\frac{1}{2}\right)^{t}\delta \right]p_0}{\left[ \left(\frac{1}{2}\right)^{t-1} - \left(\frac{1}{2}\right)^{t}\delta \right]p_0 + (1-\delta)p_1} \right).
\end{align*}
This proves the direction of necessity. 

Second, we prove the 
sufficiency by induction on the periods. Namely, 
we show the sufficiency holds for any period $T-t'$ where $1\leq t' \leq T-2$.
We now prove the statement is true at period $T-1$.
Because
\begin{align*}
    \alpha \geq \left(\frac{1-p}{p}\right)\left(\frac{\left[ \left(\frac{1}{2}\right)^{T-2} - \left(\frac{1}{2}\right)^{T-1}\delta \right]p_0}{\left[ \left(\frac{1}{2}\right)^{T-2} - \left(\frac{1}{2}\right)^{T-1}\delta \right]p_0 + (1-\delta)p_1} \right)
    > \left(\frac{1-p}{p}\right)\left(\frac{\left(\frac{1}{2}\right)^{T-1}p_0}{\left(\frac{1}{2}\right)^{T-1}p_0 + p_1} \right),
\end{align*}
we know level 2 players will choose $D$ in period $T$. 
Therefore, it is optimal to choose $D$ at period $T-1$ if
\begin{align*}
    \mathbb{E}u_i^2(D|T-1,X)&\geq
    \left[\frac{\left(\frac{1}{2}\right)^{T-1}p_0 +pp_1}{\left(\frac{1}{2}\right)^{T-2}p_0 +pp_1} \right]\mathbb{E}u_i^2(D|T,X)\\
    &\iff \alpha \geq \left(\frac{1-p}{p}\right)\left(\frac{\left[ \left(\frac{1}{2}\right)^{T-2} - \left(\frac{1}{2}\right)^{T-1}\delta \right]p_0}{\left[ \left(\frac{1}{2}\right)^{T-2} - \left(\frac{1}{2}\right)^{T-1}\delta \right]p_0 + (1-\delta)p_1} \right)
\end{align*}

Now, suppose there is $t'\leq T-2$ such that the 
statement holds at any period $T-t$ where $1\leq t\leq t'-1$.
We want to show that the sufficiency holds at period $T-t'$. Because 
\begin{align*}
    \alpha &\geq \left(\frac{1-p}{p}\right)\left(\frac{\left[ \left(\frac{1}{2}\right)^{T-t'-1} - \left(\frac{1}{2}\right)^{T-t'}\delta \right]p_0}{\left[ \left(\frac{1}{2}\right)^{T-t'-1} - \left(\frac{1}{2}\right)^{T-t'}\delta \right]p_0 + (1-\delta)p_1} \right)\\
    &> \left(\frac{1-p}{p}\right)\left(\frac{\left[ \left(\frac{1}{2}\right)^{T-t'} - \left(\frac{1}{2}\right)^{T-t'+1}\delta \right]p_0}{\left[ \left(\frac{1}{2}\right)^{T-t'} - \left(\frac{1}{2}\right)^{T-t'+1}\delta \right]p_0 + (1-\delta)p_1} \right),
\end{align*}
we know level 2 players will choose $D$ in period $T-t'+1$ by induction 
hypothesis. Therefore, it is optimal to choose $D$ at period $T-t'$ if
\begin{align*}
    \mathbb{E}u_i^2(D|T-t', X)&\geq
    \left[\frac{\left(\frac{1}{2}\right)^{T-t'}p_0 +pp_1}{\left(\frac{1}{2}\right)^{T-t'-1}p_0 +pp_1} \right]\mathbb{E}u_i^2(D|T-t'+1, X)\\
    &\iff \alpha \geq \left(\frac{1-p}{p}\right)\left(\frac{\left[ \left(\frac{1}{2}\right)^{T-t'-1} - \left(\frac{1}{2}\right)^{T-t'}\delta \right]p_0}{\left[ \left(\frac{1}{2}\right)^{T-t'-1} - \left(\frac{1}{2}\right)^{T-t'}\delta \right]p_0 + (1-\delta)p_1} \right).
\end{align*}
This completes the proof of sufficiency.

\bigskip

\noindent\textbf{\emph{Step 5:}} 
Step 4 establishes the base case where $k=2$. 
Now, suppose there is $K>2$ such that 
the statement holds for all $2\leq k \leq K$.
We want to show the statement holds for level $K+1$ players. 
The proof for period $T$ is straightforward. 
From step 3, we know if $\sigma_i^{K}(T,X)=1$, then $\sigma_i^{K+1}(T,X)=1$.
Hence, it suffices to consider the case where 
$$\alpha < \left(\frac{1-p}{p}\right)\left(\frac{\left(\frac{1}{2}\right)^{T-1}p_0}{\left(\frac{1}{2}\right)^{T-1}p_0 + \sum_{j=1}^{K-1}p_j} \right).$$
By induction hypothesis, we know $\sigma_{-i}^l(t,X)=0$ for all $1\leq l \leq K$ 
and for all $1\leq t\leq T$. Therefore, $\sigma_{i}^{K+1}(T,X)=1$ if and only if 
$\mathbb{E}u_i^{K+1}(D|T,X)\geq 0$, which is equivalent to 
\begin{align*}
    \alpha \geq \left(\frac{1-p}{p}\right)\left(\frac{\left(\frac{1}{2}\right)^{T-1}p_0}{\left(\frac{1}{2}\right)^{T-1}p_0 + \sum_{j=1}^K p_j} \right).
\end{align*}
 
For any period $2\leq t\leq T-1$, we first prove the direction of necessity. If
$$\alpha <\left(\frac{1-p}{p}\right)\left(\frac{\left[ \left(\frac{1}{2}\right)^{t-1} - \left(\frac{1}{2}\right)^{t}\delta \right]p_0}{\left[ \left(\frac{1}{2}\right)^{t-1} - \left(\frac{1}{2}\right)^{t}\delta \right]p_0 + (1-\delta)\sum_{j=1}^{K}p_j} \right),$$
then by induction hypothesis, we know $\sigma_{-i}^l(t',X)=0$ for all $1\leq l \leq K$ and $1\leq t'\leq t$. Hence, $\mathbb{S}_{i+}^{K+1}(t) = \{1,\ldots, K \}$, and 
we can obtain from equation (\ref{eq:general_expected_payoff}) 
that the expected payoff of $D$ at period $t$ is 
\begin{align*}
     \frac{\delta^{t-1}}{\left(\frac{1}{2}\right)^{t-1}p_0 +p\sum_{j=1}^K p_j}\left\{p\alpha \left[\left(\frac{1}{2}\right)^{t-1}p_0 + \sum_{j=1}^K p_j\right]-(1-p)\left[\left(\frac{1}{2}\right)^{t-1}p_0 \right] \right\}. 
\end{align*} 
Furthermore, equation (\ref{eq:general_u_prob}) suggests level $K+1$ players 
believe the other player would choose $U$ at period $t$ with probability 
$$\frac{1}{2}\mu_i^{K+1}(0|t,X) + \sum_{j=1}^K \mu_i^{K+1}(l|t,X)
= \frac{\left(\frac{1}{2}\right)^{t}p_0 +p\sum_{j=1}^{K}p_j}{\left(\frac{1}{2}\right)^{t-1}p_0 +p\sum_{j=1}^{K}p_j}.$$
Therefore, by similar calculation as in step 4, choosing $D$ is strictly dominated if
\begin{align*}
&\frac{\delta^{t-1}}{\left(\frac{1}{2}\right)^{t-1}p_0 +p\sum_{j=1}^{K}p_j}
            \left\{p\alpha \left[\left(\frac{1}{2}\right)^{t-1}p_0 + \sum_{j=1}^{K}p_j\right]
            -(1-p)\left[\left(\frac{1}{2}\right)^{t-1}p_0 \right] \right\}\\
<\;&  \frac{\delta^{t}}{\left(\frac{1}{2}\right)^{t-1}p_0 +p\sum_{j=1}^{K}p_j}
            \left\{p\alpha \left[\left(\frac{1}{2}\right)^{t}p_0 + \sum_{j=1}^{K}p_j\right]
            -(1-p)\left[\left(\frac{1}{2}\right)^{t}p_0 \right] \right\},
\end{align*}
which is implied by  
$$\alpha <\left(\frac{1-p}{p}\right)\left(\frac{\left[ \left(\frac{1}{2}\right)^{t-1} - \left(\frac{1}{2}\right)^{t}\delta \right]p_0}{\left[ \left(\frac{1}{2}\right)^{t-1} - \left(\frac{1}{2}\right)^{t}\delta \right]p_0 + (1-\delta)\sum_{j=1}^{K}p_j} \right).$$
This proves the direction of necessity.

Second, we prove the sufficiency by induction on the periods. Namely, 
we show the sufficiency holds for any period $T-t'$ where $1\leq t' \leq T-2$.
We first prove the statement is true at period $T-1$. By step 3,
if $\sigma_i^K(T-1,X)=1$, then $\sigma_i^{K+1}(T-1,X)=1$. Therefore,
it suffices to consider the case where 
\begin{align*}
\left(\frac{1-p}{p}\right)&\left(\frac{\left[ \left(\frac{1}{2}\right)^{T-2} - \left(\frac{1}{2}\right)^{T-1}\delta \right]p_0}{\left[ \left(\frac{1}{2}\right)^{T-2} - \left(\frac{1}{2}\right)^{T-1}\delta \right]p_0 + (1-\delta)\sum_{j=1}^{K}p_j} \right) \leq \alpha \\
&\qquad\qquad< \left(\frac{1-p}{p}\right)\left(\frac{\left[ \left(\frac{1}{2}\right)^{T-2} - \left(\frac{1}{2}\right)^{T-1}\delta \right]p_0}{\left[ \left(\frac{1}{2}\right)^{T-2} - \left(\frac{1}{2}\right)^{T-1}\delta \right]p_0 + (1-\delta)\sum_{j=1}^{K-1}p_j} \right).
\end{align*}
By induction hypothesis, we know $\sigma_{-i}^{l}(t,X)=0$ for all $1\leq t\leq T-1$ 
and $1\leq l \leq K$.
Moreover, we know $\sigma_i^{K+1}(T,X)=1$ because
$$\frac{\left[ \left(\frac{1}{2}\right)^{T-2} - \left(\frac{1}{2}\right)^{T-1}\delta \right]p_0}{\left[ \left(\frac{1}{2}\right)^{T-2} - \left(\frac{1}{2}\right)^{T-1}\delta \right]p_0 + (1-\delta)\sum_{j=1}^{K}p_j} 
> \frac{\left(\frac{1}{2}\right)^{T-1}p_0}{\left(\frac{1}{2}\right)^{T-1}p_0 + \sum_{j=1}^K p_j}.$$
Therefore, by similar calculation as in step 4, 
we can find that it is optimal for level $K+1$ players to choose $D$ at period $T-1$ if
\begin{align*}
    \mathbb{E}u_i^{K+1}(D|T-1, X)&\geq
    \left[\frac{\left(\frac{1}{2}\right)^{T-1}p_0 +p\sum_{j=1}^{K}p_j}{\left(\frac{1}{2}\right)^{T-2}p_0 +p\sum_{j=1}^{K}p_j} \right]\mathbb{E}u_i^{K+1}(D|T, X)\\
    &\iff \alpha \geq \left(\frac{1-p}{p}\right)\left(\frac{\left[ \left(\frac{1}{2}\right)^{T-2} - \left(\frac{1}{2}\right)^{T-1}\delta \right]p_0}{\left[ \left(\frac{1}{2}\right)^{T-2} - \left(\frac{1}{2}\right)^{T-1}\delta \right]p_0 + (1-\delta)\sum_{j=1}^{K}p_j} \right)
\end{align*}

Now, suppose there is $t'\leq T-2$ such that the statement holds for any period $T-t$ where
$1\leq t\leq t'-1$. We want to show that the sufficiency at 
period $T-t'$.
By step 3, if $\sigma_i^K(T-t',X)=1$, then $\sigma_i^{K+1}(T-t',X)=1$. Therefore, 
it suffices to consider the case:
\begin{align*}
\left(\frac{1-p}{p}\right)&\left(\frac{\left[ \left(\frac{1}{2}\right)^{T-t'-1} - \left(\frac{1}{2}\right)^{T-t'}\delta \right]p_0}{\left[ \left(\frac{1}{2}\right)^{T-t'-1} - \left(\frac{1}{2}\right)^{T-t'}\delta \right]p_0 + (1-\delta)\sum_{j=1}^{K}p_j} \right) \leq \alpha \\
&< \left(\frac{1-p}{p}\right)\left(\frac{\left[ \left(\frac{1}{2}\right)^{T-t'-1} - \left(\frac{1}{2}\right)^{T-t'}\delta \right]p_0}{\left[ \left(\frac{1}{2}\right)^{T-t'-1} - \left(\frac{1}{2}\right)^{T-t'}\delta \right]p_0 + (1-\delta)\sum_{j=1}^{K-1}p_j} \right).
\end{align*}
By induction hypothesis, we know $\sigma_{-i}^{l}(t,X)=0$ for all $1\leq t\leq T-t'$ and $1\leq l \leq K$, and $\sigma_i^{K+1}(T-t'+1, X)=1$. Hence, by similar calculation as in step 4, 
we can find that it is optimal for level $K+1$ players to choose $D$ at period $T-t'$ if
$$\alpha \geq \left(\frac{1-p}{p}\right)\left(\frac{\left[ \left(\frac{1}{2}\right)^{T-t'-1} - \left(\frac{1}{2}\right)^{T-t'}\delta \right]p_0}{\left[ \left(\frac{1}{2}\right)^{T-t'-1} - \left(\frac{1}{2}\right)^{T-t'}\delta \right]p_0 + (1-\delta)\sum_{j=1}^{K}p_j} \right).$$
This completes the proof of the proposition. 
$\qquad\qquad\qquad\qquad\qquad\qquad\qquad\qquad\qquad\qquad\square$

\subsection*{Proof of Proposition \ref{prop_strategic_dirty}}

\noindent\textbf{\emph{Step 1:}} 
Consider any $i\in N$. If $x_{-i}=O$, player $i$ knows his face is dirty immediately,
suggesting 1 is a dominant strategy and $\tilde{\sigma}_i^k(O)=1$ for any $k\geq 1$.
If $x_{-i}=X$, the expected payoff of 1 is $p\alpha - (1-p)<0$, 
implying $\tilde{\sigma}_i^k(X)\geq 2$ for any $k\geq 1$.
Moreover, level 1 players believe the other player is level 0, 
so when observing $X$, the expected payoff of $2\leq j\leq T$ is 
\begin{align*}
     p\left[\frac{T+2-j}{T+1}\delta^{j-1}\alpha \right] - (1-p)\left[\frac{T+2-j}{T+1}\delta^{j-1} \right]
    = \delta^{j-1}\left(\frac{T+2-j}{T+1} \right)\left[p\alpha -(1-p)\right] <0.
\end{align*}
implying $\tilde{\sigma}_i^1(X)=T+1$.

\bigskip

\noindent\textbf{\emph{Step 2:}} In this step, we claim for any $K>1$,
if $\tilde{\sigma}_i^{l+1}(X) \leq \tilde{\sigma}_i^{l}(X)$ for all 
$1\leq l \leq K-1$, then $\tilde{\sigma}_i^{K+1}(X)\leq \tilde{\sigma}_i^{K}(X)$.
Notice that if $\tilde{\sigma}_i^{K}(X)=T+1$, then there is nothing to prove.
Let $s^* \equiv \tilde{\sigma}_{i}^{K}(X)$. If $s^* = T+1$, then
there is nothing to show. Hence, we can focus on the case where 
$2\leq s^*\leq T$. If $s^*=T$, then level $K+1$ player's expected payoff of choosing $T$ is 
\begin{align*}
&\delta^{T-1} \left[p\alpha \left(\frac{2}{T+1}\frac{p_0}{\sum_{j=0}^K p_j} + \frac{\sum_{j=1}^K p_j}{\sum_{j=0}^K p_j} \right) -(1-p) \left(\frac{2}{T+1}\frac{p_0}{\sum_{j=0}^K p_j} \right)  \right]\\
>\; &\delta^{T-1} \left[p\alpha \left(\frac{2}{T+1}\frac{p_0}{\sum_{j=0}^{K-1} p_j} + \frac{\sum_{j=1}^{K-1} p_j}{\sum_{j=0}^{K-1} p_j} \right) -(1-p) \left(\frac{2}{T+1}\frac{p_0}{\sum_{j=0}^{K-1} p_j} \right)  \right] \geq 0. 
\end{align*}
The last inequality holds because it is optimal for level $K$ players
to choose $T$. Therefore, we can find that $T+1$ is dominated by $T$,
implying $\tilde{\sigma}_{i}^{K+1}(X) \leq T = \tilde{\sigma}_{i}^{K}(X)$.

On the other hand, consider $2\leq s^* \leq T-1$. 
If level $K+1$ players choose some $s^* < s< T+1$ which 
yields a non-negative expected payoff, then the expected payoff of $s$ is 
bounded by
\begin{align*}
&\delta^{s-1} \left[p\alpha \left(\frac{T+2-s}{T+1}\frac{p_0}{\sum_{j=0}^{K} p_j} + \frac{\sum_{j=1}^{K-1} p_j}{\sum_{j=0}^K p_j} \right) -(1-p) \left(\frac{T+2-s}{T+1}\frac{p_0}{\sum_{j=0}^K p_j} \right)  \right]\\
& \quad < \delta^{s-1} \left[p\alpha \left(\frac{T+2-s}{T+1}\frac{p_0}{\sum_{j=0}^{K-1} p_j} + \frac{\sum_{j=1}^{K-1} p_j}{\sum_{j=0}^{K-1} p_j} \right) -(1-p) \left(\frac{T+2-s}{T+1}\frac{p_0}{\sum_{j=0}^{K-1} p_j} \right)  \right] \\
&\qquad\leq  \delta^{s^*-1} \left[p\alpha \left(\frac{T+2-s^*}{T+1}\frac{p_0}{\sum_{j=0}^{K-1} p_j} + \frac{\sum_{j=1}^{K-1} p_j}{\sum_{j=0}^{K-1} p_j} \right) -(1-p) \left(\frac{T+2-s^*}{T+1}\frac{p_0}{\sum_{j=0}^{K-1} p_j} \right)  \right] \\
&\qquad\quad <  \delta^{s^*-1} \left[p\alpha \left(\frac{T+2-s^*}{T+1}\frac{p_0}{\sum_{j=0}^{K} p_j} + \frac{\sum_{j=1}^{K} p_j}{\sum_{j=0}^{K} p_j} \right) -(1-p) \left(\frac{T+2-s^*}{T+1}\frac{p_0}{\sum_{j=0}^{K} p_j} \right)  \right].
\end{align*}
Note that the second inequality holds because 
$s^*$ is level $K$ player's optimal choice, and the RHS of the 
last inequality is level $K+1$ player's expected
payoff of choosing $s^*$. 
These inequalities show that 
it is not optimal for level $K+1$ players to choose any $s>s^*$, 
suggesting that 
$\tilde{\sigma}_{i}^{K+1}(X) \leq \tilde{\sigma}_{i}^{K}(X)$.

\bigskip

\noindent\textbf{\emph{Step 3:}}  
We prove the proposition by induction on $k$. In this step, we show the statement 
holds for level 2 players, which is the base case of the induction 
argument. For any $2\leq j \leq T$, the expected payoff of choosing $j$ is
$\mathbb{E}u_i^2(j|X) = $
\begin{align*}
p\left[\left(\frac{T+2-j}{T+1}\delta^{j-1}\alpha\right)\frac{p_0}{p_0 + p_1} 
+ \left(\delta^{j-1}\alpha\right)\frac{p_1}{p_0 + p_1}  \right] - (1-p)\left[\left(\frac{T+2-j}{T+1}\delta^{j-1}\right)\frac{p_0}{p_0 + p_1} \right].
\end{align*}
For level 2 players and any $2\leq j \leq T-1$, let $\Delta_j^2\equiv 
\mathbb{E}u_i^2(j|X) - \mathbb{E}u_i^2(j+1|X) $ 
be the difference of expected payoffs between $j$ and $j+1$. That is,
\begin{align*}
\Delta^2_j & = \delta^{j-1}p\alpha\left[\left(\frac{T+2-j}{T+1} - \frac{T+1-j}{T+1}\delta\right)\frac{p_0}{p_0 + p_1} + (1-\delta)\frac{p_1}{p_0 + p_1} 
\right] \\
& \qquad\qquad\qquad\qquad\qquad\qquad - \delta^{j-1}(1-p) \left[\left(\frac{T+2-j}{T+1} - \frac{T+1-j}{T+1}\delta\right)\frac{p_0}{p_0 + p_1}\right],
\end{align*}
suggesting that $j$ dominates $j+1$ if and only if 
$$\Delta^2_j \geq 0 \iff \alpha \geq \left(\frac{1-p}{p}\right)\left(\frac{\left[ \frac{T+2-j}{T+1} - \frac{T+1-j}{T+1}\delta \right]p_0}{\left[ \frac{T+2-j}{T+1} - \frac{T+1-j}{T+1}\delta \right]p_0 + (1-\delta)p_1} \right).$$
Because the RHS is a decreasing function in $j$, 
$\Delta^2_j \geq 0$ implies $\Delta^2_{j+1} \geq 0$. Moreover, since
$$\mathbb{E}u_i^2(j|X)\geq 0 \iff 
\alpha \geq \left(\frac{1-p}{p}\right)\left(\frac{\frac{T+2-j}{T+1} p_0}{ \frac{T+2-j}{T+1}p_0 + p_1} \right),$$
$\Delta^2_j \geq 0$ implies $\mathbb{E}u_i^2(j|X)\geq 0$ because
\begin{align*}
 \alpha \geq \left(\frac{1-p}{p}\right)\left(\frac{\left[ \frac{T+2-j}{T+1} - \frac{T+1-j}{T+1}\delta \right]p_0}{\left[ \frac{T+2-j}{T+1} - \frac{T+1-j}{T+1}\delta \right]p_0 + (1-\delta)p_1} \right) >    
 \left(\frac{1-p}{p}\right)\left(\frac{\frac{T+2-j}{T+1} p_0}{ \frac{T+2-j}{T+1}p_0 + p_1} \right).
\end{align*}
As a result, 
$\tilde{\sigma}_{i}^{2}(X)\leq T$ if and only if $\mathbb{E}u_i^2(T|X)\geq 0$, 
which is equivalent to 
$$\alpha \geq \left(\frac{1-p}{p}\right)\left(\frac{\frac{2}{T+1} p_0}{ \frac{2}{T+1}p_0 + p_1} \right),$$
and for any other $2\leq t \leq T-1$, $ \tilde{\sigma}_{i}^{2}(X) \leq t$ if and only if
$$\Delta^2_t \geq 0 \iff \alpha \geq \left(\frac{1-p}{p}\right)\left(\frac{\left[ \frac{T+2-t}{T+1} - \frac{T+1-t}{T+1}\delta \right]p_0}{\left[ \frac{T+2-t}{T+1} - \frac{T+1-t}{T+1}\delta \right]p_0 + (1-\delta)p_1} \right).$$

\bigskip

\noindent\textbf{\emph{Step 4:}}  
Step 3 establishes the base case where $k=2$. Now suppose 
there is $K>2$ such that the 
statement holds for any $2\leq k \leq K$. We want to show that 
the statement also holds for level $K+1$ players.
We can first obtain from step 1 that $\tilde{\sigma}_i^{K+1}(X)\geq 2$.

Besides, notice that 
for any $1\leq t\leq T$ and $1\leq l \leq K$, if $\tilde{\sigma}_{-i}^{l}(X)> t $, 
then level $K+1$ player $i$'s expected 
payoff of choosing $2\leq j \leq t+1$ is $\mathbb{E}u_i^{K+1}(j|X)=$
\begin{align*}
\delta^{j-1} \left[p\alpha \left(\frac{T+2-j}{T+1}\frac{p_0}{\sum_{j=0}^K p_j} + \frac{\sum_{j=1}^K p_j}{\sum_{j=0}^K p_j} \right) -(1-p) \left(\frac{T+2-j}{T+1}\frac{p_0}{\sum_{j=0}^K p_j} \right)  \right].
\end{align*}
Similar to step 3, we define
$\Delta^{K+1}_{t'}$ for any $2\leq t' \leq t$ where $\Delta^{K+1}_{t'}$ is the 
difference of expected payoff between choosing $t'$ and $t'+1$. That is, 
\begin{align*}
\Delta^{K+1}_{t'} &\equiv \delta^{t'-1}p\alpha\left[\left(\frac{T+2-t'}{T+1} - \frac{T+1-t'}{T+1}\delta\right)\frac{p_0}{\sum_{j=0}^K p_j} + (1-\delta)\frac{\sum_{j=1}^K p_j}{\sum_{j=0}^K p_j} 
\right] \\
& \qquad\qquad\qquad\qquad\qquad\qquad - \delta^{t'-1}(1-p) \left[\left(\frac{T+2-t'}{T+1} - \frac{T+1-t'}{T+1}\delta\right)\frac{p_0}{\sum_{j=0}^K p_j}\right].
\end{align*}
By the same argument as in step 3, 
$\Delta^{K+1}_{t'}<0$ implies $\Delta^{K+1}_{t'-1}<0$.
Therefore, we can find that if 
$\tilde{\sigma}_{-i}^{l}(X)> t $
for any $1\leq l \leq K$, then it is strictly dominated 
for level $K+1$ players to choose $t'$ (and all strategies $s<t'$) 
where $2\leq t' \leq t$ if 
\begin{equation}\tag{A.5}\label{eq:strategic_dirty}
\alpha < \left(\frac{1-p}{p}\right)\left(\frac{\left[ \frac{T+2-t'}{T+1} - \frac{T+1-t'}{T+1}\delta \right]p_0}{\left[ \frac{T+2-t'}{T+1} - \frac{T+1-t'}{T+1}\delta \right]p_0 + (1-\delta)\sum_{j=1}^{K}p_j} \right),
\end{equation}
and by similar argument as in step 3, $\Delta^{K+1}_{t'}\geq 0$ would imply 
$\mathbb{E}u_i^{K+1}(t'|X)\geq 0$.

The proof for period $T$ is straightforward. One direct consequence of 
the induction hypothesis is that $\tilde{\sigma}_{i}^{l+1}(X) \leq 
\tilde{\sigma}_{i}^{l}(X)$ for all $1\leq l\leq K-1$. 
By step 2, $\tilde{\sigma}_{i}^{K+1}(X) \leq T$ if $\tilde{\sigma}_{i}^{K}(X)\leq T$.
Thus, it suffices to consider the case where 
$$\alpha < \left(\frac{1-p}{p}\right)\left(\frac{\frac{2}{T+1} p_0}{ \frac{2}{T+1}p_0 + \sum_{j=1}^{K-1}p_j} \right).$$
By induction hypothesis, $\tilde{\sigma}_{i}^{l}(X)= T+1$ for all $1\leq l \leq K$,
so $\tilde{\sigma}_{i}^{K+1}(X)\leq T$ if and only if 
$$\mathbb{E}u_i^{K+1}(T|X)\geq 0 \iff 
\alpha \geq \left(\frac{1-p}{p}\right)\left(\frac{\frac{2}{T+1} p_0}{ \frac{2}{T+1}p_0 + \sum_{j=1}^{K}p_j} \right).$$

On the other hand, we consider any $2\leq t\leq T-1$. By induction hypothesis and 
step 2, if $\tilde{\sigma}_{i}^{K}(X)\leq t$, then $\tilde{\sigma}_{i}^{K+1}(X)\leq t$.
Hence, it suffices to complete the proof by considering 
\begin{align*}
    \alpha < \left(\frac{1-p}{p}\right)\left(\frac{\left[ \frac{T+2-t}{T+1} - \frac{T+1-t}{T+1}\delta \right]p_0}{\left[ \frac{T+2-t}{T+1} - \frac{T+1-t}{T+1}\delta \right]p_0 + (1-\delta)\sum_{j=1}^{K-1}p_j} \right).
\end{align*}
In this case, $t<\tilde{\sigma}_i^{l+1}(X) \leq \tilde{\sigma}_i^{l}(X)$ 
for all $1\leq l\leq K-1$.
Therefore, from equation (\ref{eq:strategic_dirty}), we can obtain that 
$\tilde{\sigma}_i^{K+1}(X)\leq t$ if and only if 
$$ \alpha \geq \left(\frac{1-p}{p}\right)\left(\frac{\left[ \frac{T+2-t}{T+1} - \frac{T+1-t}{T+1}\delta \right]p_0}{\left[ \frac{T+2-t}{T+1} - \frac{T+1-t}{T+1}\delta \right]p_0 + (1-\delta)\sum_{j=1}^{K}p_j} \right).$$
This completes the proof of this proposition. 
$\qquad\qquad\qquad\qquad\qquad\qquad\qquad\qquad\qquad
\quad \square$

\bigskip

\subsection*{Proof of Proposition \ref{prop:dirty_representation}}

First of all, for any $k\geq 2$, it suffices to prove $\mathcal{S}_{T}^k \subset \mathcal{E}_{T}^k$ by showing 
if $\tilde{\sigma}_i^k(X)\leq T$, then $\hat{\sigma}_i^k(X)\leq T$.
This is true because 
$$\left(\frac{1-p}{p}\right)\left(\frac{\frac{2}{T+1} p_0}{ \frac{2}{T+1}p_0 + \sum_{j=1}^{k-1}p_j} \right) > \left(\frac{1-p}{p}\right)\left(\frac{\left(\frac{1}{2}\right)^{T-1}p_0}{\left(\frac{1}{2}\right)^{T-1}p_0 + \sum_{j=1}^{k-1}p_j} \right).$$
Similarly, for other $2\leq t\leq T-1$, it suffices to prove $\mathcal{S}_{t}^k \subset \mathcal{E}_{t}^k$ by showing 
\begin{align*}
    &\left(\frac{1-p}{p}\right)\left(\frac{\left[ \frac{T+2-t}{T+1} - \frac{T+1-t}{T+1}\delta \right]p_0}{\left[ \frac{T+2-t}{T+1} - \frac{T+1-t}{T+1}\delta \right]p_0 + (1-\delta)\sum_{j=1}^{k-1}p_j} \right) \geq \\
    &\qquad\qquad\qquad\qquad\qquad\qquad \left(\frac{1-p}{p}\right)\left(\frac{\left[ \left(\frac{1}{2}\right)^{t-1} - \left(\frac{1}{2}\right)^{t}\delta \right]p_0}{\left[ \left(\frac{1}{2}\right)^{t-1} - \left(\frac{1}{2}\right)^{t}\delta \right]p_0 + (1-\delta)\sum_{j=1}^{k-1}p_j} \right) \iff \\ 
    & \frac{T+2-t}{T+1} - \frac{T+1-t}{T+1}\delta \geq 
    \left(\frac{1}{2}\right)^{t-1} - \left(\frac{1}{2}\right)^{t}\delta
    \iff \delta \leq \frac{(2^t-2)(T+1)-(t-1)2^t}{(2^t-1)(T+1)-t2^t}
    \equiv \overline{\delta}(T,t).
\end{align*}
Notice that since $(2^t-2)(T+1)-(t-1)2^t \geq 2(T+1)-4>0$ and 
$(2^t-1)(T+1)-t2^t\geq 3(T+1)-8>0$, we 
know $\overline{\delta}(T,t)>0$. 
Moreover, if $\overline{\delta}(T,t)>1$, then 
we know the inequality holds for any $\delta\in(0,1)$, and 
hence $\mathcal{S}_{t}^k \subset \mathcal{E}_{t}^k$. 
Otherwise, if $\overline{\delta}(T,t)<1$, 
the inequality does not hold for all $\delta$. Therefore, 
there is no set inclusion relationship between 
$\mathcal{S}_{t}^k$ and $\mathcal{E}_{t}^k$.
In addition, we can find that 
$\hat{\sigma}_i^k(X) \leq \tilde{\sigma}_i^k(X)$
if $\delta \leq \overline{\delta}(T,t)$ and 
$\hat{\sigma}_i^k(X) \geq \tilde{\sigma}_i^k(X)$ if $\delta > \overline{\delta}(T,t)$. Finally, as we rearrange the inequality, 
we can obtain that 
$$\overline{\delta}(T,t) < 1 \iff
\frac{(2^t-2)(T+1)-(t-1)2^t}{(2^t-1)(T+1)-t2^t}<1
\iff t<\frac{\ln(T+1)}{\ln(2)}.$$
This completes the proof of this proposition. 
$\qquad\qquad\qquad\qquad\qquad\qquad\qquad\qquad\qquad
\quad \square$

\subsection*{Proof of Proposition \ref{prop:covergence_poisson}}

When $\omega=OX$, the player who observes a clean face will know his face is 
dirty at the beginning and choose $D$ at period 1. Therefore, 
\begin{align*}
    F_{OX}^D(1|\tau) = 1 - \left(\frac{1}{2}e^{-\tau}\right)\left(1-\frac{1}{2}e^{-\tau} \right).
\end{align*}
To show $\left\lVert F_{OX}^*(t) - F_{OX}^D(t|\tau) \right\rVert_{\infty} \rightarrow 0$, it suffices to prove that 
$F_{OX}^D(1|\tau) \rightarrow 1$ as $\tau \rightarrow \infty$,
which is true because
\begin{align*}
\lim_{\tau\rightarrow\infty} F_{OX}^D(1|\tau) = 
\lim_{\tau\rightarrow\infty} 1 - \left(\frac{1}{2}e^{-\tau}\right)\left(1-\frac{1}{2}e^{-\tau} \right) =1.
\end{align*}

When $\omega=XX$, we need to show 
$F_{XX}^D(1|\tau) \rightarrow 0$ and $F_{XX}^D(2|\tau) \rightarrow 1$
as $\tau \rightarrow \infty$ in order to prove the convergence. 
Since every level $k\geq 1$ would choose $U$ in
period 1 when observing a dirty face, 
$F_{XX}^D(1|\tau) = 1 - \left[1 - (1/2)e^{-\tau} \right]^2$, implying that 
\begin{align*}
\lim_{\tau\rightarrow\infty} F_{XX}^D(1|\tau) = 
\lim_{\tau\rightarrow\infty} 1 - \left[1 - \frac{1}{2}e^{-\tau} \right]^2 =0.    
\end{align*}
On the other hand, we need to introduce one additional piece of notation to show 
$F_{XX}^D(2|\tau)$ converges to 1. Let $K^*(\tau)$ be the lowest level of players 
to choose $D$ at period 2 when observing a dirty face and the prior distribution of 
levels is Poisson($\tau$). By Proposition \ref{prop_extensive_dirty}, we 
know $K^*(\tau)$ is weakly decreasing in $\tau$, and $K^*(\tau) \rightarrow 2$ as 
$\tau \rightarrow \infty$. Hence, $F_{XX}^D(2|\tau) = 1 - \left[(1/4)e^{-\tau} +
\sum_{j=1}^{K^*(\tau)-1}e^{-\tau}\tau^j/j! \right]^2$, suggesting the limit is 
\begin{align*}
\lim_{\tau\rightarrow\infty} F_{XX}^D(2|\tau) = 
\lim_{\tau\rightarrow\infty} 1 - \left[\frac{1}{4}e^{-\tau} +
\sum_{j=1}^{K^*(\tau)-1}\frac{e^{-\tau}\tau^j}{j!} \right]^2 = 
\lim_{\tau\rightarrow\infty} 1 - \left[\frac{1}{4}e^{-\tau} +
\tau e^{-\tau} \right]^2 =1.    
\end{align*}
This completes the proof of this proposition. 
$\qquad\qquad\qquad\qquad\qquad\qquad\qquad\qquad\qquad
\quad \square$

\newpage
\section{Three-Person Three-Period Dirty Faces Games}
\label{sec:dirty_exp_appendix}

\subsection{Dynamic CH Solution}

In this section, we will analyze a specific class of dirty faces games:
the three-person three-period games. Following previous notations, we let
$N=\{1,2,3 \}$ be the set of players. For each player $i$, $x_i \in \{O,X \}$ 
represents the player's face type. We maintain the 
assumption that each player's face type is independently and identically 
determined by a commonly known probability $p = \Pr(x_i = X)$. Each player $i$
can observe other two players' faces $x_{-i}$ but not their own face. 
If there is at least one player having a dirty face, 
there will be a public announcement to every player at the beginning.
The announcement will tell all players whether there is a dirty face 
but not the identity of the players with dirty faces.
In any period $t\in\{1,2,3 \}$, all players simultaneously choose $U$ or $D$.
The game ends after any period where there is at least one player 
choosing $D$.

We will focus on the case where 
there is an announcement. Otherwise, it is commonly known to all 
players that everyone's face is clean.
A behavioral strategy for player $i$ is 
$$\sigma_i: \{1,2,3\} \times \{OO,OX,XX \}
\rightarrow [0,1],$$
which is a mapping from the period and what player $i$ observes 
to the probability of choosing $D$.
Proposition \ref{prop:three_p_ext} characterizes the 
dynamic CH solution of three-person three-period dirty faces games, 
which predicts heterogeneous stopping periods
for different levels of players.

\begin{proposition}\label{prop:three_p_ext}
For any three-person three-period dirty faces game, 
the dynamic CH solution can be
characterized as following. For any $i\in N$, 
\begin{itemize}
    \item[1.] $\sigma^k_i(t,OO)=1$ for all $k\geq 1$ and 
     $1\leq t\leq 3$.
     \item[2.] $\sigma_i^1(t,OX)=0$ for any $1\leq t \leq 3$. Moreover, for any $k\geq 2$,
     \begin{itemize}
         \item[(1)]  $\sigma_i^k(1,OX)=0$,
         \item[(2)]  $\sigma_i^k(2,OX)=1$ if and only if 
         \begin{align*}
            \alpha \geq \left(\frac{1-p}{p}\right)\left(\frac{\left(\frac{1}{2} - \frac{1}{4}\gamma_k \delta \right)p_0}{\left(\frac{1}{2} - \frac{1}{4}\gamma_k \delta \right)p_0 + (1-\gamma_k\delta)\sum_{j=1}^{k-1}p_j} \right)
        \end{align*}
        where $\gamma_k \equiv \left[\frac{1}{4}p_0 + \sum_{j=1}^{k-1}p_j \right] /
        \left[\frac{1}{2}p_0 + \sum_{j=1}^{k-1}p_j \right]$,
        \item[(3)] $\sigma_{i}^{k}(3,OX) = 1 $ if and only if
        \begin{align*}
        \alpha \geq \left(\frac{1-p}{p}\right)\left(\frac{ \frac{1}{4}p_0}{\frac{1}{4}p_0 + \sum_{j=1}^{k-1}p_j} \right),  
        \end{align*}
    \end{itemize}
    \item[3.] $\sigma_i^1(t,XX)=\sigma_i^2(t,XX)=0$ for any $1\leq t \leq 3$. Moreover, for any $k\geq 3$,
    \begin{itemize}
        \item[(1)] $\sigma_i^k(1,XX) = \sigma_i^k(2,XX)=0$,
        \item[(2)] $\sigma_i^k(3,XX)=1$ if and only if there exists $2\leq l \leq k-1$ such that $\sigma_i^l(2,OX)=1$ where 
        we denote $L_k^* = \arg\min_j\left\{\sigma_i^j(2,OX)=1 \right\}$, and 
        \begin{align*}
            \alpha \geq \max\left\{ \left(\frac{1-p}{p}\right) \left(\frac{\left(\frac{1}{2} - \frac{1}{4}\gamma_{L_k^*} \delta \right)p_0}{\left(\frac{1}{2} - \frac{1}{4}\gamma_{L_k^*} \delta \right)p_0 + (1-\gamma_{L_k^*}\delta)\sum_{j=1}^{L_k^*-1}p_j} \right), \qquad\qquad\qquad \right. \\
            \left. \left(\frac{1-p}{p}\right)\left(\frac{ \frac{1}{4}p_0 + \sum_{j=1}^{L_k^*-1}p_j}{\frac{1}{4}p_0 +  \sum_{j=1}^{k-1}p_j} \right)^2 \right\}.
        \end{align*}
    \end{itemize}
\end{itemize}
\end{proposition}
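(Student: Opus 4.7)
The plan is to follow the same inductive strategy as the proof of Proposition~\ref{prop_extensive_dirty}, extended to account for the enlarged observation space $\{OO, OX, XX\}$ and the two opponents rather than one. First, the trivial and level-1 pieces: seeing $OO$ together with the public announcement forces the observer's own face to be $X$, so claiming is strictly dominant and yields part~1; the assumption $\bar{\alpha} < 1$ makes claiming at period 1 strictly dominated whenever a dirty face is observed, so $\sigma_i^k(1, OX) = \sigma_i^k(1, XX) = 0$ for all $k$; and level-1 players, who treat every opponent as level 0 whose actions carry no informational content, never update from the prior, so $\sigma_i^1(t, OX) = \sigma_i^1(t, XX) = 0$. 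The level-2 case for seeing $XX$ is also immediate: whether own face is $O$ (opponents see $OX$) or $X$ (opponents see $XX$), levels 0 and 1 wait through periods 1 and 2 with per-level probabilities that are identical in both own-face cases, so the posterior over $x_i$ remains the prior $p$ and $\bar{\alpha} < 1$ kills any incentive to claim, giving $\sigma_i^2(t, XX) = 0$.

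For part~2 (seeing $OX$), I would invoke Proposition~\ref{prop:independence} and Lemma~\ref{lemma_belief_closed_form} to write level-$k$ player $i$'s period-2 posterior as a product of per-opponent likelihoods, conditioning on the own-face case. When own face is $O$, one opponent sees $OO$ (so strategic levels would have claimed at period 1) and the other sees $OX$; when own face is $X$, one opponent sees $XX$ and the other sees $OX$. Writing $S = \sum_{j=1}^{k-1} p_j$ and enumerating the level-profile combinations consistent with reaching period 2, one obtains $\Pr(x_i = X \mid \text{period 2}, OX)$ and hence the expected payoff of claiming. Comparing this to the discounted value of waiting one more period and optimally claiming at period 3 yields the cutoff involving $\gamma_k$, where $\gamma_k$ captures the per-opponent conditional probability of waiting again in period 2 given the history. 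The period-3 cutoff then comes from the standard non-negativity condition on a one-shot claim. Induction on $k$ transfers the argument of Proposition~\ref{prop_extensive_dirty}: the sufficiency part works by showing that crossing the period-2 cutoff implies crossing the period-3 cutoff, so backward induction does not force any preemption.

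For part~3 (seeing $XX$), the induction is more delicate because the period-2 behavior from part~2 governs whether reaching period 3 carries any inferential content about own face. Define $L_k^*$ as in the statement. If $L_k^*$ exists, then opponents of level $l \in \{L_k^*, \ldots, k-1\}$ who see $OX$ claim at period 2 and end the game. In the own-$O$ case both opponents see $OX$ (so reaching period 3 rules out such levels for both), while in the own-$X$ case both opponents see $XX$, where by the parts already established no level $l < k$ claims through period 2. This asymmetry raises the period-3 posterior that $x_i = X$, and setting the resulting expected payoff of claiming to be non-negative yields the squared ratio that forms the second argument of the maximum. The first argument of the maximum is the period-2 cutoff from part~2 evaluated at $L_k^*$, which is exactly the requirement that $L_k^*$ genuinely claims at $(2, OX)$; writing the condition as a maximum makes both requirements explicit and the characterization parameter-free.

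The \textbf{main obstacle} will be the bookkeeping needed to compute the period-3 posterior in part~3: one must track simultaneously (i) the distribution over opponent levels consistent with no period-1 or period-2 claim in each own-face case, and (ii) the fact that whether a given level $l < k$ claims at period 2 seeing $OX$ is itself an endogenous object tied to the parameters via part~2. Once those are cleanly organized, the rest is the same kind of monotonicity and backward-induction check that appears in Proposition~\ref{prop_extensive_dirty}, and the existence of a closed-form cutoff follows by direct algebra.
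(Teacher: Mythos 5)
Your outline matches the paper's proof essentially step for step: the dominance, level-1, and level-2-at-$XX$ arguments; the own-face case decomposition identifying which information set ($OO$, $OX$, or $XX$) each opponent occupies; the $\gamma_k$ cutoff obtained by comparing a period-2 claim against the wait-and-claim-at-period-3 continuation (with the period-3 cutoff from the one-shot non-negativity condition); and the $L_k^*$-based argument giving the squared-ratio condition and the max for $(3,XX)$. The only ingredient you gloss over is the level-monotonicity step (the paper's separate algebraic lemma showing the period-2 cutoff is decreasing in $k$ even though $\gamma_k$ itself varies with $k$), which the induction uses so that higher levels claim whenever lower levels do and attention can be restricted to the parameter region where all lower levels wait; aside from making that explicit, your plan is the paper's argument.
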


\begin{proof}

\noindent\textbf{\emph{Step 1:}} Consider any $i\in N$. If $x_{-i}=OO$, 
then player $i$ knows his face is dirty immediately. Therefore, 
$D$ is a dominant strategy, suggesting $\sigma_i^k(t, OO)=1$ for all $k\geq 1$
and $1\leq t \leq 3$. If $x_{-i}=OX$, 
player $i$'s belief of having a dirty face at period 1 is 
$p$. Hence, the expected payoff of choosing $D$ at period 1 is 
$p\alpha - (1-p)<0$, implying $\sigma_i^k(1,OX) =0$
for all $k\geq 1$. Similarly, if $x_{-i}=XX$, 
the beliefs of having a dirty face at period 1 and 2 are $p$, which suggests
$\sigma_i^k(1,XX) = \sigma_i^k(2,XX) =0$ for all $k\geq 1$.

In addition, level 1 players believe other players' actions don't convey any 
information about their own face types, so 
$\sigma_i^1(t,OX)= \sigma_i^1(t,XX)=0$ for any $1\leq t \leq 3$.
Since level 1 players' behavior is the same when observing 
$OX$ and $XX$, when $x_{-i}=XX$, level 2 player's belief of having a dirty face 
at period 3 is still $p$, implying $\sigma_i^2(3,XX)=0$.

\bigskip

\noindent\textbf{\emph{Step 2:}} In this step, we claim 
\begin{align*}
    \left(\frac{1-p}{p}\right)\left(\frac{\left(\frac{1}{2} - \frac{1}{4}\gamma_k \delta \right)p_0}{\left(\frac{1}{2} - \frac{1}{4}\gamma_k \delta \right)p_0 + (1-\gamma_k\delta)\sum_{j=1}^{k-1}p_j} \right)
\end{align*}
is decreasing in $k$ for all $k\geq 2$, $\gamma_k \equiv \left[\frac{1}{4}p_0 + 
\sum_{j=1}^{k-1}p_j \right] /\left[\frac{1}{2}p_0 + \sum_{j=1}^{k-1}p_j \right]$.
To prove this, it suffices to prove that for any $l\geq 2$, 
\begin{align*}
&\frac{\left(\frac{1}{2} - \frac{1}{4}\gamma_l \delta \right)p_0}{\left(\frac{1}{2} - \frac{1}{4}\gamma_l \delta \right)p_0 + (1-\gamma_l\delta)\sum_{j=1}^{l-1}p_j}  \geq 
\frac{\left(\frac{1}{2} - \frac{1}{4}\gamma_{l+1} \delta \right)p_0}{\left(\frac{1}{2} - \frac{1}{4}\gamma_{l+1} \delta \right)p_0 + (1-\gamma_{l+1}\delta)\sum_{j=1}^{l}p_j} \\
\iff & \left(-\frac{1}{4}\gamma_{l+1}\delta + \frac{1}{4}\gamma_l\delta \right)\sum_{j=1}^{l-1}p_j 
+ \left(1-\gamma_{l+1}\delta \right)\left(\frac{1}{2} - \frac{1}{4}\gamma_l\delta \right)p_l \geq 0
\end{align*}
Notice that the LHS of the inequality is decreasing in $\delta$ since
\begin{align*}
&\frac{d}{d\delta}\left[\left(-\frac{1}{4}\gamma_{l+1}\delta + \frac{1}{4}\gamma_l\delta \right)\sum_{j=1}^{l-1}p_j 
+ \left(1-\gamma_{l+1}\delta \right)\left(\frac{1}{2} - \frac{1}{4}\gamma_l\delta \right)p_l \right] \\
= & \underbrace{\left(-\frac{1}{4}\gamma_{l+1} + \frac{1}{4}\gamma_l \right)}_{<0}\sum_{j=1}^{l-1}p_j
 + \underbrace{\left(-\frac{1}{2}\gamma_{l+1} - \frac{1}{4}\gamma_l + \frac{1}{2}\gamma_l\gamma_{l+1}\delta\right)}_{
 \substack{{< -\frac{1}{2}\gamma_{l+1} - \frac{1}{4}\gamma_l + \frac{1}{2}\gamma_l\gamma_{l+1}\;\;\;\;\;}\\{\leq -\sqrt{\frac{1}{2}\gamma_l\gamma_{l+1}} + \frac{1}{2}\gamma_l\gamma_{l+1} \; < \;0}} }p_l<0.
\end{align*}
Therefore, we can complete the proof by showing
$$\left(-\frac{1}{4}\gamma_{l+1} + \frac{1}{4}\gamma_l \right)\sum_{j=1}^{l-1}p_j 
+ \left(1-\gamma_{l+1} \right)\left(\frac{1}{2} - \frac{1}{4}\gamma_l \right)p_l \geq 0,$$
which holds because the inequality is equivalent to
\begin{align*}
\frac{p_l}{\sum_{j=1}^{l-1}p_j} \geq \frac{\frac{1}{4}\left(\gamma_{l+1} - \gamma_l \right)}{\left(1 - \gamma_{l+1}\right)\left(\frac{1}{2} - \frac{1}{4}\gamma_{l}\right)}=\frac{p_l}{\frac{3}{4}p_0 + \sum_{j=1}^{l-1}p_j}.
\end{align*}

\bigskip

\noindent\textbf{\emph{Step 3:}} We analyze level $k$ player's behavior when 
$x_{-i}=OX$ for all $k\geq 2$. 
We prove this case by induction on $k$. First, we show the statement holds for $k=2$.
At period 3, level 2 player's belief of having 
a dirty face is 
\begin{align*}
    \mu_{i}^2(X|3,OX) = \sum_{\tau_{-i}}
    \mu_i^2(X,\tau_{-i}|3,OX) = \frac{p\left( \frac{1}{4}p_0 + p_1\right)}{\frac{1}{4}p_0 + p_1}.
\end{align*}
Therefore, it is optimal to choose $D$ at period 3 if and only if 
\begin{align*}
    \mu_{i}^2(X|3,OX)\alpha - (1-\mu_{i}^2(X|3,OX)) \geq 0 \iff \alpha\geq 
    \left(\frac{1-p}{p}\right)\left(\frac{\frac{1}{4}p_0}{\frac{1}{4}p_0 + p_1} \right).
\end{align*}
Second, we show the statement is also true for period 2. 
Notice that at period 2, level 2 player's belief of having a dirty face is 
\begin{align*}
    \mu_{i}^2(X|2,OX) = \sum_{\tau_{-i}}
    \mu_i^2(X,\tau_{-i}|2,OX) = \frac{p\left( \frac{1}{2}p_0 + p_1\right)}{\frac{1}{2}p_0 + p_1},
\end{align*}
and the belief of that two other players choose $U$ at period 2 is 
\begin{align*}
    \frac{\left( \frac{1}{4}p_0 + p_1\right)\left( \frac{1}{4}p_0 + pp_1\right)}{\left( \frac{1}{2}p_0 + p_1\right)\left( \frac{1}{2}p_0 + pp_1\right)} \equiv \gamma_2
    \left(\frac{\frac{1}{4}p_0 + pp_1}{\frac{1}{2}p_0 + pp_1} \right).
\end{align*}
Conditional on reaching period 3, the expected payoff of choosing $U$ is 0, and 
the expected payoff of $D$ is 
$$\frac{\delta^2}{\frac{1}{4}p_0 + pp_1}\left[p\alpha\left(\frac{1}{4}p_0 + p_1 \right) 
- (1-p)\left(\frac{1}{4}p_0 \right) \right].$$
Therefore, it is optimal to choose $D$ at period 2 if and only if 
\begin{align*}
\frac{\delta}{\frac{1}{2}p_0 + pp_1}&\left[
p\alpha\left(\frac{1}{2}p_0 + p_1 \right) 
- (1-p)\left(\frac{1}{2}p_0 \right) \right]\\
\geq \max&\left\{  \gamma_2 \left(\frac{\frac{1}{4}p_0 + pp_1}{\frac{1}{2}p_0 + pp_1} \right)\frac{\delta^2}{\frac{1}{4}p_0 + pp_1}\left[
p\alpha\left(\frac{1}{4}p_0 + p_1 \right) 
- (1-p)\left(\frac{1}{4}p_0 \right) \right], \; 0 \right\}\\
\iff  \alpha \geq \max&\left\{ 
\left(\frac{1-p}{p}\right)\left(\frac{\left(\frac{1}{2} - \frac{1}{4}\gamma_2 \delta \right)p_0}{\left(\frac{1}{2} - \frac{1}{4}\gamma_2 \delta \right)p_0 + (1-\gamma_2\delta)p_1} \right), \;
\left(\frac{1-p}{p}\right)\left(\frac{\frac{1}{2}p_0}{\frac{1}{2}p_0 + p_1} \right) \right\}.
\end{align*}
Furthermore, because for any $\delta\in (0,1)$,
\begin{align*}
    \left(\frac{1-p}{p}\right)\left(\frac{\left(\frac{1}{2} - \frac{1}{4}\gamma_2 \delta \right)p_0}{\left(\frac{1}{2} - \frac{1}{4}\gamma_2 \delta \right)p_0 + (1-\gamma_2\delta)p_1} \right) 
    > \left(\frac{1-p}{p}\right)\left(\frac{\frac{1}{2}p_0}{\frac{1}{2}p_0 + p_1} \right),
\end{align*}
$D$ is optimal for level 2 players at period 2 if and only if
\begin{align*}
   \alpha\geq\left(\frac{1-p}{p}\right)\left(\frac{\left(\frac{1}{2} - \frac{1}{4}\gamma_2 \delta \right)p_0}{\left(\frac{1}{2} - \frac{1}{4}\gamma_2 \delta \right)p_0 + (1-\gamma_2\delta)p_1} \right). 
\end{align*}
This completes the proof for level 2 players. 

Now suppose there is $K>2$ such that the statement holds for any level $2\leq k \leq K$. We 
want to show the statement holds for level $K+1$ players.
By the same argument as in the proof of Proposition \ref{prop_extensive_dirty},
level $K+1$ players would choose $D$ when it is already 
optimal for level $K$ players to choose $D$.
Therefore, for period 3, it suffices to consider the case where 
\begin{align*}
    \alpha < \left(\frac{1-p}{p}\right)\left(\frac{ \frac{1}{4}p_0}{\frac{1}{4}p_0 + \sum_{j=1}^{K-1}p_j} \right).
\end{align*}
By induction hypothesis, 
we know for every level $1\leq k \leq K$ player, 
they will choose $U$ in three periods when 
observing one dirty face. 
Therefore, level $K+1$ players'
beliefs of having a dirty face at period 3 
when $x_{-i}=OX$ are 
\begin{align*}
    \mu_{i}^{K+1}(X|3,OX) = \sum_{\tau_{-i}}
    \mu_i^{K+1}(X,\tau_{-i}|3,OX) = \frac{p\left( \frac{1}{4}p_0 + \sum_{j=1}^Kp_j\right)}{\frac{1}{4}p_0 + \sum_{j=1}^Kp_j}.
\end{align*}
Consequently, level $K+1$ players would choose $D$
at period 3 if and only if 
\begin{align*}
    \mu_{i}^{K+1}(X|3,OX)\alpha - (1-\mu_{i}^{K+1}(X|3,OX)) \geq 0 \iff \alpha \geq \left(\frac{1-p}{p}\right)\left(\frac{ \frac{1}{4}p_0}{\frac{1}{4}p_0 + \sum_{j=1}^{K}p_j} \right).
\end{align*}
For period 2, because of step 2 and the induction hypothesis, 
it suffices to consider
\begin{align*}
    \alpha < \left(\frac{1-p}{p}\right)\left(\frac{\left(\frac{1}{2} - \frac{1}{4}\gamma_K \delta \right)p_0}{\left(\frac{1}{2} - \frac{1}{4}\gamma_K \delta \right)p_0 + (1-\gamma_K\delta)\sum_{j=1}^{K-1}p_j} \right);
\end{align*}
otherwise, level $K$ players would choose $D$ at period 2 and so do level $K+1$ players.
By similar argument, we can obtain that level $K+1$ player would choose $D$ at period 2 
if and only if 
\begin{align*}
\frac{\delta}{\frac{1}{2}p_0 + p\sum_{j=1}^{K}p_j}&\left[
p\alpha\left(\frac{1}{2}p_0 + \sum_{j=1}^{K}p_j \right) 
- (1-p)\left(\frac{1}{2}p_0 \right) \right]  \\
\geq \max&\left\{  \gamma_{K+1} \left(\frac{\delta^2}{\frac{1}{2}p_0 + p\sum_{j=1}^{K}p_j} \right)
\left[p\alpha\left(\frac{1}{4}p_0 + \sum_{j=1}^{K}p_j\right) 
- (1-p)\left(\frac{1}{4}p_0 \right) \right], \; 0\right\},
\end{align*}
which is equivalent to 
\begin{align*}
    \alpha \geq \left(\frac{1-p}{p}\right)\left(\frac{\left(\frac{1}{2} - \frac{1}{4}\gamma_K \delta \right)p_0}{\left(\frac{1}{2} - \frac{1}{4}\gamma_K \delta \right)p_0 + (1-\gamma_K\delta)\sum_{j=1}^{K-1}p_j} \right).
\end{align*}

\bigskip

\noindent\textbf{\emph{Step 4:}} We analyze level $k$ player's behavior when 
$x_{-i}=XX$ for all $k\geq 3$. Consider any level $k\geq 3$.
For level $k$ players, they can update their beliefs about having a dirty face at period 3 
only if there is some lower level of players that
would choose $D$ at period 2 when observing one dirty face.
That is, $\sigma_i^k(3,XX)=1$ only if there is $2\leq l \leq k-1$ such that  
\begin{align*}
    \alpha \geq \left(\frac{1-p}{p}\right)\left(\frac{\left(\frac{1}{2} - \frac{1}{4}\gamma_l \delta \right)p_0}{\left(\frac{1}{2} - \frac{1}{4}\gamma_l \delta \right)p_0 + (1-\gamma_l\delta)\sum_{j=1}^{l-1}p_j} \right).
\end{align*}
If there exists such level of players, we denote $L_k^*$ as the lowest level that would choose $D$ at period 2 when 
observing one dirty face. In this case, 
level $k$ players' beliefs of having a dirty face at period 3 are 
\begin{align*}
    \mu_{i}^k(X|3,XX) = \frac{p\left(\frac{1}{4}p_0 + \sum_{j=1}^{k-1}p_j \right)^2}{p\left(\frac{1}{4}p_0 + \sum_{j=1}^{k-1}p_j \right)^2 + (1-p)\left(\frac{1}{4}p_0 + \sum_{j=1}^{L_k^*-1}p_j  \right)^2},
\end{align*}
and expected payoff of $D$ is greater than the expected payoff of $U$ 
if and only if
\begin{align*}
    \mu_{i}^{k}(X|3,XX)\alpha - (1-\mu_{i}^{k}(X|3,XX)) \geq 0 \iff \alpha \geq \left(\frac{1-p}{p}\right)\left[\frac{\frac{1}{4}p_0 + \sum_{j=1}^{L_k^*-1}p_j}{\frac{1}{4}p_0 + \sum_{j=1}^{k-1}p_j}\right]^2.
\end{align*}
Therefore, we can conclude that $\sigma_i^k(3,XX)=1$ if and only if 
\begin{align*}
    \alpha \geq \max\left\{ \left(\frac{1-p}{p}\right) \left(\frac{\left(\frac{1}{2} - \frac{1}{4}\gamma_{L_k^*} \delta \right)p_0}{\left(\frac{1}{2} - \frac{1}{4}\gamma_{L_k^*} \delta \right)p_0 + (1-\gamma_{L_k^*}\delta)\sum_{j=1}^{L_k^*-1}p_j} \right), \qquad\qquad\qquad \right. \\
    \left. \left(\frac{1-p}{p}\right)\left(\frac{ \frac{1}{4}p_0 + \sum_{j=1}^{L_k^*-1}p_j}{\frac{1}{4}p_0 +  \sum_{j=1}^{k-1}p_j} \right)^2 \right\}.
\end{align*}
This completes the proof of step 4 and this proposition. 
\end{proof}


\subsection{Static CH Solution}

Since the static CH solution is defined on the reduced normal form, to solve for the 
static CH solution, we need to transform the extensive form into its reduced normal form, 
which is a static Bayesian game. In the reduced normal form, 
players determine the earliest period to choose $D$ 
given the observed $x_{-i}$ (and hearing the public announcement).
Specifically, given each $x_{-i}$, player $i$'s action set is $S \equiv \{1,2,3,4\}$
which corresponds to the stopping period or never $D$.
A strategy of player $i$ is a function from what $i$ observes to a 
distribution over the action set. That is, 
$$\tilde{\sigma}_i: \{OO,OX,XX\}\rightarrow \Delta(S).$$

The equilibrium analysis for the strategic form is essentially the 
same as the extensive form. However, Proposition \ref{prop:three_p_strategic}
characterizes the static CH solution which is different 
from the dynamic CH solution.

\begin{proposition}\label{prop:three_p_strategic}

For any three-person three-period dirty faces games, the static CH solution 
can be characterized as following. For any $i\in N$, 
\begin{itemize}
    \item[1.] $\tilde{\sigma}_i^k(OO) = 1 $ for all $k\geq 1$.
    \item[2.] $\tilde{\sigma}_i^1(OX) = 4$. Moreover, for any $k\geq 2$,
    $\tilde{\sigma}_i^k(OX)>1$ and 
    \begin{itemize}
        \item[(1)] $\tilde{\sigma}_i^k(OX)=2$ if and only if 
        $$\alpha \geq \left(\frac{1-p}{p}\right)\left(\frac{\left(\frac{3}{4}p_0\right)
        \left(\frac{3}{4}p_0 + \sum_{j=1}^{k-1}p_j \right) - \delta
        \left(\frac{1}{2}p_0\right)\left(\frac{1}{2}p_0 + \sum_{j=1}^{k-1}p_j \right)}{\left(\frac{3}{4}p_0 +
        \sum_{j=1}^{k-1}p_j \right)^2 - \delta\left(\frac{1}{2}p_0 +
        \sum_{j=1}^{k-1}p_j \right)^2 }
        \right),$$
        \item[(2)] $\tilde{\sigma}_i^k(OX) \leq 3$ if and only if
        \begin{align*}
        \alpha \geq \left(\frac{1-p}{p}\right)\left(\frac{ \frac{1}{2}p_0}{\frac{1}{2}p_0 + \sum_{j=1}^{k-1}p_j} \right),  
        \end{align*}
    \end{itemize}
    \item[(3)] $\tilde{\sigma}_i^1(XX) = \tilde{\sigma}_i^2(XX) = 4$. Moreover, 
    for any $k\geq 3$, $\tilde{\sigma}_i^k(XX)>2$,  and 
    $\tilde{\sigma}_i^k(XX)=3$ if and only if there exists $2\leq l \leq k-1$ such that $\tilde{\sigma}_i^l(OX)=2$ where 
    we denote $L_k^* = \arg\min_j\left\{\tilde{\sigma}_i^j(OX)=2 \right\}$, and 
        \begin{align*}
            \alpha \geq \max\left\{ \left(\frac{1-p}{p}\right)\left(\frac{\left(\frac{3}{4}p_0\right)
        \left(\frac{3}{4}p_0 + \sum_{j=1}^{L_k^*-1}p_j \right) - \delta
        \left(\frac{1}{2}p_0\right)\left(\frac{1}{2}p_0 + \sum_{j=1}^{L_k^*-1}p_j \right)}{\left(\frac{3}{4}p_0 +
        \sum_{j=1}^{L_k^*-1}p_j \right)^2 - \delta\left(\frac{1}{2}p_0 +
        \sum_{j=1}^{L_k^*-1}p_j \right)^2 }
        \right), \qquad \right. \\
            \left. \left(\frac{1-p}{p}\right)\left(\frac{ \frac{1}{2}p_0 + \sum_{j=1}^{L_k^*-1}p_j}{\frac{1}{2}p_0 +  \sum_{j=1}^{k-1}p_j} \right) \right\}.
        \end{align*}
\end{itemize}

\end{proposition}

\begin{proof}

\noindent\textbf{\emph{Step 1:}} 
Consider any $i\in N$. If $x_{-i}=OO$, player $i$ knows his face is dirty immediately,
suggesting 1 is a dominant strategy and $\tilde{\sigma}_i^k(OO)=1$ for any $k\geq 1$.
If $x_{-i}=OX$ or $XX$, the expected payoff of 1 is $p\alpha - (1-p)<0$, 
implying $\tilde{\sigma}_i^k(OX)\geq 2$ and 
$\tilde{\sigma}_i^k(XX)\geq 2$ for any $k\geq 1$.
Moreover, level 1 players believe all other players are level 0, 
so when observing $OX$ or $XX$, the expected payoff of $t\in \{2,3 \}$ is 
\begin{align*}
    p \left[\delta^{t-1}\alpha \left(\frac{5-t}{4}\right)^2\right]
     + (1-p)\left[-\delta^2 \left(\frac{5-t}{4}\right)^2 \right]
    = \delta^{t-1} \left(\frac{5-t}{4}\right)^2 \left[p\alpha - (1-p) \right]<0,
\end{align*}
implying $\tilde{\sigma}_i^1(OX) = \tilde{\sigma}_i^1(XX)=4$.

Finally, we claim $\tilde{\sigma}_i^k(XX)\geq 3$ for 
all $k\geq 1$, which can be proven by induction on $k$. From previous calculation, 
we know $\tilde{\sigma}_i^1(XX)=4$, which establishes the base case.
Now, suppose $\tilde{\sigma}_i^k(XX)\geq 3$ for all $1\leq i \leq K$ for some $K>1$.
We want to show $\tilde{\sigma}_i^{K+1}(XX)\geq 3$.
It suffices to show 2 is a strictly dominated strategy for level $K+1$ players,
which holds as
\begin{align*}
    &p \left[\delta\alpha \left(\frac{3}{4}\frac{p_0}{\sum_{j=0}^{K}p_j} 
    + \frac{\sum_{j=1}^K p_j}{\sum_{j=0}^{K}p_j}  \right)^2\right]
    + (1-p)\left[-\delta \left(\frac{3}{4}\frac{p_0}{\sum_{j=0}^{K}p_j} 
    + \frac{\sum_{j=1}^K p_j}{\sum_{j=0}^{K}p_j}\right)^2 \right]\\
    & \; \qquad\qquad\qquad\qquad\qquad\qquad\qquad\;
    = \delta\left(\frac{3}{4}\frac{p_0}{\sum_{j=0}^{K}p_j} 
    + \frac{\sum_{j=1}^K p_j}{\sum_{j=0}^{K}p_j}\right)^2\left[p\alpha - (1-p) \right]<0.
\end{align*}

\bigskip

\noindent\textbf{\emph{Step 2:}} In this step, we claim for any $K>1$,
if $\tilde{\sigma}_i^{l+1}(OX) \leq \tilde{\sigma}_i^{l}(OX)$ for all 
$1\leq l \leq K-1$, then $\tilde{\sigma}_i^{K+1}(OX)\leq \tilde{\sigma}_i^{K}(OX)$.
Notice that if $\tilde{\sigma}_i^{K}(OX)=4$, then there is nothing to prove.
Now suppose $\tilde{\sigma}_i^{l+1}(OX) \leq \tilde{\sigma}_i^{l}(OX)$ for all 
$1\leq l \leq K-1$.
If $\tilde{\sigma}_i^{K}(OX)=3$, then it is necessary 
that level $K$ player's 
expected payoff of choosing 3 is non-negative. Namely, 
\begin{align*}
    \delta^2 \left(\frac{1}{2}\frac{p_0}{\sum_{j=0}^{K-1}p_j} + 
    \frac{\sum_{j=1}^{K-1} p_j}{\sum_{j=0}^{K-1}p_j}\right)\left[ p\alpha \left(\frac{1}{2}\frac{p_0}{\sum_{j=0}^{K-1}p_j} + \frac{\sum_{j=1}^{K-1} p_j}{\sum_{j=0}^{K-1}p_j}\right) - (1-p)\left(\frac{1}{2}\frac{p_0}{\sum_{j=0}^{K-1}p_j} \right)\right] \geq 0, 
\end{align*}
which implies:
\begin{align*}
    \delta^2 \left(\frac{1}{2}\frac{p_0}{\sum_{j=0}^{K}p_j} + 
    \frac{\sum_{j=1}^K p_j}{\sum_{j=0}^{K}p_j}\right)\left[ p\alpha \left(\frac{1}{2}\frac{p_0}{\sum_{j=0}^{K}p_j} + \frac{\sum_{j=1}^K p_j}{\sum_{j=0}^{K}p_j}\right) - (1-p)\left(\frac{1}{2}\frac{p_0}{\sum_{j=0}^{K}p_j} \right)
    \right]>0.   
\end{align*}
Therefore, we know $\tilde{\sigma}_i^{K+1}(OX)\leq 3$. If $\tilde{\sigma}_i^{K}(OX)=2$, 
we want to show $\tilde{\sigma}_i^{K+1}(OX)=2$ as well. Notice that 
if $\tilde{\sigma}_i^{K}(OX)=2$, then it is necessary that for level $K$ players, 
2 dominates 3 and 4. Let $M$ be the lowest level of players that would choose 
$2$ when observing $OX$. Then level $K$ player's expected payoff of choosing $2$ 
would satisfy that 
\begin{align*}
    \delta\left(\frac{3}{4}p_0 + \sum_{j=1}^{K-1}p_j \right)
    &\left[p\alpha\left(\frac{3}{4}p_0 + \sum_{j=1}^{K-1}p_j \right) 
    - (1-p)\left(\frac{3}{4}p_0\right)\right]\\
    \geq  \max&\left\{ \delta^2\left(\frac{1}{2}p_0 + \sum_{j=1}^{M-1}p_j \right)
    \left[p\alpha\left(\frac{1}{2}p_0 + \sum_{j=1}^{M-1}p_j \right) 
    - (1-p)\left(\frac{1}{2}p_0\right)\right], \; 0\right\}, 
\end{align*}
which suggests:
\begin{align*}
    \delta\left(\frac{3}{4}p_0 + \sum_{j=1}^{K}p_j \right)
    &\left[p\alpha\left(\frac{3}{4}p_0 + \sum_{j=1}^{K}p_j \right) 
    - (1-p)\left(\frac{3}{4}p_0\right)\right]\\
    \geq  \max&\left\{ \delta^2\left(\frac{1}{2}p_0 + \sum_{j=1}^{M-1}p_j \right)
    \left[p\alpha\left(\frac{1}{2}p_0 + \sum_{j=1}^{M-1}p_j \right) 
    - (1-p)\left(\frac{1}{2}p_0\right)\right], \; 0\right\}. 
\end{align*}
Hence, we can obtain that $\tilde{\sigma}_i^{K+1}(OX)=2$.

\bigskip

\noindent\textbf{\emph{Step 3:}} We analyze level $k$ player's behavior
as $x_{-i}=OX$ for all $k\geq 2$. The 
statement can be proven by induction on $k$.
Level 2 players' the expected payoff of choosing $t\in \{2,3 \}$ is
\begin{align*}
    \delta^{t-1}\left(\frac{5-t}{4}\frac{p_0}{p_0 + p_1} + \frac{p_1}{p_0 + p_1} \right)
    \underbrace{\left[p\alpha\left(\frac{5-t}{4}\frac{p_0}{p_0 + p_1} + 
    \frac{p_1}{p_0 + p_1} \right)
    - (1-p)\left(\frac{5-t}{4}\frac{p_0}{p_0 + p_1}\right) \right]}_{\mbox{increasing in } t}.
\end{align*}
Therefore, $\tilde{\sigma}_i^2(OX)\leq3$ if and only if
\begin{align*}
    p\alpha\left(\frac{1}{2}p_0 + p_1 \right) - (1-p)\left(\frac{1}{2}p_0 \right)\geq 0
    \iff \alpha \geq \left(\frac{1-p}{p}\right)\left(\frac{\frac{1}{2}p_0}{\frac{1}{2}p_0 + p_1} \right), 
\end{align*}
and $\tilde{\sigma}_i^2(OX)=2$ if and only if
\begin{align*}
\delta\left(\frac{3}{4}p_0 + p_1 \right)&\left[p\alpha\left(\frac{3}{4}p_0 + p_1 \right) 
    - (1-p)\left(\frac{3}{4}p_0\right)\right]\\
\geq \max&\left\{ \delta^2\left(\frac{1}{2}p_0 + p_1 \right)
\left[p\alpha\left(\frac{1}{2}p_0 + p_1 \right) - (1-p)\left(\frac{1}{2}p_0\right)\right], \; 0
\right\}\\
\iff \alpha \geq \max&\left\{ 
\left(\frac{1-p}{p}\right)\left[\frac{\left(\frac{3}{4}p_0 \right)\left(\frac{3}{4}p_0 +p_1\right) 
- \delta\left(\frac{1}{2}p_0 \right)\left(\frac{1}{2}p_0 +p_1\right)}{\left(\frac{3}{4}p_0 +p_1\right)^2 - \delta\left(\frac{1}{2}p_0 +p_1\right)^2} \right], \;
\left(\frac{1-p}{p}\right)\left(\frac{\frac{3}{4}p_0}{\frac{3}{4}p_0 + p_1} \right)
\right\}.
\end{align*}
Since for any $\delta\in (0,1)$, 
$$\left(\frac{1-p}{p}\right)\left[\frac{\left(\frac{3}{4}p_0 \right)\left(\frac{3}{4}p_0 +p_1\right) - \delta\left(\frac{1}{2}p_0 \right)\left(\frac{1}{2}p_0 +p_1\right)}{\left(\frac{3}{4}p_0 +p_1\right)^2 - \delta\left(\frac{1}{2}p_0 +p_1\right)^2} \right] > \left(\frac{1-p}{p}\right)\left(\frac{\frac{3}{4}p_0}{\frac{3}{4}p_0 + p_1} \right), $$
2 is optimal for level 2 players if and only if 
\begin{align*}
   \alpha \geq \left(\frac{1-p}{p}\right)
    \left[\frac{\left(\frac{3}{4}p_0 \right)\left(\frac{3}{4}p_0 +p_1\right) 
    - \delta\left(\frac{1}{2}p_0 \right)\left(\frac{1}{2}p_0 +p_1\right)}{\left(\frac{3}{4}p_0 +p_1\right)^2 - \delta\left(\frac{1}{2}p_0 +p_1\right)^2} \right].
\end{align*}
This proves the base case. 

Now suppose there is $K>1$ such that the statement holds for 
any $1\leq k\leq K$. We want to show it also holds for level $K+1$ players.
Notice that by induction hypothesis, $\tilde{\sigma}_i^{l+1}(OX) 
\leq \tilde{\sigma}_i^{l}(OX)$ for all $1\leq l \leq K-1$, so
$\tilde{\sigma}_i^{K+1}(OX) \leq \tilde{\sigma}_i^{K}(OX)$
by step 2. If $\tilde{\sigma}_i^{K}(OX)\leq 3$,
then $\tilde{\sigma}_i^{K+1}(OX)\leq 3$ by step 2. Thus, we 
focus on the case where $\tilde{\sigma}_i^{l}(OX)=4$ for all $1\leq l \leq K$.
In this case, level $K+1$ player's expected payoff of choosing $t\in \{2,3 \}$ is:
\begin{align*}
    \delta^{t-1}\left(\frac{5-t}{4}\frac{p_0}{\sum_{j=0}^K p_j} + 
    \frac{\sum_{j=1}^K p_j}{\sum_{j=0}^K p_j} \right)
    \left[p\alpha\left(\frac{5-t}{4}\frac{p_0}{\sum_{j=0}^K p_j} + 
    \frac{\sum_{j=1}^K p_j}{\sum_{j=0}^K p_j} \right)
    - (1-p)\left(\frac{5-t}{4}\frac{p_0}{\sum_{j=0}^K p_j}\right) \right],
\end{align*}
suggesting $4$ is a dominated strategy if and only if 
$$\alpha \geq \left(\frac{1-p}{p}\right)\left(\frac{\frac{1}{2}p_0}{\frac{1}{2}p_0 + \sum_{j=1}^{K}p_j}\right).$$
If $\tilde{\sigma}_i^{K}(OX) = 2$, then $\tilde{\sigma}_i^{K+1}(OX)= 2$ by step 2.  
Thus, it suffices to consider the case where $\tilde{\sigma}_{i}^l(OX)\geq 3$ for all
$1\leq l \leq K$. 
In this case, $\tilde{\sigma}_i^{K+1}(OX) = 2$ if and only if 
\begin{align*}
\delta\left(\frac{3}{4}p_0 + \sum_{j=1}^K p_j \right)&\left[p\alpha\left(\frac{3}{4}p_0 +
\sum_{j=1}^K p_j \right) - (1-p)\left(\frac{3}{4}p_0\right)\right]\\
\geq \max&\left\{ \delta^2\left(\frac{1}{2}p_0 + \sum_{j=1}^K p_j \right)
\left[p\alpha\left(\frac{1}{2}p_0 + \sum_{j=1}^K p_j \right) 
- (1-p)\left(\frac{1}{2}p_0\right)\right], \; 0 \right\}\\
\iff \alpha \geq \left(\frac{1-p}{p}\right)&
\left[\frac{\left(\frac{3}{4}p_0 \right)\left(\frac{3}{4}p_0 +\sum_{j=1}^K p_j\right) 
- \delta\left(\frac{1}{2}p_0 \right)\left(\frac{1}{2}p_0 +\sum_{j=1}^K p_j\right)}{\left(\frac{3}{4}p_0 +\sum_{j=1}^K p_j\right)^2 - 
\delta\left(\frac{1}{2}p_0 +\sum_{j=1}^K p_j\right)^2} \right].
\end{align*}

\bigskip

\noindent\textbf{\emph{Step 4:}} We analyze level $k$ player's behavior when $x_{-i}=XX$
for level $k\geq 3$. Consider any level $K\geq 3$. For level $k$ players,
they would choose $3$ only if there is some level $2\leq l\leq k-1$ such that 
$\tilde{\sigma}_i^l(OX)=2$. Let $L_k^*$ be the lowest level of player 
that would choose 2 when observing $OX.$ 
Then level $k$ player's expected payoff of 3 when observing $XX$ is 
\begin{align*}
    \delta^2 \left(\frac{1}{2}\frac{p_0}{\sum_{j=0}^{k-1} p_j} + 
    \frac{\sum_{j=1}^{k-1} p_j}{\sum_{j=0}^{k-1} p_j} \right)
    \left[ p\alpha\left(\frac{1}{2}\frac{p_0}{\sum_{j=0}^{k-1} p_j} + 
    \frac{\sum_{j=1}^{k-1} p_j}{\sum_{j=0}^{k-1} p_j} \right) - \qquad\qquad\qquad\qquad
    \qquad\right. \\
    \left. (1-p)\left(\frac{1}{2}\frac{p_0}{\sum_{j=0}^{k-1} p_j} + 
    \frac{\sum_{j=1}^{L_k^*-1} p_j}{\sum_{j=0}^{k-1} p_j} \right)
    \right],
\end{align*}
which dominates 4 if and only if 
\begin{align*}
    \alpha \geq \left(\frac{1-p}{p} \right)
    \left(\frac{\frac{1}{2}p_0 + 
    \sum_{j=0}^{L_k^*-1} p_j}{\frac{1}{2}p_0 + \sum_{j=0}^k p_j} \right).
\end{align*}
Coupled with the existence of $L_k^*$, we can find that $\tilde{\sigma}_i^{k}(XX)=3$
if and only if 
\begin{align*}
    \alpha \geq \max\left\{ \left(\frac{1-p}{p}\right)\left(\frac{\left(\frac{3}{4}p_0\right)
    \left(\frac{3}{4}p_0 + \sum_{j=1}^{L_k^*-1}p_j \right) - \delta
    \left(\frac{1}{2}p_0\right)\left(\frac{1}{2}p_0 + \sum_{j=1}^{L_k^*-1}p_j \right)}{\left(\frac{3}{4}p_0 +
    \sum_{j=1}^{L_k^*-1}p_j \right)^2 - \delta\left(\frac{1}{2}p_0 +
    \sum_{j=1}^{L_k^*-1}p_j \right)^2 }\right), \qquad \right. \\
    \left. \left(\frac{1-p}{p}\right)\left(\frac{ \frac{1}{2}p_0 + \sum_{j=1}^{L_k^*-1}p_j}{\frac{1}{2}p_0 +  \sum_{j=1}^{k-1}p_j} \right) \right\}.
\end{align*}
This completes the proof of Proposition \ref{prop:three_p_strategic}. 
\end{proof}

To better understand the dynamic and static CH solutions for three-person
three-period games, we illustrate the optimal stopping periods for level 3 
players when $x_{-i}=OX$ or $XX$ in Figure \ref{fig:dynamic_CH_3_person}.
We focus on level 3 players because 
they are the least sophisticated players that 
would possibly choose $D$ when observing two dirty faces. 
The intuition of this analysis applies to higher-level players.

\begin{figure}[htbp!]
    \centering
    \includegraphics[width=\textwidth]{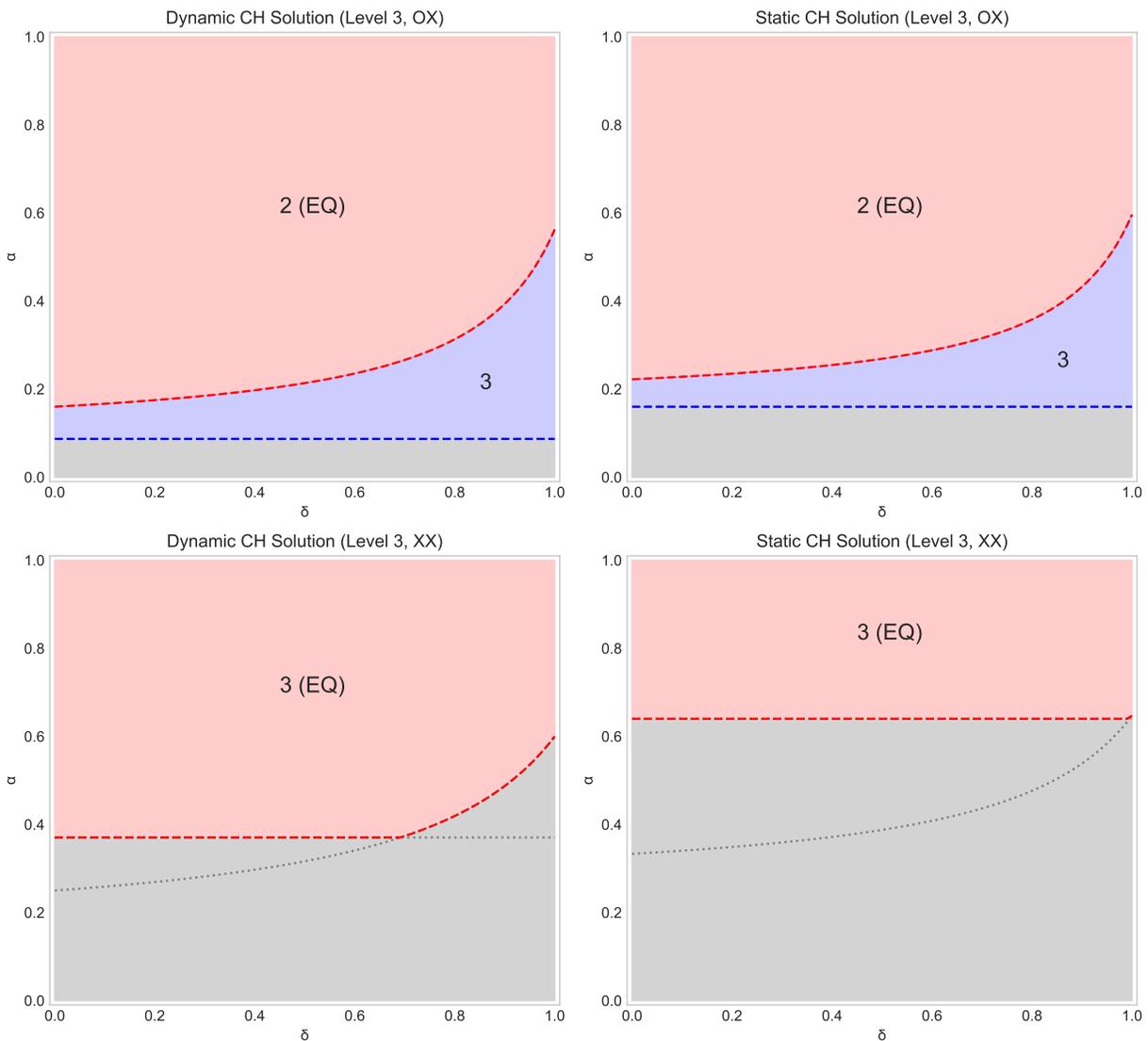}
    \caption{Dynamic CH (left) and static CH (right) solutions for level 3 players.}
    \label{fig:dynamic_CH_3_person}
\end{figure}

Similar to the illustration for two-person games, 
we assume $p=0.5$ and the level of distribution follows Poisson(1.5).
Therefore, the set of three-person three-period dirty faces games is the unit square on 
the $(\delta, \alpha)$-plane.
The dynamic and static CH solutions are plotted in the left 
and right columns, respectively.

Focusing on the top row, we can compare the predictions of two solutions when
observing one dirty face ($x_{-i}=OX$).
In this case, players tend to behave closer to the equilibrium in the 
dynamic CH solution than the static CH solution, suggesting that players are closer to 
the equilibrium in the extensive form representation than 
in the strategic form representation. 
Besides, for both solutions, the boundary of the red area does not converge to one when
$\delta$ is close to one, suggesting that players tend to stop searching earlier 
even if the payoff is not discounted. 
The intuition is that when there is an addition player,
it is less likely that the game can proceed to the next period 
because others could be level 0 players and end the 
game randomly.

When observing two dirty faces, we know 
the earliest period that players could possibly tell 
their face type is period 3. 
Although this is the last period of the game,
we can observe that the behavior still depends on the discount factor $\delta$.
This is because players can tell their face type in period 
3 only if other players would choose $D$ in period 2 
when observing only one dirty face, which depends on $\delta$.
In addition, from the bottom row of the figure, we can observe
a sharp difference between the dynamic and static CH solutions when $x_{-i}=XX$.
Our theory predicts the set of games where level 3 players would choose $D$ 
when observing $x_{-i}=XX$ in the extensive form is much larger than in the 
strategic form.\footnote{
Specifically, level 3 players would choose $D$ in period 3 when 
observing two dirty faces if and only if level 2 players choose $D$
when observing one dirty face, and choosing $D$ at period 3 yields a 
non-negative expected payoff. Therefore, the 
dynamic CH predicts level 3 players would choose $D$ if and only if
\begin{align*}
    \alpha \geq \max\left\{ \frac{16-7\delta}{64-49\delta}, \;\;
    \left(\frac{14}{23}\right)^2 \right\}.
\end{align*}
Similarly, the static CH predicts players would choose 3, i.e., $D$ in period 3, 
if and only if 
\begin{align*}
    \alpha \geq \max\left\{ \frac{27-16\delta}{81-64\delta}, \;\;
    \frac{16}{25} \right\} > \max\left\{ \frac{16-7\delta}{64-49\delta}, \;\;
    \left(\frac{14}{23}\right)^2 \right\}.
\end{align*}
}

To summarize, the analysis of three-person three-period games demonstrates
how the predictions of CH solutions would change when there are more players.
Equilibrium theory predicts the optimal stopping period only depends on 
the number of observed dirty faces, not the number of players in the game.
Yet, both CH solutions predict the behavior would be affected by the number 
of players. The intuition is that when there are more players, the game 
is more likely to be randomly terminated by level 0 players, and hence
strategic players tend to be closer to the equilibrium.


\subsection{Logit-AQRE}

In the three-person three-period dirty faces game implemented by \cite{bayer2007dirty}, 
each player's strategy is defined by a nine-tuple $\{(q_t, r_t, s_t)\}_{t=1}^3$ where 
$q_t\equiv Q(D|t, OO)$, $r_t\equiv Q(D|t, OX)$, and $s_t\equiv Q(D|t,XX)$
for each $1\leq t\leq 3$.
At period 1 with observing two clean faces, players would estimate the payoff of $D$ and $U$ by
\begin{align*}
    U_{1,OO}(D) &= \alpha + \epsilon_{1,OO,D}\\
    U_{1,OO}(U) &= \delta\alpha(1-r_1)^2 q_2 + \delta^2\alpha(1-r_1)^2(1-r_2)^2(1-q_2)q_3 + 
    \epsilon_{1,OO,U},
\end{align*}
where $\epsilon_{1,OO,D}$ and $\epsilon_{1,OO,U}$ are independent random variables with a Weibull distribution 
with parameter $\lambda$. Then the logit formula suggests 
\begin{equation*}
q_1 = \frac{1}{1 + exp\left\{\lambda\left[ 
\delta\alpha(1-r_1)^2 q_2 + \delta^2\alpha(1-r_1)^2(1-r_2)^2(1-q_2)q_3- \alpha\right] \right\}}.    
\end{equation*}
Similarly, we can express $q_2$ and $q_3$ by:
\begin{align*}
q_2 &= \frac{1}{1 + exp\left\{\lambda\left[\delta^2\alpha(1-r_2)^2 q_3 - \delta\alpha\right] \right\}}\\
q_3 &= \frac{1}{1 + exp\left\{-\lambda\delta^2\alpha \right\}}.
\end{align*}
Plugging $p=2/3$, $\delta=4/5$ and $\alpha=1/4$, we can obtain that 
\begin{align}
    q_1 &= \frac{1}{1 + exp\left\{\lambda\left[ 
\frac{1}{5}(1-r_1)^2 q_2 + \frac{4}{25}(1-r_1)^2(1-r_2)^2(1-q_2)q_3- \frac{1}{4}\right] \right\}} \tag{A.6}\label{eq:three_aqre_1}\\
    q_2 &= \frac{1}{1 + exp\left\{\lambda\left[\frac{4}{25}(1-r_2)^2 q_3 - \frac{1}{5}\right] \right\}}\tag{A.7}\\
    q_3 &= \frac{1}{1 + exp\left\{-\frac{4}{25}\lambda \right\}}.\tag{A.8}
\end{align}

When observing at least one dirty face, players cannot tell their face type for sure.
That is, at every information set, they will form posterior beliefs about 
their face types. We use $\mu_i(X|\mathcal{I}_i)$ to denote 
player $i$'s belief (under AQRE) about having a dirty face at 
information set $\mathcal{I}_i$. Therefore, at the AQRE, we can solve for 
the system of posterior beliefs. When observing one dirty face, we can find
$\mu_{i}(X|1,OX) = p$, and 
\begin{align*}
    \pi_2 \equiv \mu_i(X|2,OX) &= \frac{1}{1 + \left[\frac{1-p}{p}\right]\left[
    \frac{1-q_1}{1-s_1} \right]} \\
    \pi_3 \equiv \mu_i(X|3,OX) & = \frac{1}{1 + \left(\frac{1-\pi_2}{\pi_2} \right)
    \left(\frac{1-q_2}{1-s_2} \right)} 
    = \frac{1}{1 + \left[\frac{1-p}{p}\right]\left[
    \frac{(1-q_1)(1-q_2)}{(1-s_1)(1-s_2)} \right]}.
\end{align*}
Similarly, when observing two dirty faces, 
the beliefs would be $\mu_{i}(X|1,XX) = p$, and
\begin{align*}
    \nu_2\equiv \mu_i(X|2,XX) &= \frac{1}{1 + \left[\frac{1-p}{p}\right]\left[
    \frac{1-r_1}{1-s_1} \right]^2} \\
    \nu_3\equiv \mu_i(X|3,XX) &= \frac{1}{1 + \left(\frac{1-\nu_2}{\nu_2} \right)
    \left(\frac{1-r_2}{1-s_2} \right)^2}
    =\frac{1}{1 + \left[\frac{1-p}{p}\right]\left[
    \frac{(1-r_1)(1-r_2)}{(1-s_1)(1-s_2)} \right]^2}.
\end{align*}
Hence, when observing one dirty face, the expected payoff to choose $D$ at period 3 is
$$\delta^2[\alpha\pi_3 - (1-\pi_3)] = \delta^2[(1+\alpha)\pi_3 - 1].$$
At period 2, the expected payoff of choosing $D$ is 
$$\delta[\alpha\pi_2 - (1-\pi_2)] = \delta[(1+\alpha)\pi_2 - 1],$$
while the expected payoff of choosing $U$ is
\begin{align*}
&\underbrace{\left[\pi_2(1-r_2)(1-s_2) + (1-\pi_2)(1-r_2)(1-q_2)\right]}_{\mbox{prob.
to reach period 3}}r_3\delta^2[(1+\alpha)\pi_3 - 1] \\
=& (1-r_2)r_3\left[\pi_2(1-s_2) + (1-\pi_2)(1-q_2)\right]\delta^2[(1+\alpha)\pi_3 - 1]\equiv B.
\end{align*}
Similarly, at period 1, the expected payoff of choosing $D$ is 
$p\alpha - (1-p)$, and the expected payoff of choosing $U$ is
\begin{align*}
\underbrace{\left[p(1-r_1)(1-s_1) + (1-p)(1-r_1)(1-q_1)\right]}_{\mbox{prob.
to reach period 2}}\left\{r_2 \delta[(1+\alpha)\pi_2 - 1] + (1-r_2)B \right\}\equiv A.
\end{align*}
Plugging $p=2/3$, $\delta=4/5$ and $\alpha=1/4$, we can obtain that
\begin{align*}
    A &= (1-r_1)\left\{r_2\left[\frac{2}{15}(1-s_1) - \frac{4}{15}(1-q_1) \right]+
    (1-r_2)\left[\frac{2}{3}(1-s_1) + \frac{1}{3}(1-q_1) \right]B\right\},\\
    B &= (1-r_2)\left[\frac{4}{25}\pi_2(1-s_2) - \frac{16}{25}(1-\pi_2)(1-q_2) \right]r_3,
\end{align*}
and hence the choice probabilities $(r_1, r_2, r_3)$ can be expressed as 
\begin{align*}
r_1 &= \frac{1}{1 + exp\left\{\lambda\left[A  + \frac{1}{6}\right] \right\}} \tag{A.9}\\
r_2 &= \frac{1}{1 + exp\left\{\lambda\left[B + \frac{4}{5} - \pi_2 \right] \right\}} \tag{A.10}\\
r_3 &= \frac{1}{1 + exp\left\{\lambda\left[\frac{16}{25} - \frac{4}{5}\pi_3 \right] \right\}} \tag{A.11}.
\end{align*}

Finally, when observing two dirty faces, the expected payoff to choose $D$ at period 3 is
$$\delta^2[\alpha\nu_3 - (1-\nu_3)] = \delta^2[(1+\alpha)\nu_3 - 1].$$
At period 2, the expected payoff of choosing $D$ is 
$$\delta[\alpha\nu_2 - (1-\nu_2)] = \delta[(1+\alpha)\nu_2 - 1],$$
while the expected payoff of choosing $U$ is 
\begin{align*}
\underbrace{\left[\nu_2(1-s_2)^2 + (1-\nu_2)(1-r_2)^2\right]}_{\mbox{prob.
to reach period 3}}s_3\delta^2[(1+\alpha)\nu_3 - 1]\equiv D.
\end{align*}
At period 1, the expected payoff of choosing $D$ is $p\alpha - (1-p)$, and 
the expected payoff of choosing $U$ is 
\begin{align*}
\underbrace{\left[p(1-s_1)^2 + (1-p)(1-r_1)^2\right]}_{\mbox{prob.
to reach period 2}}\left\{s_2 \delta[(1+\alpha)\nu_2 - 1] + (1-s_2)D \right\}\equiv C.
\end{align*}
Plugging $p=2/3$, $\delta=4/5$ and $\alpha=1/4$, we can obtain that 
\begin{align*}
    C &= s_2\left[\frac{2}{15}(1-s_1)^2 - \frac{4}{15}(1-r_1)^2 \right] +
    (1-s_2)\left[\frac{2}{3}(1-s_1)^2 + \frac{1}{3}(1-r_1)^2 \right]D,\\
    D &= s_3\left[\frac{4}{25}\nu_2(1-s_2)^2 - \frac{16}{25}(1-\nu_2)(1-r_2)^2 \right],
\end{align*}
and therefore the choice probabilities $(s_1, s_2, s_3)$ can be expressed as
\begin{align*}
s_1 &= \frac{1}{1 + exp\left\{\lambda\left[C  + \frac{1}{6}\right] \right\}} \tag{A.12}\\
s_2 &= \frac{1}{1 + exp\left\{\lambda\left[D + \frac{4}{5} - \nu_2 \right] \right\}}\tag{A.13}\\
s_3 &= \frac{1}{1 + exp\left\{\lambda\left[\frac{16}{25} - \frac{4}{5}\nu_3 \right] \right\}}.\tag{A.14}
\label{eq:three_aqre_9}
\end{align*}

Given each $\lambda$, the system of nine equations (from (\ref{eq:three_aqre_1}) to 
(\ref{eq:three_aqre_9})) with nine unknowns ($q_t, r_t, s_t$ where $t\in \{ 1,2,3\}$) 
can be solved uniquely.

\newpage
\section{Robustness of Estimation Results}
\label{sec:robustness_appendix}

To see whether players will learn from repeated play, we separate the data by the 
first and the last seven rounds and compare the results of both subsamples. Table 
\ref{tab:two_person_estimation_early} to Table \ref{tab:pooled_estimation_early}
report the results for round 1 to round 7 data, and Table 
\ref{tab:two_person_estimation_late} to Table 
\ref{tab:pooled_estimation_late} show the results for round 8 to round 14 data.

\begin{table}[htbp!]
\renewcommand\thetable{A.1}
\centering
\begin{threeparttable}

\caption{Estimation Results for Two-Person Dirty Faces Games (Round 1 to 7)}
\label{tab:two_person_estimation_early}
\begin{tabular}{cccccccc}
\hline
 & $(t,x_{-i})$ & $N$ & $\sigma_i^*(t,x_{-i})$ & $\hat{\sigma}_i(t,x_{-i})$ & \begin{tabular}[c]{@{}c@{}}Dynamic\\ CH\end{tabular} & \begin{tabular}[c]{@{}c@{}}Static\\ CH\end{tabular} & AQRE \\ \hline
$\sigma_i(t,x_{-i})$ & $(1,O)$ & \phantom{1}68 & 1.000 & 0.926 & 0.846 & 0.773 & 0.709 \\
 & $(2,O)$ & \phantom{11}4 & 1.000 & 0.750 & 0.500 & 0.500 & 0.815 \\
 & $(1,X)$ & 194 & 0.000 & 0.211 & 0.154 & 0.114 & 0.204 \\
 & $(2,X)$ & \phantom{1}84 & 1.000 & 0.560 & 0.480 & 0.458 & 0.584 \\ \hline
Parameters & $\tau$ & &  &  & 1.178 & 1.075 & --- \\
 & S.E. & & &  & (0.122) & (0.128) & --- \\
 & $\lambda$ & &  &  & --- & --- & 7.426 \\
 & S.E. & & &  & --- & --- & (0.691) \\ \hline
Fitness & LL & & &  & -183.64\phantom{-} & -193.41\phantom{-} & -187.92\phantom{-} \\
 & AIC & & &  & 369.28 & 388.82 & 377.85 \\
 & BIC & & &  & 373.13 & 392.68 & 381.71 \\ \hline
Vuong Test & &  &  &  &  & 4.147 & 1.184 \\
p-value & & &  &  &  & $<0.001\phantom{<}$ & 0.236 \\ \hline
\end{tabular}
\begin{tablenotes}
\footnotesize
\item Note: There are 147 games (rounds $\times$ groups) in total.
\end{tablenotes}

\end{threeparttable}

\end{table}

\begin{table}[htbp!]
\renewcommand\thetable{A.2}
\centering
\begin{threeparttable}

\caption{Estimation Results for Three-Person Dirty Faces Games (Round 1 to 7)}
\label{tab:three_person_estimation_early}
\begin{tabular}{cccccccc}
\hline
 & $(t,x_{-i})$ & $N$ & $\sigma^*_i(t,x_{-i})$ & $ \hat{\sigma}_i(t,x_{-i})$ & \begin{tabular}[c]{@{}c@{}}Dynamic\\ CH\end{tabular} & \begin{tabular}[c]{@{}c@{}}Static\\ CH\end{tabular} & AQRE \\ \hline
$\sigma_i(t,x_{-i})$ & $(1,OO)$ & \phantom{1}27 & 1.000 & 0.889 & 0.660 & 0.361 & 0.704 \\
 & $(2,OO)$ & \phantom{14}2 & 1.000 & 0.500 & 0.500 & 0.333 & 0.699 \\
 & $(3,OO)$ & \phantom{14}0 & 1.000 & --- & 0.500 & 0.500 & 0.686 \\
 & $(1,OX)$ & 142 & 0.000 & 0.437 & 0.340 & 0.213 & 0.300 \\
 & $(2,OX)$ & \phantom{2}22 & 1.000 & 0.727 & 0.257 & 0.271 & 0.503 \\
 & $(3,OX)$ & \phantom{20}2 & 1.000 & 0.500 & 0.291 & 0.371 & 0.606 \\
 & $(1,XX)$ & 155 & 0.000 & 0.245 & 0.340 & 0.213 & 0.333 \\
 & $(2,XX)$ & \phantom{2}63 & 0.000 & 0.175 & 0.257 & 0.271 & 0.324 \\
 & $(3,XX)$ & \phantom{1}28 & 1.000 & 0.179 & 0.173 & 0.371 & 0.471 \\ \hline
Parameters & $\tau$ & & &  & 0.386 & 0.160 & --- \\
 & S.E. & & &  & (0.063) & (0.057) & --- \\
 & $\lambda$ & & &  & --- & --- & 4.884 \\
 & S.E. & & &  & --- & --- & $(0.569)$ \\ \hline
Fitness & LL &  & &  & -272.95\phantom{-} & -299.80\phantom{-} & -273.76\phantom{-} \\
 & AIC & &  &  & 547.90 & 601.61 & 549.51 \\
 & BIC & & &  & 551.99 & 605.70 & 553.60 \\ \hline
Vuong Test &  & &  &  &  & 4.102 & 0.174 \\
p-value & &  &  &  &  & $<0.001\phantom{<}$ & 0.862 \\ \hline
\end{tabular}
\begin{tablenotes}
\footnotesize
\item Note: There are 112 games (rounds $\times$ groups) in total.
\end{tablenotes}

\end{threeparttable}

\end{table}

\begin{table}[htbp!]
\renewcommand\thetable{A.3}
\centering
\begin{threeparttable}

\caption{Results for Pooled Data (Round 1 to 7)}
\label{tab:pooled_estimation_early}
\begin{tabular}{ccccc}
\hline
 &  & \begin{tabular}[c]{@{}c@{}}Dynamic\\ CH\end{tabular} & \begin{tabular}[c]{@{}c@{}}Static\\ CH\end{tabular} & AQRE \\ \hline
Parameters & $\tau$ & 1.000 & 0.284 & --- \\
 & S.E. & (0.078) & (0.048) & --- \\
 & $\lambda$ & --- & --- & 5.910 \\
 & S.E. & --- & --- & $(0.424)$ \\ \hline
Fitness & LL & -469.01\phantom{-} & -519.36 & -465.83\phantom{-} \\
 & AIC & 940.02 & 1040.72 & 933.67 \\
 & BIC & 944.69 & 1045.39 & 938.34 \\ \hline
Vuong Test &  &  & 6.134 & -0.367\phantom{-} \\
p-value &  &  & $<0.001\phantom{<}$ & 0.714 \\ \hline
\end{tabular}
\begin{tablenotes}
\footnotesize
\item Note: There are 259 games (rounds $\times$ groups) in total.
\end{tablenotes}

\end{threeparttable}

\end{table}

\begin{table}[htbp!]
\renewcommand\thetable{A.4}
\centering
\begin{threeparttable}

\caption{Estimation Results for Two-Person Dirty Faces Games (Round 8 to 14)}
\label{tab:two_person_estimation_late}
\begin{tabular}{cccccccc}
\hline
 & $(t,x_{-i})$ & $N$ & $\sigma_i^*(t,x_{-i})$ & $\hat{\sigma}_i(t,x_{-i})$ & \begin{tabular}[c]{@{}c@{}}Dynamic\\ CH\end{tabular} & \begin{tabular}[c]{@{}c@{}}Static\\ CH\end{tabular} & AQRE \\ \hline
$\sigma_i(t,x_{-i})$ & $(1,O)$ & \phantom{1}55 & 1.000 & 0.964 & 0.873 & 0.810 & 0.713 \\
 & $(2,O)$ & \phantom{11}2 & 1.000 & 0.000 & 0.500 & 0.500 & 0.829 \\
 & $(1,X)$ & 197 & 0.000 & 0.208 & 0.127 & 0.095 & 0.187 \\
 & $(2,X)$ & \phantom{1}86 & 1.000 & 0.674 & 0.528 & 0.500 & 0.597 \\ \hline
Parameters & $\tau$ & &  &  & 1.369 & 1.257 & --- \\
 & S.E. & & &  & (0.132) & (0.140) & --- \\
 & $\lambda$ & &  &  & --- & --- & 7.903 \\
 & S.E. & & &  & --- & --- & (0.703) \\ \hline
Fitness & LL & & &  & -176.54\phantom{-} & -187.58\phantom{-} & -180.34\phantom{-} \\
 & AIC & & &  & 355.09 & 377.17 & 362.67 \\
 & BIC & & &  & 358.91 & 381.00 & 366.50 \\ \hline
Vuong Test & &  &  &  &  & 5.142 & 1.002 \\
p-value & & &  &  &  & $<0.001\phantom{<}$ & 0.316 \\ \hline
\end{tabular}
\begin{tablenotes}
\footnotesize
\item Note: There are 147 games (rounds $\times$ groups) in total.
\end{tablenotes}

\end{threeparttable}

\end{table}

\begin{table}[htbp!]
\renewcommand\thetable{A.5}
\centering
\begin{threeparttable}

\caption{Estimation Results for Three-Person Dirty Faces Games (Round 8 to 14)}
\label{tab:three_person_estimation_late}
\begin{tabular}{cccccccc}
\hline
 & $(t,x_{-i})$ & $N$ & $\sigma^*_i(t,x_{-i})$ & $ \hat{\sigma}_i(t,x_{-i})$ & \begin{tabular}[c]{@{}c@{}}Dynamic\\ CH\end{tabular} & \begin{tabular}[c]{@{}c@{}}Static\\ CH\end{tabular} & AQRE \\ \hline
$\sigma_i(t,x_{-i})$ & $(1,OO)$ & \phantom{1}21 & 1.000 & 0.952 & 0.649 & 0.334 & 0.722 \\
 & $(2,OO)$ & \phantom{14}0 & 1.000 & --- & 0.500 & 0.333 & 0.729 \\
 & $(3,OO)$ & \phantom{14}0 & 1.000 & --- & 0.500 & 0.500 & 0.713 \\
 & $(1,OX)$ & 138 & 0.000 & 0.348 & 0.351 & 0.222 & 0.267 \\
 & $(2,OX)$ & \phantom{2}38 & 1.000 & 0.500 & 0.270 & 0.285 & 0.518 \\
 & $(3,OX)$ & \phantom{20}8 & 1.000 & 0.125 & 0.290 & 0.399 & 0.643 \\
 & $(1,XX)$ & 165 & 0.000 & 0.236 & 0.351 & 0.222 & 0.307 \\
 & $(2,XX)$ & \phantom{2}82 & 0.000 & 0.244 & 0.270 & 0.285 & 0.290 \\
 & $(3,XX)$ & \phantom{1}28 & 1.000 & 0.536 & 0.185 & 0.399 & 0.495 \\ \hline
Parameters & $\tau$ & & &  & 0.355 & 0.119 & --- \\
 & S.E. & & &  & (0.059) & (0.054) & --- \\
 & $\lambda$ & & &  & --- & --- & 5.689 \\
 & S.E. & & &  & --- & --- & $(0.571)$ \\ \hline
Fitness & LL &  & &  & -302.29\phantom{-} & -308.51\phantom{-} & -290.80\phantom{-} \\
 & AIC & &  &  & 606.58 & 619.01 & 583.60 \\
 & BIC & & &  & 610.75 & 623.18 & 587.77 \\ \hline
Vuong Test &  & &  &  &  & 0.926 & -1.831\phantom{-} \\
p-value & &  &  &  &  & 0.354 & 0.067 \\ \hline
\end{tabular}
\begin{tablenotes}
\footnotesize
\item Note: There are 112 games (rounds $\times$ groups) in total.
\end{tablenotes}

\end{threeparttable}

\end{table}

\begin{table}[htbp!]
\renewcommand\thetable{A.6}
\centering
\begin{threeparttable}

\caption{Results for Pooled Data (Round 8 to 14)}
\label{tab:pooled_estimation_late}
\begin{tabular}{ccccc}
\hline
 &  & \begin{tabular}[c]{@{}c@{}}Dynamic\\ CH\end{tabular} & \begin{tabular}[c]{@{}c@{}}Static\\ CH\end{tabular} & AQRE \\ \hline
Parameters & $\tau$ & 1.030 & 0.198 & --- \\
 & S.E. & (0.094) & (0.045) & --- \\
 & $\lambda$ & --- & --- & 6.572 \\
 & S.E. & --- & --- & $(0.428)$ \\ \hline
Fitness & LL & -487.73\phantom{-} & -526.90 & -474.21\phantom{-} \\
 & AIC & 977.46 & 1055.80 & 950.43 \\
 & BIC & 982.17 & 1060.51 & 955.14 \\ \hline
Vuong Test &  &  & 4.332 & -1.550\phantom{-} \\
p-value &  &  & $<0.001\phantom{<}$ & 0.121 \\ \hline
\end{tabular}
\begin{tablenotes}
\footnotesize
\item Note: There are 259 games (rounds $\times$ groups) in total.
\end{tablenotes}

\end{threeparttable}

\end{table}

\end{document}